\def\dOi{11(1:20)2015}
\subjclass{D.2.4 [Software/Program Verification]; 
  F.3.1 [Specifying and Verifying and Reasoning about Programs]}
\algrenewcommand\algorithmicindent{1.0em}
\newcommand{\COMMENT}[1]{}
\newcommand{\defeq}{\;\stackrel{\rm def}{=}\;}
\newcommand{\NULL}{\ensuremath{\mathtt{NULL}}}
\newcommand{\enq}{\ensuremath{\mathtt{enq}}}
\newcommand{\deq}{\ensuremath{\mathtt{deq}}}
\newcommand{\false}{\mathsf{false}}
\newcommand{\true}{\mathsf{true}}
\newcommand{\assume}{\mathsf{assume}}
\newcommand{\Prg}{\mathit{Prg}}
\newcommand{\ndchoice}{\sqcup}
\newcommand{\bigndchoice}{\bigsqcup}
\newcommand{\remPending}[1]{\ensuremath{\mathit{remPending}({#1})}}
\newcommand{\Match}{\ensuremath{\mathit{Match}}}
\newcommand{\Compl}[1]{\ensuremath{\mathit{Compl}({#1})}}
\newcommand{\Deq}[1]{\ensuremath{\mathit{Deq}({#1})}}
\newcommand{\Enq}[1]{\ensuremath{\mathit{Enq}({#1})}}
\newcommand{\Before}[2]{\ensuremath{\mathit{Before}({#1},{#2})}}
\newcommand{\After}[2]{\ensuremath{\mathit{After}({#1},{#2})}}
\newcommand{\Val}[2]{\ensuremath{\mathit{Val}_{#1}({#2})}}
\newcommand{\uid}{\ensuremath{\mathit{uid}}}
\newcommand{\ltsq}{\ensuremath{{\mathsf{LTS}}_Q}}
\newcommand{\qtrans}[1]{\ensuremath{\xrightarrow{#1}}}
\newcommand{\compl}{\ensuremath{\hat}}
\newcommand{\mus}{\mu_{\rm seq}}
\newcommand{\permeq}{\sim_{\rm perm}}
\newcommand{\seqx}[2]{\ensuremath{#1\langle #2\rangle}}
\newcommand{\enqset}{\ensuremath{\mathtt{Enq}}}
\newcommand{\deqset}{\ensuremath{\mathtt{Deq}}}
\newcommand{\qbehavset}{\ensuremath{\mathcal{Q}}}
\newcommand{\qbehavsetx}[1]{\ensuremath{\qbehavset_{#1}}}
\newcommand{\obsequiv}{\ensuremath{\equiv_{\mathit{obs}}}}
\newcommand{\Bad}[2]{\ensuremath{\mathit{Bad}(#1,#2)}}
\newcommand{\Badx}[3]{\ensuremath{\mathit{Bad}_{#3}(#1,#2)}}
\newcommand{\dhat}{\ensuremath{d_{\bot}}}
\newcommand{\Dhat}{\ensuremath{D_{\dhat}}}
\newcommand{\Ehat}{\ensuremath{E_{\dhat}}}
\newcommand\inpar[1]{\left(\begin{array}{@{}l@{}}#1\end{array}\right)}
\newcommand\inatom[1]{\left\langle\begin{array}{@{}l@{}}#1\end{array}\right\rangle}
\newcommand\mylabel[1]{\label{#1}}
\newcommand{\VFresh}{\ensuremath{\mathsf{VFresh}}}
\newcommand{\VRepet}{\ensuremath{\mathsf{VRepet}}}
\newcommand{\VOrd}{\ensuremath{\mathsf{VOrd}}}
\newcommand{\VWit}{\ensuremath{\mathsf{VWit}}}
\newcommand{\POrd}{\ensuremath{\mathsf{POrd}}}
\newcommand{\PWit}{\ensuremath{\mathsf{PWit}}}
\begin{document}

\title[Aspect-Oriented Linearizability Proofs]{Aspect-Oriented Linearizability Proofs\rsuper*}

\author[S.~Chakraborty]{Soham Chakraborty\rsuper a}
\address{{\lsuper{a,d}}MPI-SWS, Kaiserslautern and Saarbr\"ucken, Germany}
\email{\{sohachak,viktor\}@mpi-sws.org}

\author[T.~A.~Henzinger]{Thomas A. Henzinger\rsuper b}
\address{{\lsuper b}IST Austria, Klosterneuburg, Austria}
\email{tah@ist.ac.at}

\author[A.~Sezgin]{Ali Sezgin\rsuper c}
\address{{\lsuper c}University of Cambridge Computer Laboratory, Cambridge, U.K.}
\email{as2418@cam.ac.uk}

\author[V.~Vafeiadis]{Viktor Vafeiadis\rsuper d}
\address{\vspace{-18 pt}}

\keywords{Linearizability, Queue, Verification, Herlihy-Wing Queue}
\titlecomment{{\lsuper*}A preliminary version of this article appeared at CONCUR'13.}

\begin{abstract}
Linearizability of concurrent data structures is usually proved by monolithic
simulation arguments relying on the identification of the so-called linearization points.
Regrettably, such proofs, whether manual or automatic, are often complicated
and scale poorly to advanced non-blocking concurrency patterns, such as helping
and optimistic updates.

In response, we propose a more modular way of checking linearizability of
concurrent queue algorithms that does not involve identifying linearization
points.  We reduce the task of proving linearizability
with respect to the queue specification to establishing four basic
properties, each of which can be proved independently by simpler
arguments.  As a demonstration of our approach, we verify the Herlihy and Wing
queue, an algorithm that is challenging to verify by a simulation
proof.  

\end{abstract}

\maketitle


\section{Introduction}
\label{sec:introduction}

Linearizability~\cite{HW1990} is widely accepted as the standard correctness
requirement for concurrent data structure implementations.
It amounts to showing that each method provides the illusion that it {\em executes} atomically at some point after its call and before its return.
Typically, what each method is expected to do (atomically) is given in terms of a sequential specification.
For instance, an unbounded queue must support the following two methods: 
\emph{enqueue}, which extends the queue by appending one element to its end, 
and \emph{dequeue}, which removes and returns the first element of the queue.

The standard way to prove that a concurrent queue implementation is linearizable is
to show that it is simulated by the idealised atomic queue implementation, 
which we take to be the specification of the queue.
For example, using forward simulation~\cite{LV1995}, we have to define 
a relation $S$ relating the state of the implementation to the state of the specification,
and to show that 
(1) the initial states of the implementation and the specification are related by $S$, and 
(2) starting from $S$-related implementation and specification states $(\sigma_{\sf impl}, \sigma_{\sf spec})$, 
if the implementation takes a step and goes to state $\sigma'_{\sf impl}$, 
the specification can also take a matching step (or stutter) and 
result in some state $\sigma'_{\sf spec}$ that is $S$-related to $\sigma'_{\sf impl}$.
The most important part of these proofs is to decide which of the implementation steps
are matched by actual steps of the specification code and which by stuttering moves.
For each method of the implementation, the step during its execution that in the 
simulation proof is matched by the atomic step of the corresponding method of the 
specification is known as the \emph{linearization point}.
A well-established approach (e.g.~\cite{AHH+2013,ARR+2007,CDG2005,DSW2011,DM2009,LCL+2009,SWD2012,Vaf2009,Vaf2010})
is therefore to identify these linearization points, 
which when performed by the implementation change the state of the specification, 
and to then construct a suitable forward or backward simulation.

While for a number of concurrent algorithms, spotting the
linearization points may be straightforward (and has even been automated to some extent~\cite{Vaf2010}), 
in general specifying the linearization points can be very difficult. 
For instance, in implementations using a helping mechanism, they can lie in code not syntactically belonging to the thread and operation in question, and can even depend on future behavior.
There are numerous examples in the literature, where this is the case;
to mention only a few concurrent queues: 
the Herlihy and Wing queue~\cite{HW1990},
the optimistic queue~\cite{LS2004}, 
the elimination queue~\cite{MNS+2005}, 
the baskets queue~\cite{HSS2007}, 
the flat-combining queue~\cite{HIS+2010}.

\subsection*{The Herlihy and Wing Queue}\hfill
\label{subsec:HWQ}
\begin{figure}[h!]
\centering\begin{tabular}{@{}l@{~~}|@{~~}l@{}}
\begin{minipage}[t]{.37\textwidth}
\begin{algorithmic}[1]
 \State $\textbf{var}\ q.back : int \gets 0$
 \State $\textbf{var}\ q.items : \textbf{array of } val $
 \Statex \qquad$\gets \{ \NULL, \NULL, \ldots \}$
 \Statex
 \Procedure{$\enq$}{$x : val$}
  \State $\langle{i \gets \mathtt{INC}(q.back)}\rangle$
  \hfill $\triangleright\enspace\! E_1$
   \State $\left\langle{q.items[i]\gets x}\right\rangle$
   \hfill $\triangleright\enspace\! E_2$
 \EndProcedure
\algstore{mybreak}
\end{algorithmic}
\end{minipage}
&
\begin{minipage}[t]{.55\textwidth}
\vskip 0em
\begin{algorithmic}[1]
\algrestore{mybreak}
 \Procedure{$\deq$}{{}} $: val$
  \While{true}
   \State $\langle{range\gets q.back-1}\rangle$ 
   \hfill $\triangleright\enspace\! D_1$
  \For{$i=0$ \textbf{to} $range$}
    \State $\left\langle{x\gets \mathtt{SWAP}(q.items[i],\NULL)}\right\rangle$
   \hfill $\triangleright\enspace\! D_2$
   \If{$x \neq\NULL$} \Return $x$
   \EndIf
  \EndFor
 \EndWhile
 \EndProcedure
\end{algorithmic}
\end{minipage}
\end{tabular}
\caption{Herlihy and Wing queue~\cite{HW1990}.}\label{fig:hw-queue}
\end{figure}

In this paper, we focus on the Herlihy and Wing queue~\cite{HW1990} (henceforth, HW queue for short)
that illustrates nicely the difficulties encountered when defining a simulation relation based on linearization points.
We recall the code of the queue as given in \cite{HW1990} in Figure~\ref{fig:hw-queue}.
The queue is represented as a pre-allocated unbounded array, $q.items$, initially filled with $\NULL$s, 
and a marker, $q.back$, pointing to the end of the used part of the array.
Enqueuing an element is done in two steps: the marker to the end of the array is incremented ($E_1$), thereby reserving a slot for storing the element, and then the element is stored at the reserved slot ($E_2$).
Dequeue is more complex: it reads the marker ($D_1$), and then searches from the beginning of the array up to the marker to see if it contains a non-$\NULL$ element.
It removes and returns the first such element it finds ($D_2$). 
If no element is found, dequeue starts again afresh.
Each of the four statements surrounded by $\langle\rangle$ brackets and annotated 
by $E_i$ or $D_i$ for $i=1,2$ is assumed to execute atomically.

We now show that verifying this algorithm by finding its linearization points is difficult.
Consider the following execution fragment, where $\cdot$ denotes context switches between concurrent threads,
\[
(t:E_1)\cdot (u:E_1) \cdot (v:D_1,D_2)\cdot (u:E_2) \cdot (t:E_2) \cdot (w:D_1)
\]
which have threads $t$ and $u$ executing enqueue instances, $v$ and $w$ executing dequeue instances. 
At the end of this fragment, $v$ is ready to dequeue the element enqueued by $u$, and $w$ is ready to dequeue the element enqueued by $t$.
In order to define a simulation relation from this interleaving sequence to a valid sequential queue behavior, where operations happen in isolation, we have to choose the linearization points for the two completed enqueue instances. 
The difficulty lies in the fact that no matter which statements are chosen as the linearization points for the two enqueue instances, there is always an extension to the fragment inconsistent with the particular choice of linearization points. 
For instance, if we choose $(t:E_1)$ as the linearization point for $t$, then the extension 
\[
(v:D_2,\textbf{return}) \cdot (z:D_1,D_2,\textbf{return})
\]
requiring $u$'s element be enqueued before that of $t$'s, will be inconsistent.
If, on the other hand, we choose any statement which makes $u$ linearize before $t$, then the extension
\[
(w:D_2,\textbf{return}) \cdot (z:D_1,D_2,D_2,\textbf{return})
\]
requiring the reverse order of enqueueing will be inconsistent. 
This shows not only that finding the correct linearization points can be challenging,
but also that the simulation proofs will require to reason about the entire state of the
system,
as the local state of one thread can affect the linearization of another.

\subsection*{Our Contribution}

In our experience, this and similar tricks for reducing synchronization among threads so as to achieve better performance, make concurrent algorithms extremely difficult to reason about when one is constrained to establishing a simulation relation.
However, if two methods overlap in time, then the only thing enforced by linearizability is that their effects are observed in {\em some} and same order by all threads.
For instance, in the example given above, the simple answer for the particular ordering between the linearization points of the enqueue instances of $t$ and $u$, is that it does not matter!
As long as enqueue instances overlap, their values can be dequeued in any order.

Building on this observation, our contribution is to simplify linearizability proofs by modularizing them.
We reduce the task of proving linearizability to establishing four relatively
simple properties, each of which may be reasoned about independently.
In (loose) analogy to aspect-oriented programming, 
we are proposing ``aspect-oriented'' linearizability proofs for concurrent queues,
where each of these four properties will be proved independently.

So what are these properties?  A correct (i.e., linearizable)
concurrent queue:
\begin{enumerate}
\item must not allow dequeuing an element that was never enqueued;
\item must not allow the same element to be dequeued twice;
\item must not allow elements to be dequeued out of order; and
\item must correctly report whether the queue is empty or not.%
\footnote{The HW queue trivially satisfies the fourth property as it never reports
that the queue is empty.}
\end{enumerate}

Although similar properties were already mentioned by Herlihy and Wing~\cite{HW1990}, 
we for the first time prove that suitably formalized versions of these four properties
are not only necessary, but also sufficient, conditions for linearizability 
with respect to the queue specification, at least for what we call 
\emph{purely-blocking} implementations.
This is a rather weak requirement satisfied by all non-blocking implementations,
as well as by possibly blocking implementations, such as HW $\deq()$ method, whose
blocking executions do not modify the global state.

\subsection*{Paper Outline}

The rest of the paper is structured as follows:
Section~\ref{sec:back} recalls the definition of linearizability in terms of execution histories.
Section~\ref{sec:seqwit} develops an alternative characterization of legal queue behaviors, which is useful for our proofs.
Section~\ref{sec:conditions} formalizes the aforementioned four properties, and proves that they are necessary and sufficient conditions for proving linearizability of queues.
Section~\ref{sec:herlihy-wing} returns to the HW queue example and presents a detailed manual proof of its correctness by checking each of the properties separately.
Section~\ref{sec:checking} shows how the checking of these four properties can be automated by reducing them to non-termination of certain parametric programs.
Section~\ref{sec:cave} explains how we adapted \textsc{Cave}~\cite{Vaf2010} to prove these parametric programs non-terminating for the case of the HW queue.
Finally, in Section~\ref{sec:related-work} we discuss related work, and in Section~\ref{sec:conclusion} we conclude.

\subsection*{Differences from the Conference Paper}

This article is an extended version of our CONCUR'13 conference paper~\cite{HSV2013},
containing all the proofs of the lemmas and theorems mentioned in the paper.
Since the conference, we have also implemented a checker for the \VRepet\ property,
and have expanded the discussion of automation in Sections~\ref{sec:checking} 
and~\ref{sec:cave} to cover the verification of \VRepet.


\section{Technical Background}
\label{sec:back}

In this section, we introduce common notations that will be used throughout the paper and recall the definition of linearizability.

For any function $f$ from $A$ to $B$ and $A'\subseteq A$, let $f(A') \defeq \{f(a) \mid a\in A'\}$.
Given two sequences $x$ and $y$, let $x\cdot y$ denote their concatenation, and let $x\permeq y$ hold if one is a permutation of the other.
We use $\seqx x i$ to refer to the $i^{th}$ element in sequence $x$,
and $\seqx x {i:j}$ to refer to the subsequence of $x$ containing all 
elements from position $i$ to $j$ inclusive.
We write $x|A$ for the subsequence of $x$ containing only
elements in the set $A$.

\subsection*{Behaviors}

A {\em data structure} $\mathcal{D}$ is a pair $(D,\Sigma_{\mathcal{D}})$, where $D$ is the {\em data domain} and $\Sigma_{\mathcal{D}}$ is the {\em method alphabet}.
An {\em event} of $\mathcal{D}$ is a quadruple $(\uid,m,d_i,d_o)$, for a unique event identifier, $\uid\in\mathbb{N}$, a method $m\in\Sigma_{\mathcal{D}}$, and data elements $d_i,d_o\in D$.
Intuitively, $(\uid,m,d_i,d_o)$ denotes the application of method $m$ with input argument $d_i$ returning the output value $d_o$.
Throughout the paper, we will assume that the $\uid$ components of events are globally unique.
A duplicate-free sequence over events of $\mathcal{D}$ is called a {\em behavior}.
The {\em semantics} of data structure $\mathcal{D}$ is a set of behaviors, called {\em legal} behaviors.

The method alphabet $\Sigma_Q$ of a queue is the set $\{\enq,\deq\}$.
We will take the data domain to be the set of natural numbers, $\mathbb{N}$, and a distinguished symbol {\NULL} not in $\mathbb{N}$.
Events are written as $\enq^\uid(x)$, short for $(\uid,\enq,x,\bot)$, and $\deq^\uid(x)$, short for $(\uid,\deq,\bot,x)$. For consiceness, we will often also omit the $\uid$ superscripts.
Events with {\enq} are called {\em enqueue} events, and those with {\deq} are called {\em dequeue} events.
We use $\enqset$ and $\deqset$ to denote all enqueue and dequeue events, respectively.

We will use a labelled transition system, $\ltsq$, to define the queue semantics.
The states of {\ltsq} are sequences over $\mathbb{N}$, the initial state is the empty sequence $\varepsilon$. 
There is a transition from $q$ to $q'$ with action $a$, written $q\qtrans{a}q'$, if
(\textit{i}) $a=\enq(x)$ and $q'=q\cdot x$, or 
(\textit{ii}) $a=\deq(x)$ and $q=x\cdot q'$, or 
(\textit{iii}) $a=\deq(\NULL)$ and $q=q'=\varepsilon$.

A {\em run} of {\ltsq} is an alternating sequence $q_0l_1q_1\ldots l_nq_n$ of states and queue events such that for all $1\leq i\leq n$, we have $q_{i-1}\qtrans{l_i}q_i$.
The trace of a run is the sequence $l_1\ldots l_n$ of the events occurring on the run.
A queue behavior $b$ is \emph{legal} iff there is a run of {\ltsq} with trace $b$.
In what follows, we will consider only legal queue behaviors, and hence usually omit {\em legal}, unless explicitly stated otherwise.
Let $\qbehavset$ denote the set of all (legal) queue behaviors.

\subsection*{Histories and Linearizability}

Each event $a=(\mathit{uid},m,d_i,d_o)$ generates two {\em actions}: the {\em invocation} of $a$, written as $inv(a)$, and the {\em response} of $a$, written as $res(a)$.
We will also use $m^\uid_i(d_i)$ and $m^\uid_r(d_o)$ to denote the invocation and the response actions, respectively. 
When a particular method $m$ does not have an input (resp., output) parameter, we will write
$m^\uid_i$ (resp., $m^\uid_r$) for the corresponding invocation (resp., response) action. 
We will also often omit the superscripts, when they are not important.

In this paper, a {\em history} of $\mathcal{D}$ is a sequence of invocation and response actions of $\mathcal{D}$. 
We will assume the existence of an implicit identifier in each history $c$ that uniquely pairs each invocation with its corresponding response action, if the latter also occurs in $c$.
A history $c$ is {\em well-formed} if every response action occurs after its associated invocation action in $c$.
We will consider only well-formed histories.
An event is {\em completed} in $c$, if both of its invocation and response actions occur in $c$.
An event is {\em pending} in $c$, if only its invocation occurs in $c$.
We define $\remPending c$ to be the sub-sequence of $c$ where all pending events have been removed.
An event $e$ precedes another event $e'$ in $c$, written $e\prec_ce'$, if the response of $e$ occurs before the invocation of $e'$ in $c$.
For event $e$, $\Before e c$ denotes the set of all events that precede $e$ in $c$. 
Similarly, $\After e c$ denotes the set of all events that are preceded by $e$ in $c$.
Formally, 
\[\Before e c \defeq \{e' \mid e'\prec_c e\} \qquad\text{and}\qquad
  \After e c  \defeq \{e' \mid e \prec_c e'\} \,. \]
A set of events $A$ is {\em closed under $\prec_c$} iff whenever $a\in A$ and $b\prec_c a$, then $b\in A$.

History $c$ is called {\em complete} if it does not have any pending events.
For a possibly incomplete history $c$, a {\em completion} of $c$, written $\compl{c}$, is a well-formed complete history such that $\compl{c}=\remPending{c\cdot c'}$ where $c'$ contains only response actions.
Let $\Compl c$ denote the set of all completions of $c$.

A history is called {\em sequential} if all invocations in it are immediately followed by their matching responses, with the possible exception of the very last action which can only be the invocation of a pending event.
We identify complete sequential histories with behaviors of $\mathcal{D}$ by mapping each consecutive pair of matching actions in the former to its event constructing the latter.
A sequential history $s$ is a {\em linearization} of a history $c$, if there exists $\compl{c}\in\Compl c$ such that $\compl{c}\permeq s$ and whenever $e\prec_{\compl{c}}e'$ we have $e\prec_se'$.

\begin{defi}[Linearizability~\cite{HW1990}]
A history $c$ is linearizable with respect to a data structure $\mathcal{D}$ if
there exists a linearization of $c$ that is a legal behavior of $\mathcal{D}$.
A set of histories $C$ is linearizable with respect to $\mathcal{D}$ if 
every $c\in C$ is linearizable with respect to $\mathcal{D}$.
\end{defi}
An {\em execution trace} is a sequence of instruction labels coupled with thread identifiers executing the instruction.
For instance, $(t:i)$ denotes the execution of instruction with the unique label $i$ by thread $t$.
An instruction label is the {\em entry} point of method $m$, written $enter(m)$, if it is the label of the first instruction of $m$.
Similarly, an instruction label is an {\em exit} point of $m$, written $exit(m)$, if it is the label of an instruction that completes the execution of $m$. 
Each execution trace $\tau$ induces a history $h(\tau)$ which is obtained by replacing each $(t:enter(m))$ with $m^{t_{uid}}_i(d_i)$, each $(t:exit(m))$ with $m^{t_{uid}}_r(d_o)$, and removing the remaining symbols.
We assume that states of an execution trace contain enough information to deduce the values of $d_i$ and $d_o$ associated with each entry and exit point. 
To illustrate this definition, consider the following execution trace from the introduction:
\[
(t:E_1)\cdot(u:E_1)\cdot(v:D_1,D_2)\cdot(u:E_2)\cdot(t:E_2)
       \cdot(w:D_1)\cdot(v:D_2,\textbf{return})\cdot(z:D_1,D_2,\textbf{return})
\,.
\]
The history corresponding to this trace is:
\[ 
\enq^t_i(x) \cdot \enq^u_i(y)\cdot \deq^v_i \cdot
\deq^w_i \cdot \deq^v_r(x) \cdot  \deq^z_i \cdot \deq^z_r(y)
\]
where we have used the thread identifiers without subscripts as unique event identifiers.
After completing the history with responses $\enq^t_r$ and $\enq^u_r$ of the pending enqueues, 
and removing the pending invocation $\deq^w_i$, the history may be linearized as follows:
\[
\enq^t_i(x) \cdot \enq^t_r \cdot \enq^u_i(y) \cdot \enq^u_r \cdot 
\deq^v_i\cdot \deq^v_r(x) \cdot \deq^z_i \cdot \deq^z_r(y)
\]
and corresponds to the (legal) behavior
$\enq^t(x) \cdot \enq^u(y) \cdot \deq^v(x) \cdot \deq^z(y)$.
An execution trace is {\em complete} if its induced history is complete.
An implementation is identified with the set of execution traces it generates. 
When clear from the context, we will refer to the induced history of an execution trace as a history of the implementation.

\section{Alternative Characterization of Legal Queue Behaviours}
\label{sec:seqwit}

We start with some terminology.
Let $c$ be a history. 
$\Enq c$ denotes the set of all enqueue events invoked (and not necessarily completed) in $c$.
Similarly, $\Deq c$ denotes the set of all dequeue events invoked in $c$.
When $c$ is a complete history, we define the value of an event $e$, written $\Val c e$, to be
the value enqueued or dequeued by that event.

We find it useful to express the semantics of queues in an alternative formulation.

\begin{defi}\mylabel{def:seqwit}
A queue behavior $b$ has a {\em sequential witness} if there is a total mapping $\mus$ from $\Deq b$ to $\Enq b\cup\{\bot\}$ such that
\begin{enumerate}[label=(\roman*)]
\item $\mus(d)=e$ implies $\Val b d = \Val b e$, 
\item $\mus(d)=\bot$ iff $\Val b d=\NULL$,
\item $\mus(d)=\mus(d')\neq\bot$ implies $d=d'$,
\item $\mus(d)=e$ implies $e\prec_b d$,
\item $e\prec_b \mus(d')$ implies $\mus^{-1}(e)\prec_b d'$,
\item $\mus(d)=\bot$ implies $|\{e\in\Enq b \mid e\prec_bd\}|=|\{d'\in\Deq b \mid d' \prec_b d\wedge\mus(d') \neq\bot\}|$.
\end{enumerate}
\end{defi}

\noindent To illustrate this definition, consider the following legal queue behavior:
\[
  b \defeq    \enq^t(x) \cdot \enq^u(y) \cdot \deq^v(x)  \cdot \deq^z(y) \cdot \deq^w(\NULL)
\]
We can pick $\mus$ such that $\mus(v)=t$, $\mus(z)=u$ and $\mus(w)=\bot$.
The constraints of Definition~\ref{def:seqwit} are satisfied because:
\begin{itemize}
\item $t\prec_b u$ implies $\mus^{-1}(t)=v \prec_b z=\mus^{-1}(u)$.
\item 
$|\{e\,{\in}\,\Enq b \mid e\prec_b w\}|=|\{t,u\}|=|\{v,z\}|=|\{d'\,{\in}\,\Deq b \mid d'\prec_b w\land \mus(d')\,{\neq}\,\bot\}|$.
\end{itemize}

To show that a behavior is legal iff it has a sequential witness, we need a number of 
auxiliary definitions and lemmas.

We say that two queue behaviors $b_1$ and $b_2$ are {\em observationally equivalent}, written $b_1 \obsequiv b_2$, 
if the sequences of enqueue events and those of dequeue events agree in both behaviors.
Formally, $b_1 \obsequiv b_2$ iff $b_1 | \enqset = b_2 | \enqset$ and $b_1 | \deqset = b_2 | \deqset$.

We define a special subset of queue behaviors, the {\em canonical} subset $\qbehavsetx c$, in which each enqueue event is immediately followed by its matching dequeue event, in case it exists.
Formally, the canonical queue behaviors are given by the following regular expression:
\[
\qbehavsetx c \stackrel{\rm def}{=} 
\left( (\deq(\NULL))^* \cdot \textstyle\sum_{x\in\mathbb{N}}\enq(x) \cdot \deq(x) \right)^* 
  \cdot (\deq(\NULL))^* \cdot \left(\textstyle\sum_{x\in\mathbb{N}}\enq(x)\right)^*
\]
A run $r$ of {\ltsq} is called {\em canonical} if the trace of $r$ is canonical.

Note that every canonical behavior is legal. Consider a canonical behavior $b \in \qbehavsetx c$ 
and split it into $b = b' \cdot \enq(v_1) \enq(v_2) \ldots \enq(v_n)$ 
such that $b'$ does not end with an $\enq$.
This behavior can be generated by the following run of the $\ltsq$.
\[
\begin{array}{l}
\left( (\epsilon\ \deq(\NULL))^*\ \epsilon\ \textstyle\sum_{x\in\mathbb{N}}\enq(x)\ x\ \deq(x) \right)^*\ 
  (\epsilon\ \deq(\NULL))^*\ \epsilon \\
  \enq(v_1) v_1 \enq(v_2) (v_1v_2) \cdots (v_1\ldots v_{n-1}) \enq(v_n) (v_1\ldots v_{n-1}v_n))
\end{array}
\]

As the following result shows, canonical queue behaviors represent all legal behaviors, up to observational equivalence.

\begin{lem}\mylabel{lem:canonical-representative}
Let $b\in\qbehavset$ be a legal queue behavior.
Then there exists a canonical behavior $b_c\in \qbehavsetx c$ such that $b \obsequiv b_c$.
\end{lem}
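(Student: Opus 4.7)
The strategy is to define $b_c$ explicitly by rearranging the events of $b$, using the FIFO discipline that its legality forces. First I would establish a matching lemma: if $e_1, \ldots, e_N$ lists the enqueue events of $b$ in order of occurrence, with values $v_1, \ldots, v_N$, and $d_1, \ldots, d_M$ lists the non-$\NULL$ dequeue events in order, then $\Val{b}{d_i} = v_i$ for every $i \le M$, and $M \le N$. This follows from a straightforward induction along the underlying $\ltsq$ run for $b$, showing that after any prefix of $b$ containing $n$ enqueues and $m$ non-$\NULL$ dequeues, the queue state equals $v_{m+1}\cdots v_n$; the $\deq(\NULL)$ transition rule forces $n = m$ at any such step, so in particular $M \le N$.

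Given this matching, I would construct $b_c$ by grouping each enqueue with its matching dequeue and distributing the $\NULL$ dequeues according to where they fall in $b$. For $0 \le i \le M$, let $k_i$ be the number of $\deq(\NULL)$ events of $b$ that appear strictly between $d_i$ and $d_{i+1}$, with the conventions that $d_0$ marks the start and $d_{M+1}$ the end of $b$. I then define
\[
  b_c \defeq (\deq(\NULL))^{k_0}\cdot\enq(v_1)\deq(v_1) \cdots (\deq(\NULL))^{k_{M-1}}\cdot\enq(v_M)\deq(v_M)\cdot(\deq(\NULL))^{k_M}\cdot\enq(v_{M+1})\cdots\enq(v_N),
\]
where each $\enq(v_i)$ and $\deq(v_i)$ reuses the uid of the corresponding $e_i$ and $d_i$ from $b$. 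Membership $b_c \in \qbehavsetx c$ is immediate from the defining regular expression, and the paper has already observed that every canonical behavior is legal.

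Finally, I would verify $b \obsequiv b_c$ by inspecting both projections: $b_c|\enqset = e_1\cdots e_N = b|\enqset$ by construction, and the dequeue projection of $b_c$ is the $d_i$'s interleaved with $\deq(\NULL)$s in precisely the pattern dictated by the $k_i$, which is $b|\deqset$. The only delicate step is the FIFO matching lemma; the rest is combinatorial bookkeeping. I do not foresee a real obstacle, but one should check that the $\deq(\NULL)$ occurrences in $b_c$ are placed at empty-queue positions (otherwise legality would fail); this is automatic in the canonical layout, since the queue starts empty and returns to empty after every $\enq(v_i)\deq(v_i)$ block.
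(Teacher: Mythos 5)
Your proof is correct, but it takes a genuinely different route from the paper's. The paper argues by induction on the length of $b$: it maintains a canonical representative $b_c\obsequiv b$, observes that obs-equivalent legal behaviors drive $\ltsq$ to the same state (so $b_c\cdot a$ is legal whenever $b\cdot a$ is), and extends $b_c$ by one event per step --- the only nontrivial case being $a=\deq(y)$, where $\deq(y)$ is spliced in immediately after the unique unmatched $\enq(y)$ in $b_c$. You instead isolate the FIFO content of legality as an explicit invariant of the $\ltsq$ run (after $n$ enqueues and $m$ non-$\NULL$ dequeues the state is $v_{m+1}\cdots v_n$, so the $i$-th non-$\NULL$ dequeue returns $v_i$) and then write the canonical representative down in closed form, with the $\deq(\NULL)$ blocks placed by counting. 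Both are sound; your version buys an explicit description of $b_c$ and a reusable matching lemma (essentially the content the paper later re-derives inside Lemmas~\ref{lem:canonical-seqwit} and~\ref{lem:obsequiv-deqnull}), while the paper's incremental argument avoids the $k_i$ bookkeeping and mirrors the extend-by-one-event style it reuses in Lemma~\ref{lem:canonical-order-legal}. One small slip in your write-up: the bound $M\le N$ follows from the non-$\NULL$ dequeue rule requiring a nonempty state (i.e., $m<n$ at each $d_i$), not from the $\deq(\NULL)$ rule; the invariant you prove already gives this, so nothing is lost.
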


\proof
By induction on the length of $b$.
The base case, where $b=\varepsilon$ trivially satisfies the condition since $\varepsilon\in\qbehavsetx c$.
Now assume the claim holds for $b$, and we have to prove it for $b\cdot a$.
By the induction hypothesis, there is a canonical behavior $b_c \obsequiv b$.
We observe that since $b_c$ and $b$ are legal and $b_c \obsequiv b$, their runs 
end in the same final state. Therefore, the fact that $b\cdot a$ is legal implies
that $b_c\cdot a$ is also legal.
We proceed by a case analysis of $a$.
\begin{itemize}
\item $a=\enq(x)$. Then $b_c \cdot \enq(x)$ is trivially also canonical.
\item $a=\deq(\NULL)$.
We know that $b_c$ cannot end in an enqueue event, or else the queue would not
be empty. Therefore $b_c \cdot \deq(\NULL)$ is canonical.
\item $a=\deq(y)$, with $y\neq\NULL$.
We know that $b_c$ must be of the form $b^1_c \cdot \enq(y) \cdot b^2_c$ 
where $b^1_c$ does not end in a enqueue event and $b^2_c$ contains only 
enqueue events.
Then, we take the behavior $b^1_c \cdot \enq(y) \cdot \deq(y) \cdot b^2_c$,
which is both canonical and observationally equivalent to $b \cdot \deq(y)$.
\qed
\end{itemize}

\noindent Moreover, canonical behaviors have a straightforward sequential witness.

\begin{lem}\mylabel{lem:canonical-seqwit}
Every canonical behavior $b\in\qbehavsetx c$ has a sequential witness $\mu_{b}$.
\end{lem}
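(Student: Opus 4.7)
The plan is to define the sequential witness $\mu_b$ in the most direct way suggested by the regular expression for $\qbehavsetx c$, then to verify each of the six clauses of Definition~\ref{def:seqwit} by inspecting the canonical structure.

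Concretely, any $b \in \qbehavsetx c$ has the shape
\[
  b = w_1 \cdot \enq(x_1)\deq(x_1) \cdot w_2 \cdot \enq(x_2)\deq(x_2) \cdots w_k \cdot \enq(x_k)\deq(x_k) \cdot w_{k+1} \cdot \enq(y_1)\cdots\enq(y_m)
\]
where each $w_i$ is a (possibly empty) block of $\deq(\NULL)$ events and $\enq(y_1),\ldots,\enq(y_m)$ are the trailing unmatched enqueues. I would then define $\mu_b$ on $\Deq b$ by
\[
  \mu_b(d) = \begin{cases} \enq(x_i) & \text{if } d = \deq(x_i) \text{ is the matched dequeue following } \enq(x_i), \\ \bot & \text{if } d \text{ is one of the } \deq(\NULL) \text{ events in some } w_i. \end{cases}
\]

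Next I would walk through conditions (i)--(vi). Clauses (i)--(iii) are immediate from the definition: the matched dequeue returns the same value as its enqueue, $\deq(\NULL)$ is the unique value mapped to $\bot$, and each matched enqueue is the image of exactly one dequeue because each $\enq(x_i)$ appears in $b$ as part of exactly one $\enq(x_i)\deq(x_i)$ pair. Clause (iv) is also immediate since $\enq(x_i)$ sits directly to the left of $\deq(x_i)$ in $b$. For clause (vi), I would observe that any $\deq(\NULL)$ lies in some block $w_j$, so the prefix of $b$ preceding it consists of $j{-}1$ complete $\enq(x_i)\deq(x_i)$ pairs interleaved with $\deq(\NULL)$ events; hence the number of enqueues before $d$ equals $j-1$, which is exactly the number of non-$\bot$ dequeues preceding $d$.

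The only clause that needs a structural observation rather than a direct reading-off is (v). The key remark is that in the canonical form, every enqueue occurring before any matched enqueue must itself be a matched enqueue, since the unmatched enqueues are all concentrated at the tail of $b$. Concretely, if $e \prec_b \mu_b(d')$, then $\mu_b(d') = \enq(x_j)$ for some $j$, so $e$ occurs strictly before $\enq(x_j)$; hence $e = \enq(x_i)$ for some $i < j$, and its preimage $\mu_b^{-1}(e) = \deq(x_i)$ sits immediately after $\enq(x_i)$, which is before $\enq(x_j)$ and therefore before $d'$.

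I do not expect any real obstacle in this lemma: the canonical form has been engineered precisely so that the witness map is obvious. The only place that requires a moment's thought is clause (v), where one has to use the fact that unmatched enqueues occur only in the final $(\sum_x \enq(x))^*$ factor of the regular expression, so ``$e$ precedes a matched enqueue'' forces $e$ itself to be matched.
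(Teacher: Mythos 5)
Your proof is correct and follows essentially the same route as the paper: both define $\mu_b$ by sending each $\deq(\NULL)$ to $\bot$ and each non-$\NULL$ dequeue to its immediately preceding enqueue, then verify clauses (i)--(iv) by inspection, clause (v) from the fact that all enqueues preceding a matched enqueue are themselves matched and immediately followed by their dequeues, and clause (vi) by the same counting argument on the prefix. No gaps.
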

\proof
Let $b$ be a canonical behavior.
We construct $\mu_b$ by mapping all $\deq(\NULL)$ to $\bot$, and each $\deq(x)$ with $x\in\mathbb{N}$ to its immediate predecessor.
By the definition of canonical behavior, each $\deq(x)$ in $b$ is immediately preceded by $\enq(x)$.
Thus, the first four conditions are trivially satisfied.
If $\enq(y)\prec_b \enq(x)$ and $\deq(x)$ is in $b$, then by the definition of canonical behavior, we must have
\[
b=b_1\cdot \enq(y)\cdot \deq(y)\cdot b_2\cdot \enq(x)\cdot \deq(x)\cdot b_3
\]
for some sequences $b_1,b_2,b_3$.
This implies that condition (v) is satisfied.
Finally, if $b=b_1\cdot \deq(\NULL)\cdot b_2$, consider the sequence $b_1'$ obtained by projecting out all $\deq(\NULL)$ events from $b_1$.
That is, 
\[
b_1' \defeq b_1|(\enqset \cup \deqset\setminus\{\deq(\NULL)\})
\]
Then, by the definition of canonical behavior, we have
$
b_1'\in \left(\textstyle\sum_{x\in\mathbb{N}}\enq(x)\cdot\deq(x)\right)^*
$.
In other words, $b_1'$ has an equal number of $\enq$ and $\deq$ symbols, such that by construction $\mu_b(\deq(x))=\enq(x)\neq\bot$.
This implies that condition (vi) is satisfied.
\qed

\begin{lem}\mylabel{lem:deq-after-enq}
If $b=b_1\cdot \deq(x)\cdot b_2$  is a legal queue behavior, then $\enq(x)\prec_b\deq(x)$.
\end{lem}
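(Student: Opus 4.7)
The plan is to exploit the transition-system definition of legal queue behaviors given by $\ltsq$. Since $b$ is legal, fix a run $q_0 \qtrans{l_1} q_1 \qtrans{l_2} \cdots \qtrans{l_n} q_n$ of $\ltsq$ whose trace is $b$, with $q_0 = \varepsilon$. Let $k = |b_1| + 1$, so that $l_k = \deq(x)$; implicitly $x \in \mathbb{N}$, since $\enq$ never takes the argument $\NULL$. Inspecting the three transition rules of $\ltsq$, a $\deq(x)$ transition with $x \neq \NULL$ can fire only via rule (\textit{ii}), which forces $q_{k-1} = x \cdot q_k$. In particular, $x$ occurs in $q_{k-1}$.

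The heart of the argument is the following invariant: for every $0 \le i \le n$, each element occurring in the state $q_i$ is the value of some $\enq$ event appearing in the prefix $\seqx{b}{1:i}$. I would prove this by induction on $i$. The base case $i = 0$ is vacuous since $q_0 = \varepsilon$. The inductive step is a routine case split on $l_{i+1}$: an $\enq(y)$ transition appends $y$ to the state and simultaneously adds an event with value $y$ to the prefix; a $\deq(y)$ transition with $y \neq \NULL$ drops the head but preserves the property for the surviving elements by the induction hypothesis; and a $\deq(\NULL)$ transition leaves both the state and the enqueue events in the prefix unchanged.

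Applying the invariant at $i = k-1$, since $x$ lies in $q_{k-1}$, there exists some $\enq(x)$ event occurring in $\seqx{b}{1:k-1} = b_1$. Because every event in $b_1$ strictly precedes $\deq(x)$ in the sequential history corresponding to $b$, this yields $\enq(x) \prec_b \deq(x)$ as required. I do not expect any genuine obstacle here; the only mild subtlety is that several $\enq(x)$ events with distinct $\uid$s may occur in $b_1$, but the lemma only asks for the existence of one such event preceding $\deq(x)$, which is exactly what the invariant delivers.
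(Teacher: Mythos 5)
Your proposal is correct and follows essentially the same route as the paper: both arguments use the $\ltsq$ rules to observe that a $\deq(x)$ transition forces $x$ to occur in the current state, and that any run reaching such a state must contain an earlier $\enq(x)$ transition. You merely make explicit (via the inductive invariant on states) a step that the paper's proof asserts in one sentence, which is a reasonable elaboration rather than a different approach.
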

\proof
By the definition of {\ltsq}, $\deq(x)$ can happen at a state $q$ if $q=x\cdot q'$ for some sequence $q'$.
Again by definition, all runs of {\ltsq} reaching $q$ must have a transition with label $\enq(x)$; otherwise, $x$ cannot occur in $q$.
Since all legal behaviors have a corresponding run, $\enq(x)\prec_b\deq(x)$ must hold.
\qed

Next, we show that observationally equivalent legal queue behaviors cannot reorder their $\deq(\NULL)$ events.

\begin{lem}\mylabel{lem:obsequiv-deqnull}
If $b$ and $b'$ are observationally equivalent legal queue behaviors, then $\seqx b i=\deq(\NULL)$ iff $\seqx {b'} i=\deq(\NULL)$.
\end{lem}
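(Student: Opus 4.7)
The plan is to exploit observational equivalence to show that the $j$-th occurrence of $\deq(\NULL)$ in $b$ and in $b'$ is actually the same event and sits at the same position in its behavior.

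First I would fix $j$ and consider the $j$-th $\deq(\NULL)$ event $e_j$ in $b$. Since $b\obsequiv b'$ gives $b|\deqset = b'|\deqset$, the subsequence of dequeues agrees element-for-element, so $e_j$ is also the $j$-th $\deq(\NULL)$ in $b'$; let $k_j$ denote its common position within the dequeue subsequence. Then in either behavior the events strictly preceding $e_j$ consist of exactly $j-1$ null dequeues, $k_j - j$ non-null dequeues, and some number of enqueues that $\obsequiv$ alone does not pin down.

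Next I would use legality to count the enqueues before $e_j$. By the definition of $\ltsq$, the transition labelled $\deq(\NULL)$ fires only from the empty-queue state, so in any legal behavior the number of enqueues preceding a $\deq(\NULL)$ event equals the number of non-null dequeues preceding it. Hence the number of enqueues before $e_j$ is $k_j - j$ in both $b$ and $b'$, and the position of $e_j$ in each is $2(k_j - j) + j$. Since every $\deq(\NULL)$-hosting position in $b$ (resp.\ $b'$) arises from some $j$ in this way, the claimed biconditional follows.

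The main thing to be careful about is conceptual rather than technical: positions $i$ in $b$ and in $b'$ need not contain events of the same type, let alone the same event (an enqueue in $b$ can sit where a dequeue sits in $b'$), so the argument cannot index directly by $i$. Routing the counting through the $j$-th $\deq(\NULL)$ event, which is a genuine invariant of $\obsequiv$, sidesteps this, after which only a short counting argument driven by the empty-queue requirement imposed by $\ltsq$ remains.
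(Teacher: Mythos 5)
Your proof is correct, and it rests on exactly the same key fact as the paper's: in a legal behavior the $\deq(\NULL)$ transition fires only from the empty state, so the number of enqueues preceding any $\deq(\NULL)$ must equal the number of non-$\NULL$ dequeues preceding it. Where you differ is in how this fact is deployed. The paper argues by contradiction, comparing the dequeue- and enqueue-subsequences of the length-$(i-1)$ prefixes of $b$ and $b'$ and doing a case analysis on which of $b_d$, $b_d'$ (resp.\ $b_e$, $b_e'$) is a proper prefix of the other, deriving a counting contradiction in each case. You instead compute the absolute position of the $j$-th $\deq(\NULL)$ event directly as $2(k_j-j)+j$, a quantity determined entirely by the shared dequeue subsequence $b|\deqset=b'|\deqset$ together with legality, so the sets of $\deq(\NULL)$-hosting positions coincide with no case analysis at all. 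Your version is the more economical of the two and makes the invariant content explicit (the position of each null dequeue is an invariant of the $\obsequiv$-class of legal behaviors); the paper's version localizes the argument at a single index $i$, which is marginally closer to the literal statement being proved but otherwise buys nothing extra. Your cautionary remark that one cannot index directly by $i$ is exactly the right thing to flag, and routing through the $j$-th $\deq(\NULL)$ handles it cleanly.
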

\proof
Since $\obsequiv$ is a symmetric relation, we prove only one direction.
($\Rightarrow$)
Consider the subsequences $b_e=\seqx b {1:i-1} | \enqset$, $b_d=\seqx b {1:i-1} | \deqset$, and their duals $b_e'$ and $b_d'$ for $b'$.
Note that each enqueue event increases by one the length of the sequence representing the state, each dequeue event decreases by one the length of the sequence representing the state, and $\deq(\NULL)$ can only happen when the length of the sequence is zero ($q=\varepsilon$).
Then, the number of enqueue events in $b_e$ and the number of non-{\NULL} dequeue events in $b_d$ must be equal; let us call it $k$.

Assume first that $b_d$ is a proper prefix of $b_d'$.
This implies that $b_e'$ is a proper prefix of $b_e$.
The $(|b_d|+1)^{th}$ symbol in $b_d'$ is $\deq(\NULL)$ because $b\obsequiv b'$.
Then, the number of non-{\NULL} dequeue events preceding this $\deq(\NULL)$ is $k$, but the number of enqueue events preceding it, contained in $b_e'$, which is a proper prefix of $b_e$, is strictly less than $k$.
This contradicts the assumption that $b'$ is a legal queue behavior.
The case where $b_e$ is a proper prefix of $b_e'$ follows a similar argument.

Now assume that $b_d'=b_d$ and $b'_e=b_e$.
Further assume $\seqx {b'} i=\enq(x)$ for some $x\in\mathbb{N}$.
Because $b\obsequiv b'$, the next dequeue event in $b'$ is necessarily $\deq(\NULL)$.
This, however, contradicts that the fact that $b'$ is legal, 
because in a legal behavior $\deq(\NULL)$ cannot immediately follow an enqueue event.
Therefore $\seqx {b'} i=\deq(y)$ for some $y$ and because $b\obsequiv b'$, we have that $y=\NULL$, as required.
\qed

Next, we show that given two observationally equivalent behaviors 
and a sequential witness for the first behavior, 
we can build a sequential witness for the other.

\begin{lem}\mylabel{lem:obsequiv-witness}
Let $b\obsequiv b'$ and $\mu_b$ be a sequential witness for $b$.
Then, there exists a sequential witness for $b'$.
\end{lem}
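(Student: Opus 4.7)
The plan is to take $\mu_{b'} \defeq \mu_b$ as the candidate sequential witness for $b'$. This is well-typed because $b \obsequiv b'$ gives $\Enq{b} = \Enq{b'}$ and $\Deq{b} = \Deq{b'}$, both as sets and as ordered subsequences. Conditions (i), (ii), (iii) of Definition~\ref{def:seqwit} then transfer for free, since they only constrain values and identities of events, which are intrinsic and independent of the surrounding behavior.

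Before tackling the remaining three conditions, I would prove an intermediate observation: $\mu_b$ is forced to be the \emph{canonical} matching, that is, for each $x \in \mathbb{N}$ it sends the $k$-th occurrence of $\deq(x)$ in $b$ to the $k$-th occurrence of $\enq(x)$ in $b$. The argument is a short induction on $k$ combining the FIFO condition (v) with injectivity (iii): any deviation from the canonical matching at index $k$ would, via (v) applied to a mismatched pair, force some later dequeue of $x$ to precede $d^x_k$ in $b$, contradicting either (iii) or the induction hypothesis.

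Condition (v) for $\mu_{b'}$ then follows immediately from $\obsequiv$, since its hypothesis and conclusion compare enqueues only to enqueues and dequeues only to dequeues, and both subsequences are preserved. Condition (vi) uses the legality of $b'$: if $\mu_b(d) = \bot$ then $d = \deq(\NULL)$ by (ii), and the only $\deq(\NULL)$-transition in $\ltsq$ starts from the empty state, so the number of prior enqueues in $b'$ must equal the number of prior non-$\NULL$ dequeues, which by (ii) is exactly the right-hand side of (vi).

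The main obstacle is condition (iv), since $\obsequiv$ does not in general preserve the relative order of an enqueue against a dequeue, and so we cannot lift $e \prec_b d$ to $e \prec_{b'} d$ directly. The canonical-matching fact rescues the situation: if $\mu_{b'}(d) = e$, it identifies $d$ as the $k$-th $\deq(x)$ and $e$ as the $k$-th $\enq(x)$ in both $b$ and $b'$, because $\obsequiv$ preserves these value-restricted subsequences. Legality of $b'$ together with the FIFO semantics of $\ltsq$ then force the $k$-th $\enq(x)$ to precede the $k$-th $\deq(x)$ in $b'$ (else there would be fewer than $k$ copies of $x$ available in the queue state at $d$), giving $e \prec_{b'} d$ and closing the proof.
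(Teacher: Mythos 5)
Your proposal is correct, and at the top level it follows the same strategy as the paper: reuse $\mu_b$ as the witness for $b'$ (the paper phrases this positionally, conjugating $\mu_b$ by a permutation $\pi$ that realizes the obs-equivalence, which on events is exactly your identity transfer), and then check the six conditions, with (i)--(iii) and (v) carried over by $\obsequiv$ and (vi) obtained from the emptiness requirement of the $\deq(\NULL)$ transition (the paper routes this through Lemma~\ref{lem:obsequiv-deqnull}; your direct state-counting argument in $\ltsq$ is equivalent). Where you genuinely diverge is condition (iv). The paper disposes of it by citing Lemma~\ref{lem:deq-after-enq}, which only asserts that \emph{some} enqueue of the dequeued value precedes the dequeue in a legal behavior; when the same value is enqueued more than once this does not by itself identify the matched enqueue, so the paper's one-line justification is terse at best. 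Your intermediate lemma --- that conditions (i), (iii) and (v) force any sequential witness to be the canonical matching sending the $k$-th $\deq(x)$ to the $k$-th $\enq(x)$ --- together with the observation that legality of $b'$ forces at least $k$ enqueues of $x$ before its $k$-th dequeue, closes exactly this gap. The cost is one extra induction; the benefit is a proof of (iv) that is robust to repeated values and, as a by-product, a structural characterization of sequential witnesses that the paper never makes explicit. Both arguments rely on $b'$ being legal, which is consistent with the paper's standing convention that ``queue behavior'' means legal queue behavior (and is necessary: $\deq(x)\cdot\enq(x)$ is obs-equivalent to $\enq(x)\cdot\deq(x)$ but has no witness).
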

\proof
Let $\pi$ denote a permutation from $b$ to $b'$ such that $\deq(\NULL)$ events are not shuffled.
That is, $\pi(i)=j$ means that $\seqx {b} i=\seqx {b'} j$ and whenever $\seqx b i=\deq(\NULL)$, we set $\pi(i)=i$ By the definition of obs-equivalence and Lemma~\ref{lem:obsequiv-deqnull}, this permutation is well-defined.
Note that because $b \obsequiv b'$, the ordering among dequeue events is preserved by $\pi$.
That is, if $\seqx b i,\seqx b j\in\deqset$ and $i<j$, then $\pi(i)<\pi(j)$.
The same holds for the ordering among enqueue events.
We now pick the mapping $\mu(i) \defeq \pi(\mu_b(\pi^{-1}(i)))$ and show that it is a sequential witness for $b'$.
Conditions (i) and (ii) are satisfied by construction.
Condition (iii) is satisfied because $\pi$ is a bijection.
Condition (iv) is satisfied by Lemma~\ref{lem:deq-after-enq} and by the construction of $\mu$.
Condition (v) is satisfied by because $\pi$ is a bijection and preserves the ordering between dequeues and enqueues.
Condition (vi) is satisfied by Lemma~\ref{lem:obsequiv-deqnull}.
\qed

\begin{lem}\mylabel{lem:seqwit-removals}
Let $b$ be a queue behavior with a sequential witness $\mu$.
\begin{enumerate}
\item Let $d$ and $e$ be dequeue and enqueue events such that $\mu(d)=e$, and 
let $b'$ be the behavior obtained after removing both $d$ and $e$ from $b$. 
Then, the restriction of $\mu$ to $b'$ is a sequential witness for $b'$.
\item Let $d=\deq(\NULL)$, and $b'$ by obtained by removing $d$ from $b$.
Then, the restriction of $\mu$ to $b'$ is a sequential witness for $b'$.
\end{enumerate}
\end{lem}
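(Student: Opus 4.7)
The plan is to verify each of the six conditions of Definition~\ref{def:seqwit} for the restricted mapping in turn. Write $\mu'$ for the restriction of $\mu$ to $\Deq{b'}$. For Part~1, where $\Deq{b'} = \Deq{b}\setminus\{d\}$ and $\Enq{b'} = \Enq{b}\setminus\{e\}$, I would first check that $\mu'$ is well-typed as a map into $\Enq{b'}\cup\{\bot\}$: if $\mu'(d') = e$ for some $d'\neq d$, then condition~(iii) on $\mu$ applied to $d,d'$ would force $d' = d$, a contradiction. Conditions~(i)--(iv) then transfer immediately, since each concerns a single pair and $\prec_{b'}$ agrees with $\prec_b$ on events surviving in $b'$. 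Condition~(v) is similar: for $e''\in\Enq{b'}$ and $d''\in\Deq{b'}$ with $e''\prec_{b'}\mu'(d'')$, we have $e''\prec_b\mu(d'')$, hence $\mu^{-1}(e'')\prec_b d''$ by (v) for $\mu$; injectivity from~(iii) ensures $\mu^{-1}(e'')\neq d$, so the image lies in $\Deq{b'}$ and the precedence survives.

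The main obstacle is condition~(vi) in Part~1. For any $d''\in\Deq{b'}$ with $\mu(d'') = \bot$, I must show that the enqueue count and the non-null dequeue count preceding $d''$ agree in $b'$. Removing the pair $(e,d)$ decreases each count by either $0$ or $1$, so it suffices to prove that $e\prec_b d''$ iff $d\prec_b d''$. The key intermediate claim is the set equality
\[
\{ e'\in\Enq{b} \mid e'\prec_b d'' \} \;=\; \{ \mu(d') \mid d'\in\Deq{b},\; d'\prec_b d'',\; \mu(d')\neq\bot \}.
\]
The inclusion $\supseteq$ holds because (iv) together with transitivity of $\prec_b$ on the sequence $b$ gives $\mu(d')\prec_b d'\prec_b d''$; the two sides have equal cardinality by condition~(vi) on $\mu$ at $d''$ combined with injectivity~(iii); hence they coincide. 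Therefore $e\prec_b d''$ iff some $d'$ with $\mu(d') = e$ precedes $d''$, and by~(iii) this $d'$ must be $d$. Conversely, if $d\prec_b d''$, then (iv) and transitivity give $e = \mu(d)\prec_b d\prec_b d''$. So both counts drop by the same amount, and (vi) is preserved in $b'$.

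Part~2 is much simpler. Here $\Enq{b'} = \Enq{b}$ and the only change is that one dequeue mapped to $\bot$ has been removed, so $\mu'$ is automatically well-typed. Conditions~(i)--(v) carry over unchanged because every event they mention survives in $b'$ and $\prec_{b'}$ agrees with $\prec_b$ on surviving events. For condition~(vi) at any remaining $d''$ with $\mu(d'') = \bot$, removing a single $\deq(\NULL)$ event alters neither the set of enqueues preceding $d''$ nor the set of non-null dequeues preceding $d''$, so (vi) at $d''$ in $b$ immediately yields (vi) at $d''$ in $b'$.
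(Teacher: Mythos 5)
Your proof is correct and follows essentially the same route as the paper: conditions (i)--(v) transfer directly because the restriction preserves $\prec$ among surviving events, and the only real work is condition (vi) in Part~1, which both you and the paper settle by the same counting argument (conditions (iii), (iv) and (vi) force the non-$\NULL$ dequeues preceding a given $\deq(\NULL)$ to map bijectively onto the enqueues preceding it, so $e$ and $d$ lie on the same side of every surviving $\deq(\NULL)$). Your phrasing of this as a direct set equality, rather than the paper's proof by contradiction, is a minor presentational improvement but not a different argument.
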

\proof
In both cases, let us denote the restriction of $\mu$ on $b'$ with $\mu'$; 
we have to show that $\mu'$ satisfies the six conditions of a sequential witness.

(1)
Since $\mu'$ is a restriction, it satisfies conditions (i), (ii) and (iii).
Observe that for any two events $e_1$ and $e_2$ in $b'$, we have $e_1\prec_b e_2$ iff $e_1\prec_{b'} e_2$.
This implies that $\mu'$ satisfies conditions (iv) and (v).
Finally, we have to show that there cannot be a dequeue event $d'=\deq(\NULL)$ such that $e\prec_b d'\prec_b d$.
Assume the contrary, then since the number of non-{\NULL} dequeue events and the number of enqueue events preceding $d'$ must be equal, there must be a dequeue event $d_y=\deq(y)$ whose matching $e_y=\enq(y)$ comes after $d'$.
This implies that $d_y\prec_b d' \prec_b e_y$ and $\mu(d_y)=e_y$, contradicting condition (iii).
Thus, condition (vi) is also satisfied by $\mu'$.

(2)
As in the previous case, conditions (i) to (v) are satisfied by $\mu'$.
The condition (vi) is satisfied because removing $d'$ does not affect the cardinality of either set; thus, if $d'=\deq(\NULL)$ is in $b'$, then the number of enqueue events and non-{\NULL} dequeue events that precede $d'$ in $b'$ is the same as those that precede $d'$ in $b$.
\qed

\begin{lem}\mylabel{lem:seqwit-canonical}
Let $b$ be a queue behavior and let $\mu$ be a sequential witness for $b$.
Then, there exists a canonical behavior $b_c$ such that $b\obsequiv b_c$ and for all $i$, $\seqx b i=\deq(\NULL)$ iff $\seqx {b_c} i=\deq(\NULL)$.
\end{lem}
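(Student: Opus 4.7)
The plan is to proceed by induction on the length of $b$. In the base case $b = \varepsilon$, take $b_c = \varepsilon$; all three required properties are vacuous. In the inductive step, I case-split on the first event $\seqx b 1$, using the sequential witness axioms both to rule out impossible cases and to construct $b_c$ by prepending appropriate events to a canonical behavior for a shorter sub-behavior obtained via Lemma~\ref{lem:seqwit-removals}.

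If $\seqx b 1$ were a non-null dequeue $\deq(y)$, condition~(iv) would require a strictly earlier enqueue to match it, which is impossible. If $\seqx b 1 = \deq(\NULL)$, Lemma~\ref{lem:seqwit-removals}(2) gives a sequential witness for the suffix $b'$; applying the induction hypothesis produces a canonical $b_c'$, and I set $b_c \defeq \deq(\NULL) \cdot b_c'$, which is clearly canonical, observationally equivalent to $b$, and has $\deq(\NULL)$ at the same positions. If $\seqx b 1 = \enq(x)$ is not in the range of $\mu$, then using~(v) I argue that no matched dequeue $d'$ can exist in $b$: since $\mu(d')$ would sit at some position $>1$, we would have $\enq(x) \prec_b \mu(d')$ and condition~(v) would require $\mu^{-1}(\enq(x))$ to exist, contradicting unmatchedness; and using~(vi) I argue that no subsequent $\deq(\NULL)$ can occur either, because the preceding enqueue $\enq(x)$ has no preceding non-null dequeue to balance it. Hence $b$ consists entirely of enqueues and is already canonical, so $b_c \defeq b$.

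The principal case is when $\seqx b 1 = \enq(x)$ is matched by some $d = \deq(x)$, at some position $p$ in $b$. I remove both $\enq(x)$ and $d$, apply Lemma~\ref{lem:seqwit-removals}(1) to the resulting $b'$, invoke the induction hypothesis to get canonical $b_c'$, and set $b_c \defeq \enq(x) \cdot \deq(x) \cdot b_c'$. Canonicity is immediate: prepending a matched pair to a canonical sequence yields a canonical sequence. The main obstacle is verifying observational equivalence and $\deq(\NULL)$-position preservation, both of which reduce to showing that no $\deq(\NULL)$ and no dequeue other than $d$ appears among the first $p$ positions of $b$. Absence of an intermediate $\deq(\NULL)$ follows from~(vi): $\enq(x)$ is an enqueue preceding the hypothetical $\deq(\NULL)$, but its match $d$ lies beyond it, so the $\mu$-induced injection from preceding non-null dequeues into preceding enqueues misses $\enq(x)$, yielding a strict inequality that contradicts~(vi). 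Absence of an earlier non-null dequeue $\deq(y)$ at some position $q < p$ follows from~(v): its match $\enq(y)$ would lie at a position $>1$ (as position~$1$ is $\enq(x) \neq \enq(y)$), so $\enq(x) \prec_b \enq(y) = \mu(\deq(y))$ would force $d \prec_b \deq(y)$, contradicting $q < p$. With these two facts in hand, the positions $1, \ldots, p$ are all non-$\deq(\NULL)$ in both $b$ and $b_c$, while positions $> p$ shift identically under the removal from $b$ and the re-insertion into $b_c'$, so the $\deq(\NULL)$ positions match index-by-index and the projections onto $\enqset$ and $\deqset$ agree.
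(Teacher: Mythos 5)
Your proof is correct and follows essentially the same inductive strategy as the paper's: peel off a leading $\deq(\NULL)$ or the first matched enqueue--dequeue pair, apply Lemma~\ref{lem:seqwit-removals} and the induction hypothesis, and justify the construction via witness conditions (iii)--(vi). The only cosmetic difference is that you case-split on the first event of $b$ where the paper case-splits on the first dequeue event (and then shows the first enqueue must be its match), but the resulting canonical behavior and the supporting arguments coincide.
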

\proof
Let $\mathsf{can}(b,\mu)$ denote the canonical behavior of $b$ whose sequential witness is $\mu$.
We will prove, by induction on the length of $b$, that $\mathsf{can}$ is a well-defined total function.

For the base, consider all sequences $b$ of length 1 or less which have a sequential witness.
\begin{itemize}
\item If $b=\varepsilon$, then the empty mapping is the only sequential witness for $b$; by definition, $b$ is a canonical behavior.
The second condition is vacuously satisfied.
\item If $b=\deq(\NULL)$, then $\mu$ which maps $\deq(\NULL)$ to $\bot$ is the only sequential witness for $b$; by definition, $b$ is a canonical behavior.
Since $b_c=b$, the second condition is satisfied.
\item If $b=\enq(x)$ for some $x\in\mathbb{N}$, then the empty mapping is the only sequential witness for $b$; by definition, $b$ is a canonical behavior.
Since there is no $\deq(\NULL)$ event, the second condition is vacuously satisfied.
\end{itemize}
Observe that the sequence $b=\deq(x)$ of length 1 cannot have a sequential witness, because any sequential witness has to map $\deq(x)$ to a matching enqueue which does not exist.

Assume that the claim holds for all sequences of length $k$ or less.
Let $b$ be a sequence of length $k+1$ and $\mu$ be a sequential witness for $b$.
Consider the two sub-sequences of $b$, $b_d=b|\deqset$ and $b_e=b|\enqset$, with lengths $n_d$ and $n_e$, respectively.
Observe that $b$ is an interleaving of $b_d$ and $b_e$.
In particular, $\seqx b 1$ is either $\seqx {b_d} 1$ or $\seqx {b_e} 1$.
We will do a case analysis on the possible values for $d=\seqx {b_d} 1$.
\begin{itemize}
\item $d=\deq(\NULL)$.
Then, we set $b_c=\mathsf{can}(b,\mu)=d\cdot \mathsf{can}(b',\mu')$, with $b'$ obtained by removing the first $\deq(\NULL)$ from $b$ (note that this is $d$) and $\mu'$ obtained by restricting $\mu$ to $b'$.
By Lemma~\ref{lem:seqwit-removals}, $\mu'$ is a sequential witness for $b'$.
By inductive hypothesis, $b_c'=\mathsf{can}(b',\mu')$ is a canonical behavior observationally equivalent to $b'$.
Since $b_c'$ is a canonical behavior, so is $b_c=d\cdot b_c'=\deq(\NULL)\cdot b_c'$.
Since $b_c'\obsequiv b'$, we have $b_c|\deqset=d\cdot b_c'|\deqset=b_d=b|\deqset$, $b_c|\enqset=b_e=b|\enqset$.
Thus, $b_c\obsequiv b$.
The second condition is satisfied, because both $b$ and $b_c$ have $\deq(\NULL)$ in their first position and $b_c'$ preserves the positions of {\NULL}-dequeue events by inductive hypothesis.
\item $d=\deq(x)$ for some $x\in\mathbb{N}$.
By the assumption that $\mu$ is a sequential witness for $b$ implies that there exists $e=\enq(x)$ such that $\mu(d)=e$ (conditions (i) and (ii)) and $d\prec_b e$ (condition (iv)).
Then, $e=\seqx {b_e} 1=\enq(x)$ must hold.
Assume contrary, that is $\seqx {b_e} 1=\enq(y)$ for some $y\neq x$.
As noted above, $\seqx b 1$ is either $\seqx {b_d} 1$ or $\seqx {b_e} 1$.
If the former, then $d\prec_b e$ cannot hold since $d$ is minimal with respect to $\prec_b$, violating condition (iv) which contradicts the assumption that $\mu$ is a sequential witness for $b$.
If the latter, that is $e'=\seqx b 1=\seqx {b_e} 1=\enq(y)$, then $e'\prec_b e$, and either there is no $d'=\deq(y)$ or if it exists, $d\prec_b d'$, violating condition (v) which contradicts the assumption that $\mu$ is a sequential witness for $b$.
Thus, $e=\seqx {b_e} 1$. 
We set $b_c=\mathsf{can}(b,\mu)=e\cdot d\cdot \mathsf{can}(b',\mu')$, with $b'$ obtained by removing $d$ and $e$ from $b$ and $\mu'$ to be the restriction of $\mu$ on $b'$.
By Lemma~\ref{lem:seqwit-removals}, $\mu'$ is a sequential witness for $b'$.
By inductive hypothesis, $b_c'=\mathsf{can}(b',\mu')$ is a canonical behavior obs-equivalent to $b'$.
Since $b_c'$ is a canonical behavior, so is $b_c=e\cdot d\cdot b_c'=\enq(x)\cdot \deq(x)\cdot b_c'$.
Finally, since $b_c'\obsequiv b'$, we have $b_c|\deqset=d\cdot b_c'|\deqset=b_d=b|\deqset$ and $b_c|\enqset=e\cdot b_c'|\enqset=b_e=b|\enqset$. 
Thus, $b_c\obsequiv b$.
By the proof of Lemma~\ref{lem:seqwit-removals}, we know that for any $d'=\deq(\NULL)$ either both $d$ and $e$ precede it in $b$ or neither does.
Since $d$ is the first event in $b_d$, the latter cannot happen; i.e. $d\prec_b d'$ and $e\prec_b d'$.
This implies that the position of $d'$ is the same in $b_c$ and $e\cdot d\cdot b_c'$ by the inductive hypothesis.
Thus the second condition is satisfied. 
\qed
\end{itemize}

\begin{lem}\mylabel{lem:canonical-order-legal}
Let $b_c$ be a canonical queue behavior.
Let $b$ be a queue behavior such that $b\obsequiv b_c$, 
for every $\deq(x)$ in $b$ there is $\enq(x)\prec_b \deq(x)$, and 
for every $i$, $\seqx b {i}=\deq(\NULL)$ iff $\seqx {b_c} {i}=\deq(\NULL)$.
Then, $b$ is legal.
\end{lem}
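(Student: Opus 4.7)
The plan is to construct explicitly a run of $\ltsq$ whose trace is $b$.
Let $V_1,\ldots,V_{N_e}$ denote the values enqueued in $b$ in order, and write $E_k$ for the $k$-th enqueue event.
Since $b\obsequiv b_c$ and the legal canonical $b_c$ is FIFO, the $k$-th non-{\NULL} dequeue event in $b$ is necessarily $\deq(V_k)$.
For $0\le i\le |b|$ let $n_e(i)$ and $n_d(i)$ denote the numbers of enqueues and of non-{\NULL} dequeues in the prefix $\seqx b{1:i}$, and set $q_i \defeq V_{n_d(i)+1}\cdots V_{n_e(i)}$, with the convention that $q_i=\varepsilon$ when $n_d(i)\ge n_e(i)$.
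It then suffices to verify, for every $i$, that $q_{i-1}\qtrans{\seqx b i}q_i$ is a valid step of $\ltsq$, since this exhibits $b$ as the trace of a run of $\ltsq$ from the empty state.

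The enqueue case is immediate from the definition of $q_i$.
For $\seqx b i = \deq(V_k)$ (i.e.\ the $k$-th non-{\NULL} dequeue of $b$), I need $q_{i-1}$ to begin with $V_k$, equivalently $n_e(i-1)\ge k$.
Because $\obsequiv$ preserves the order of dequeues, all of $\deq(V_1),\ldots,\deq(V_{k-1})$ occur strictly before position $i$ in $b$; applying the enqueue-precedence hypothesis in turn to these and to $\deq(V_k)$, and identifying the asserted preceding enqueue of value $V_j$ with $E_j$ (which is immediate when values are distinct, and recoverable in general by an injective matching argument against the FIFO ordering of $b_c$), I conclude that each of $E_1,\ldots,E_k$ occurs at some position $<i$, so $n_e(i-1)\ge k$ as required.

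The principal obstacle is the $\deq(\NULL)$ case, where I must show $q_{i-1}=\varepsilon$, equivalently $n_e(i-1)=n_d(i-1)$.
My key observation is a ``bucketing'' argument: because the $\deq(\NULL)$ positions of $b$ and $b_c$ coincide and $b|\deqset=b_c|\deqset$, each non-{\NULL} dequeue sits between the same two $\deq(\NULL)$ positions in $b$ as it does in $b_c$; its index in the common dequeue sequence fixes which bucket between consecutive $\deq(\NULL)$'s it belongs to, and the absolute bucket boundaries agree.
Consequently, $n_d(i-1)$ computed in $b$ equals the analogous count in $b_c$; since $\seqx b{1:i-1}$ and $\seqx{b_c}{1:i-1}$ have the same length and the same number of $\deq(\NULL)$'s, the same equality transfers to enqueues, so $n_e(i-1) = n_e^{b_c}(i-1)$.
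Legality of $b_c$ together with $\seqx{b_c}i=\deq(\NULL)$ forces the two $b_c$-counts to agree, and hence $n_e(i-1)=n_d(i-1)$, which gives $q_{i-1}=\varepsilon$ and completes the construction.
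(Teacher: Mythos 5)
Your proof is correct and follows essentially the same route as the paper's: the paper phrases the argument as an induction on $|b|$, but its inductive step is exactly your per-position verification, with the same counting of enqueues versus non-\NULL\ dequeues in prefixes for the $\deq(x)$ case and the same positional transfer from $b_c$ for the $\deq(\NULL)$ case. The one point you treat with more visible care---identifying the hypothesised preceding $\enq(V_k)$ with the $k$-th enqueue event when values may repeat---is glossed over in the paper's proof in exactly the same way (it silently takes the witnessing $\enq(x)$ to be the $i$-th enqueue of $b$, which really needs the injective matching supplied by the sequential witness in the lemma's only application), so your parenthetical remark is, if anything, more honest than the original.
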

\proof
We prove by induction on the length of $b$ that $b$ has a run in {\ltsq}.
The base case where $b=\varepsilon$ is trivial.
Assume that the claim holds for all sequences of length $k$ or less.
Let $b$ be a sequence of length $k+1$.
By the inductive hypothesis, there is a run $r$ in {\ltsq} with trace $\seqx b {1:k}$.
Let $q$ denote the state reached after this run.
It is enough to show that there is a transition in {\ltsq} of the form $q\qtrans{\seqx b {k+1}}q'$, for some $q'$.
We do a case analysis on $\seqx b {k+1}$.
\begin{itemize}
\item $\seqx b {k+1}=\enq(x)$. 
Then the desired transition is $q\qtrans{\enq(x)}q\cdot x=q'$.
\item $\seqx b {k+1}=\deq(x)=d$.
By the assumption on $b$, $e=\enq(x)\prec_b d$.
By observational equivalence to $b_c$, if $d=\seqx {(b|\deqset\setminus\{\deq(\NULL)\})} {i}$ for some $i$, then $e=\seqx {b|\enqset} {i}$.
Together they imply that there are exactly $i-1$ many non-{\NULL} dequeue events and at least $i$ many enqueue events that precede $d$ in $b$.
This in turn implies that $q$ must be of the form $x\cdot q'$.
Then the desired transition is $x\cdot q'\qtrans{\deq(x)}q'$.
\item $\seqx b {k+1}=\deq(\NULL)=d$.
By the assumption on $b$ and $b_c$, we have $\seqx {b_c} {k+1}=\deq(\NULL)$.
This implies that the number of enqueue events that occur in $\seqx {b_c} {1:k}$ is equal to the number of non-{\NULL} dequeue events in $\seqx {b_c} {1:k}$.
Since $b\obsequiv b_c$, for any dequeue event $d'$ we have $d'\prec_{b_c} d$ iff $d'\prec_b d$.
These in turn imply that for any enqueue event $e$ we have $e\prec_{b_c} d$ iff $e\prec_b d$.
Overall, we then have $q=\varepsilon$ and $\varepsilon\qtrans{\deq(\NULL)}\varepsilon=q'$ is the desired transition.
\qed
\end{itemize}

\begin{thm}\mylabel{thm:equiv-legal-seqwitness}
A queue behavior $b$ is legal iff $b$ has a sequential witness.
\end{thm}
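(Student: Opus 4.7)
The plan is to prove each direction by chaining together the preparatory lemmas, using canonical behaviors as the bridge between the semantic notion of legality (runs of $\ltsq$) and the combinatorial notion of a sequential witness.

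For the forward direction, I would start from a legal behavior $b$ and apply Lemma~\ref{lem:canonical-representative} to obtain a canonical behavior $b_c$ with $b \obsequiv b_c$. By Lemma~\ref{lem:canonical-seqwit}, the canonical behavior $b_c$ admits a sequential witness $\mu_{b_c}$. Finally, Lemma~\ref{lem:obsequiv-witness} lets me transport this witness along the observational equivalence to produce a sequential witness for $b$ itself. This direction is essentially routine: the three lemmas compose cleanly and there is no real work left to do.

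For the reverse direction, I would start from a behavior $b$ together with a sequential witness $\mu$. The key step is Lemma~\ref{lem:seqwit-canonical}, which yields a canonical $b_c$ with $b\obsequiv b_c$ and the additional property that the positions of $\deq(\NULL)$ events agree in $b$ and $b_c$. To apply Lemma~\ref{lem:canonical-order-legal}, I still need to verify that for every $\deq(x)$ occurring in $b$ there is a preceding $\enq(x)\prec_b\deq(x)$. This follows directly from the sequential witness conditions: condition~(ii) guarantees that a non-$\NULL$ dequeue $d=\deq(x)$ is mapped to some $e\in\Enq b$ rather than $\bot$, condition~(i) forces $\Val b e = x$, i.e.\ $e=\enq(x)$, and condition~(iv) ensures $e\prec_b d$. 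With these three hypotheses of Lemma~\ref{lem:canonical-order-legal} in hand, legality of $b$ follows immediately.

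The main obstacle is not in the theorem itself but in ensuring the preparatory lemmas line up: the delicate point is that Lemma~\ref{lem:canonical-order-legal} requires preservation of the exact positions of $\deq(\NULL)$ events (not merely observational equivalence), which is precisely why Lemma~\ref{lem:seqwit-canonical} was stated with that extra clause. Once one recognises that this strengthened conclusion feeds directly into the hypothesis of Lemma~\ref{lem:canonical-order-legal}, the proof reduces to a two-line invocation in each direction, which I would write out explicitly without further commentary.
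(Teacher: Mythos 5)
Your proposal is correct and follows essentially the same route as the paper's own proof: Lemmas~\ref{lem:canonical-representative}, \ref{lem:canonical-seqwit}, and~\ref{lem:obsequiv-witness} for the forward direction, and Lemmas~\ref{lem:seqwit-canonical} and~\ref{lem:canonical-order-legal} for the reverse. Your explicit check that conditions (i), (ii), and (iv) of the sequential witness supply the $\enq(x)\prec_b\deq(x)$ hypothesis of Lemma~\ref{lem:canonical-order-legal} is a detail the paper leaves implicit, and is a welcome addition.
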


\proof
($\Rightarrow$)
Let $b$ be a legal queue behavior.
By Lemma~\ref{lem:canonical-representative}, there is a canonical behavior $b_c$ such that $b_c\obsequiv b$.
By Lemma~\ref{lem:canonical-seqwit}, $b_c$ has a sequential witness.
By Lemma~\ref{lem:obsequiv-witness}, $b$ has a sequential witness.

($\Leftarrow$)
Let $b$ be a queue behavior and $\mu$ be a sequential witness for $b$.
By Lemma~\ref{lem:seqwit-canonical} and Lemma~\ref{lem:canonical-order-legal}, $b$ is legal.
\qed

\section{Conditions for Queue Linearizability}
\label{sec:conditions}

\subsection*{Generic Necessary and Sufficient Conditions}

We start by reducing the problem of checking linearizability of a given history, $c$, with respect to the queue specification to finding a mapping from its dequeue events to its enqueue events satisfying certain conditions.
Intuitively, we map each dequeue event to the enqueue event whose value the dequeue removed, or to nothing if the dequeue event returns \NULL.
We say that the mapping is {\em safe} if it pairs each {\deq} event with an {\enq} event such that the value removed by the former is inserted by the latter, 
implying that elements are inserted exactly once and removed at most once.
A safe mapping is {\em ordered} if it additionally respects the ordering of events in $c$.
Finally, an ordered mapping is a {\em linearization witness} if all {\NULL} returning {\deq} events see at least one state where the queue is logically empty.
Below, we formalize these notions.

\begin{defi}[Safe Mapping]\mylabel{def:safe}
A total mapping $\Match$ from $\Deq c$ to $\Enq c\cup\{\bot\}$ is {\em safe} for complete history $c$ if \\
(1) for all $d\in \Deq c$, if $\Match(d)\neq\bot$, then $\Val c d =\Val c {\Match(d)}$;\\
(2) for all $d\in \Deq c$, $\Match(d)=\bot$ iff $\Val c d=\NULL$; and\\
(3) for all $d, d'\in \Deq c$, if $\Match(d)=\Match(d')\neq\bot$, then $d=d'$. 
\end{defi}

\begin{defi}[Ordered Mapping]\mylabel{def:ordered}
A safe mapping $\Match$ for $c$ is {\em ordered} if \\
(1) for all $d\in \Deq c$, we have $d\not\prec_c \Match(d)$; and\\
(2) for all $e\in \Enq c$ and $d'\in \Deq c$, if $e\prec_c \Match(d')$, 
then there exists \mbox{$d\in \Deq c$} such that $e=\Match(d)$ and $d' \not\prec_c d$.
\end{defi}
Intuitively, the first condition states that an enqueue event cannot start after the completion of the dequeue event that removed the value inserted by the former.
The second condition states that if two enqueue events $e$ and $e'$ are ordered such that $e\prec_c e'$ and the value inserted by $e'$ is removed by some $d'$, then there must exist a dequeue event $d$ removing what $e$ has inserted and $d'$ cannot complete before $d$ starts.

Let $c$ be a complete history and $\Match$ be ordered for $c$.
Let $d_{\bot}\in \Deq c$ be a dequeue event returning {\NULL}; that is, $\Val c {d_{\bot}}=\NULL$. 
Define $\Bad c {d_{\bot}}\subseteq \Enq c$ as the smallest set consisting of all enqueue events $e$ in $c$ such that either if the matching dequeue $d$ for $e$ exists (i.e. $\Match(d)=e$), then $d$ is after $d_{\bot}$, or there is another $e'$ in $\Bad c {d_{\bot}}$ which precedes either $e$ or the matching dequeue event $d$ of $e$.
Formally, the definition is given inductively as follows:
\begin{align*}
\Badx c {\dhat} 0  = \{ e \in \Enq c \mid{} &
	\dhat \prec_c e ~\lor~ \forall d\in \Deq c\,.\, \Match(d)=e \Rightarrow \dhat \prec_c d\}\\
\Badx c {\dhat} {i+1} = \{ e \in \Enq c \mid{} &
	\exists e_i\in \Badx c {\dhat} i\,.\, e_i\prec_c e\\
& \qquad\qquad \lor \exists d\in \Deq c\,.\, \Match(d)=e \wedge e_i\prec_c d\}
\end{align*}
with
$
\Bad c {\dhat} = \cup_{i\in\mathbb{N}} \Badx c {\dhat} i 
$.

Intuitively, the set $\Bad c {d_{\bot}}$ contains all enqueue events after the completion of which $d_{\bot}$ cannot observe an empty queue.
In other words, if $e\in \Bad c {d_{\bot}}$ and if $e$ completes before $d_{\bot}$ does, then the state of the queue is guaranteed to be non-empty after $e$ completes until the completion of $d_{\bot}$. 

\COMMENT{
$e'\in \Enq c \cap \overline{\After c d}$ such that either there does not exist a $d'\in \Deq c$ such that $\Match(d')=e'$ or $d'$ exists and one of the following holds:
\begin{enumerate}
\item $d\prec_c d'$.
\item there exists $e''\in \Bad c d$ such that either $e''\prec_c e'$ or $e''\prec_c d'$.
\end{enumerate}
An equivalent definition given inductively is as follows:
}

\begin{defi}[Linearization Witness]\mylabel{def:lin-witness}
An ordered mapping $\Match$ for $c$ is a {\em linearization witness} if for any $d\in \Deq c$ with $\Val c d=\NULL$, we have $\Bad c d\cap \Before c d=\emptyset$.
\end{defi}

In the proofs that follow, we sometimes use the following result to prove that a given ordered mapping is a linearization witness.
\begin{lem}\mylabel{lem:alt-lin-witness}
Let $c$ be a complete history, $\Match$ be an ordered mapping for $c$ and $\dhat\in \Deq c$ be such that $\Match(\dhat)=\bot$.
Then, $\Bad c \dhat\cap \Before c \dhat=\emptyset$ iff there exist subsets $\Dhat\subseteq \Deq c$ and $\Ehat\subseteq \Enq c$ such that $(\Dhat\cup \Ehat)\cap \After c \dhat= \emptyset$, $\Dhat\cup \Ehat$ is closed under $\prec_c$, and $\Before c \dhat\cap \Enq c\subseteq \Ehat\subseteq \Match(\Dhat)$.
\end{lem}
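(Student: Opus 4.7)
The lemma is an iff, proved in both directions, with the $(\Rightarrow)$ direction doing most of the work. Both directions rest on two easy facts about $\Bad c \dhat$ that fall out of unrolling its inductive definition: (a) if $e \in \Bad c \dhat$ and $e \prec_c e'$ with $e' \in \Enq c$, then $e' \in \Bad c \dhat$; and (b) if $e \in \Bad c \dhat$ and $e \prec_c d$ with $d \in \Deq c$ and $\Match(d) = e'' \neq \bot$, then $e'' \in \Bad c \dhat$.

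For $(\Leftarrow)$, suppose $\Dhat, \Ehat$ exist as described. I would show $\Badx c \dhat i \cap \Ehat = \emptyset$ by induction on $i$; since $\Bad c \dhat \subseteq \Enq c$ and $\Before c \dhat \cap \Enq c \subseteq \Ehat$, this yields $\Bad c \dhat \cap \Before c \dhat = \emptyset$. For any $e \in \Badx c \dhat i \cap \Ehat$, the inclusion $\Ehat \subseteq \Match(\Dhat)$ supplies some $d \in \Dhat$ with $\Match(d) = e$, and safety makes $d$ the unique such match. The base case $i=0$ contradicts $(\Dhat \cup \Ehat) \cap \After c \dhat = \emptyset$ directly: one disjunct of $\Badx c \dhat 0$ forces $\dhat \prec_c e$ (so $e \in \After c \dhat \cap \Ehat$), the other forces $\dhat \prec_c d$ (so $d \in \After c \dhat \cap \Dhat$). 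For $i+1$, the witnessing $e_i \in \Badx c \dhat i$ satisfies $e_i \prec_c e$ or $e_i \prec_c d$; closure of $\Dhat \cup \Ehat$ under $\prec_c$ then places the enqueue $e_i$ in $\Ehat$, contradicting the induction hypothesis.

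For $(\Rightarrow)$, assuming $\Bad c \dhat \cap \Before c \dhat = \emptyset$, I would build $S = \bigcup_i S_i$ iteratively by $S_0 = \Before c \dhat$ and $S_{i+1} = S_i \cup \{z \mid \exists x \in S_i,\ z \prec_c x\} \cup \{\Match^{-1}(e) \mid e \in S_i \cap \Enq c,\ \Match^{-1}(e)\text{ defined}\}$, and then set $\Ehat = S \cap \Enq c$ and $\Dhat = S \cap \Deq c$. Condition (ii) is closure by construction; (iii) follows from $\Before c \dhat \subseteq S_0 \subseteq S$; and (i) together with (iv) both reduce to the key invariant that $S$ avoids the tainted set $B' = \After c \dhat \cup \Bad c \dhat \cup \{d \in \Deq c \mid \Match(d) \in \Bad c \dhat\}$. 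Avoiding $\After c \dhat$ yields (i), while avoiding $\Bad c \dhat$ guarantees that every $e \in \Ehat$ has a defined matching dequeue, which is pulled into $\Dhat$ at the next step of the construction, yielding (iv).

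The main obstacle is establishing $S \cap B' = \emptyset$. I plan to characterise $z \in S$ by the existence of a finite chain $z = z_0 \rightsquigarrow z_1 \rightsquigarrow \cdots \rightsquigarrow z_n \in \Before c \dhat$ where $a \rightsquigarrow b$ iff $a \prec_c b$ or ($a \in \Deq c$ with $a = \Match^{-1}(b)$), and induct from $z_n$ down to $z_0$. The base case combines the hypothesis with the ordered-mapping condition $z_n \not\prec_c \Match(z_n)$ and the timestamp inequality from $z_n \prec_c \dhat$ to rule out $\Match(z_n) \in \Bad c \dhat$ when $z_n$ is a dequeue. In the inductive step, the $\prec_c$-kind of link uses (a), (b) and the transitivity $\dhat \prec_c z_{i-1} \prec_c z_i$ to propagate badness forward to $z_i$; the most delicate sub-case arises when $z_{i-1}$ is a dequeue whose match $e'$ is in $\Bad c \dhat$, where one chains ordered condition (1) ($e'$'s invocation $\leq$ $z_{i-1}$'s response) with $z_{i-1} \prec_c z_i$ and an inner induction on the level of $e'$ inside $\Bad c \dhat$ to push a bad enqueue forward onto $z_i$. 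The $\Match^{-1}$-kind of link is easier, since $\dhat \prec_c z_{i-1}$ immediately puts the matched enqueue $z_i$ in $\Badx c \dhat 0$.
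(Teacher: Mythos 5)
Your $(\Leftarrow)$ direction is correct and in fact slightly cleaner than the paper's: by strengthening the induction hypothesis to $\Badx c \dhat i\cap \Ehat=\emptyset$ (rather than intersecting with $\Before c \dhat$, as the paper does), the inductive step closes in one application of closure under $\prec_c$ instead of the paper's $k$-step descent. Your $(\Rightarrow)$ direction also takes a genuinely different route: the paper defines $\Ehat$ top-down as $\{e\in\Enq c\mid e\notin \After c \dhat\cup\Bad c \dhat\}$ and then verifies closure, whereas you build $S$ bottom-up as the closure of $\Before c \dhat$ under $\prec_c$-predecessors and $\Match^{-1}$, and must then verify that $S$ avoids the tainted set. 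The construction is viable and the conditions (ii)--(iv) do reduce to the disjointness claim as you say.

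However, there is a gap in the propagation argument for $S\cap B'=\emptyset$. Your invariant $B'=\After c \dhat\cup\Bad c \dhat\cup\{d\mid \Match(d)\in\Bad c \dhat\}$ records nothing about a \NULL-returning dequeue beyond its membership in $\After c \dhat$. So in the chain step where $z_{i-1}\prec_c z_i$ and $z_i$ is a dequeue with $\Match(z_i)=\bot$, neither fact (a) nor fact (b) applies, and knowing $z_i\notin B'$ does not let you conclude that the enqueue $z_{i-1}$ (or the match of the dequeue $z_{i-1}$) is outside $\Bad c \dhat$: a level-$0$ bad enqueue can perfectly well precede a \NULL-dequeue that is itself not in $\After c \dhat$. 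This case genuinely arises in your construction, since $S$ pulls in $\prec_c$-predecessors of every element, including of \NULL-dequeues. The fix is to strengthen the invariant carried down the chain to include the clause ``no element of $\Bad c \dhat$ is $\prec_c$-below $z_i$''; this clause holds at $z_n\in\Before c \dhat$ by the hypothesis $\Bad c \dhat\cap\Before c \dhat=\emptyset$, propagates trivially through $\prec_c$-links by transitivity and through $\Match^{-1}$-links by fact (b), and it is exactly what is needed to rule out $z_{i-1}\in\Bad c \dhat$ when $z_i$ is a \NULL-dequeue. (With this clause in hand, your ``inner induction on the level'' for the matched-dequeue sub-case also collapses to a single unrolling of the definition of $\Badx c \dhat k$.) With that repair the proof goes through.
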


\proof\hfill

\noindent {\bf ($\Rightarrow$)}
Assume that $\Bad c \dhat\cap \Before c \dhat= \emptyset$. Set
\[
\begin{array}{c@{}l}
\Ehat & \defeq \{ e\in \Enq c \mid e\notin(\After c \dhat\cup \Bad c \dhat)\}  \\
D'    & \defeq \{ d'\in \Deq c \mid \exists e\in \Ehat\,.\,\Match(d')=e\}  \\
\Dhat & \defeq D' \cup \{ d' \in \Deq c \mid \Match(d') = \bot \land \exists a\in \Ehat\cup D'.\  d'\prec_c a \} \\
\end{array}
\]
We have to show that $\Ehat$ and $\Dhat$ satisfy the three constraints.
\begin{itemize}
\item If $e\in \Ehat$, then it cannot be in $\After c \dhat$ by construction.
If $d\in \Dhat$, then either $d$ belongs to $D'$ or it is an event that precedes another event in $\Ehat\cup D'$.
If $d\in D'$, then by construction its matching $e=\Match(d)$ cannot be in $\Bad c \dhat$.
This implies that $\dhat\not\prec_c d$, hence $d\notin \After c \dhat$.
If $d\notin D'$, then it is in $\Dhat$ and there is some $d'$ such that $d\prec_c d'$ and $\dhat\not\prec_c d'$ which imply that $\dhat\not\prec_c d$, hence $d\notin \After c \dhat$.
\item Let $a'\in \Ehat\cup \Dhat$ and $a\prec_c a'$.
We do case analysis on $a$.
\begin{itemize}
\item If $a=d\in \Deq c$ with $\Match(d)=\bot$, then by the construction of $\Dhat$, $d\in \Dhat$.
\item If $a=d\in \Deq c$ with $\Match(d)\neq\bot$, then if there is $e\in \Ehat$ such that $\Match(d)=e$, then $d\in \Dhat$.
Assume that $\Match(d)=e\notin \Ehat$.
This can happen when either $e\in \After c \dhat$ or $e\in \Bad c \dhat$. 
If $e$ is in $\After c \dhat$, which by the assumption that $\Match$ is ordered implies that $d$ must complete after $e$ starts ($e\not\prec_c d$ must hold).
This in turn implies that $a'$, beginning after $d$ completes must be in $\After c {\dhat}$, which contradicts the assumption that $a'\in \Ehat$.
If $e$ is in $\Bad c \dhat$, then there must exist $e'\in\Bad c \dhat$ such that either $e'\prec_c e$ or $e'\prec_c d$.
Because $\Match$ is ordered, we have $d\not\prec_c e$.
Together with the assumption that $d\prec_c a'$, these imply $e'\prec_c a'$. 
Now, if $a'\in \Enq c$, then $a'\in \Bad c \dhat$ which contradicts the assumption that $a'\in \Ehat$.
If $a'\in \Deq c$ with $\Match(a')\neq\bot$, that $e'\in \Bad c \dhat$ and $e'\prec_c a'$ hold means that $\Match(a')\in \Bad c \dhat$ which in turn contradicts the assumption that $a'\in \Dhat$.
Finally, if $a'\in \Deq c$ with $\Match(a')=\bot$, then because $a'\in \Dhat$ there is some $d''\in D'$ such that $a'\prec_c d''$ which leads to the same contradiction as the previous case.
\item If $a=e\in \Enq c$, $a'$ is either an enqueue event $e'$ or there is a dequeue event $d'$ such that $e\prec_c d'$ and $\Match(d')\neq \bot$.
For the latter claim, observe that either $\Match(a')\neq\bot$ and we take $d'=a'$ or $\Match(a')=\bot$ and by definition of $\Dhat$ there exists $d'$ such that $a'\prec_c d'$ which by transitivity of $\prec_c$ implies $e\prec_c d'$.
If $e\notin \Ehat$, then either $e\in \Bad c \dhat$ or $e\in \After c \dhat$.
If $e\in \Bad c \dhat$ and $e\prec_c e'$ hold, then $e'$ must also be in $\Bad c \dhat$ contradicting the assumption that $e'\in \Ehat$.
If $e\in \Bad c \dhat$ and $e\prec_c d'$ hold, then $\Match(e')$, which exists because $\Match$ is safe, must be in $\Bad c \dhat$ which contradicts the assumption that $d'\in \Dhat$.
If $e\in \After c \dhat$, then $e\prec_c a'$ implies that $a'\in \After c \dhat$ contradicting the assumption that $a'\in \Ehat\cup \Dhat$.
\end{itemize}
Thus, we conclude that $a\in \Ehat\cup \Dhat$ whenever $a\prec_c a'$ for some $a'\in \Ehat\cup \Dhat$.
\item Let $e\in \Before c \dhat \cap \Enq c$.
By the assumption that $\Bad c \dhat \cap \Before c \dhat = \emptyset$, $e\notin \Bad c \dhat$.
Thus, by construction  $e\in \Ehat$, establishing $\Before c \dhat \cap \Enq c\subseteq \Ehat$.
Since $e\notin \Bad c \dhat$, there exists $d$ such that $\Match(d)=e$ and $d\in D'$, establishing $\Ehat \subseteq \Match(D')\subseteq \Match(\Dhat)$.
\end{itemize}

\noindent{\bf ($\Leftarrow$)}
Assume that there exist $\Dhat\subseteq \Deq c$ and $\Ehat\subseteq \Enq c$ such that all three conditions are satisfied.
We now show that the sets $\Bad c \dhat$ and $\Before c \dhat$ are disjoint.
We show by induction that there is no index $i$ such that $\Badx c {\dhat} i \cap \Before c \dhat\neq \emptyset$.
If $i=0$,  $e\in \Badx c {\dhat} 0$ implies that there does not exist $d$ such that $\Match(d)=e$ and $d_{\bot}\not\prec_c d$.
By the assumption that $\Dhat$ and $\After c \dhat$ are disjoint, we have $d\notin \Dhat$.
But by the assumption that $\Enq c \cap \Before c \dhat \subseteq \Ehat$, we must have $e\in \Ehat$.
This contradicts the assumption that $\Ehat \subseteq \Match(\Dhat)$.

Assume that for all indices less than or equal to $k$, for some $k>0$, the claim holds: $i\leq k$ implies that $\Badx c {\dhat} i$ and $\Before c \dhat$ are disjoint.
Consider the index $k+1$.
Assume that there is $e\in \Badx c {\dhat} {k+1} \cap \Before c \dhat$. 
Then there exists $e_k\in \Badx c {\dhat} k$ such that either $e_k\prec_c e$ or there is $d\in \Deq c$ with $\Match(d)=e$ and $e_k\prec_c d$.
The former case, $e_k\prec_c e$, is not possible since that would imply that $e_k\in \Before c {\dhat}$ and contradict that $\Badx c {\dhat} k \cap \Before c \dhat = \emptyset$.
By the assumption that $\Ehat \cup \Dhat$ is closed under $\prec_c$, $d\in \Dhat$ and $e_k\prec_c d$, we must have $e_k\in \Ehat$.
By the assumption that $\Ehat \subseteq \Match(\Dhat)$ and $e_k\in \Ehat$, there must be $d_k\in \Dhat$ such that $\Match(d_k)=e_k$.
But if $e_k\in \Badx c {\dhat} k$ and $d_k\in \Dhat$, then there must be $e_{k-1}\in \Badx c {\dhat} {k-1}$ such that $e_{k-1}\prec_c e_k$ or $e_{k-1}\prec_c d_k$.
Applying the same arguments as above, we arrive, after $k$ iterations, to the conclusion that there must be some $e_0\in \Badx c {\dhat} 0$ which is also in $\Ehat$.
But by definition, $d_0$ with $\Match(d_0)=e_0$ cannot be in $\Dhat$ (if $d_0$ exists, then $d_0\in \After c {\dhat}$).
This contradicts the assumption that $\Ehat \subseteq \Match(\Dhat)$.
\qed

\begin{defi}
Let $c$ be a complete history with a linearization witness $\Match$.
Call two events $a$ and $a'$ in $c$ {\em overlapping} if neither $a\prec_c a'$ nor $a'\prec_c a$ holds.
We define a relation $\ll_{c,\Match}$ over $\Enq c$. 
For two enqueue events $e_1$ and $e_2$, we have $e_1\ll^1_{c,\Match}e_2$ if $e_1\neq e_2$ and one of the following holds:
\begin{enumerate}
\item $e_1\prec_c e_2$.
\item $e_1$ and $e_2$ are overlapping, there exists $d_1$ such that $\Match(d_1)=e_1$, but there does not exist $d_2$ such that $\Match(d_2)=e_2$.
\item {\sloppy $e_1$ and $e_2$ are overlapping, and there exist $d_1$ and $d_2$ such that $\Match(d_1)=e_1$, $\Match(d_2)=e_2$, and $d_1\prec_c d_2$.}
\item $e_1$ and $e_2$ are overlapping, there exist $d_1$, $d_2$ such that $\Match(d_1)=e_1$, $\Match(d_2)=e_2$, and there exists $d\in \Deq c$ such that $\Val c d=\NULL$, $e_1\notin \Bad c d$ and $e_2\in \Bad c d$.
\end{enumerate}
Let $\ll_{c,\Match}$, called the {\em enq-order}, denote the transitive closure of $\ll^1_{c,\Match}$.
We will drop the subscripts when the history $c$ and its linearization witness either are clear from the context or do not matter.
\end{defi}

\begin{lem}\mylabel{ref:enq-order-total}
Let $c$ be a complete history with linearization witness $\Match$.
Then, the induced enq-order $\ll_{c,\Match}$ is a partial order over $\Enq c$.
\end{lem}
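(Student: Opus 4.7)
The relation $\ll_{c,\Match}$ is defined as the transitive closure of $\ll^1_{c,\Match}$, so transitivity is immediate. It remains to establish irreflexivity: no nontrivial cycle $e_1 \ll^1 e_2 \ll^1 \cdots \ll^1 e_n \ll^1 e_1$ exists. I plan to argue by contradiction, taking a minimum-length cycle and performing a case analysis on which of clauses (1)-(4) justifies each edge.

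To organise the case analysis I will first record two structural observations: clauses (2), (3), and (4) all require their endpoints to overlap, so no such edge can coexist in the opposite direction with a clause (1) edge on the same pair; and clauses (3) and (4) require both endpoints to be matched by $\Match$, while clause (2) is the only one that crosses from matched to unmatched. This strongly restricts the shape that a minimal cycle can take and, in particular, classifies most clause combinations in the two-cycle case as impossible.

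For the two-cycle case, combinations like (1)-(1) and (3)-(3) are ruled out by antisymmetry of $\prec_c$; the mixed combinations involving clauses (1) and (2) fail by the two observations above; and combinations of clause (1) with (3) or (4) between matched endpoints are excluded using the ordered-mapping condition of Definition~\ref{def:ordered}, which forces $d_2 \not\prec_c d_1$ whenever $e_1 \prec_c e_2 = \Match(d_2)$. The delicate combinations are those involving clause (4). For these I plan to exploit the fact that $\Match$ is a linearization witness, i.e.\ $\Bad c d \cap \Before c d = \emptyset$ for every null-returning $d$, together with the contrapositive of the $\Bad$-propagation rule (if $e \notin \Bad c d$ and $e' \prec_c e$ or $e' \prec_c \Match^{-1}(e)$, then $e' \notin \Bad c d$). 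This allows me to transfer $\Bad$-membership along a would-be cycle and eventually contradict either $\Bad c d \cap \Before c d = \emptyset$ or an ordered-mapping clause.

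For cycles of length $n > 2$, I plan to reduce to the two-cycle case by showing that two consecutive steps $e_i \ll^1 e_{i+1} \ll^1 e_{i+2}$ either compose into a single $\ll^1$ edge, contradicting minimality, or else fall into one of the same contradictions used in the base case. The main obstacle I expect is the interplay between clause (3) (real-time ordering of dequeues) and clause (4) (null-dequeues separating events via $\Bad$): a long cycle could in principle invoke several distinct null-returning dequeues to ``flip'' $\Bad$-status back and forth, and ruling this out cleanly will require a careful, global accounting of $\Bad$-membership via the linearization-witness property, rather than a purely local edge-by-edge argument.
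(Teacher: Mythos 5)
Your overall strategy --- take a minimum-length cycle in $\ll^1$ and do a case analysis on which clause justifies each edge --- is the same as the paper's, and your two-cycle analysis is essentially the paper's: overlap requirements kill the mixed (1)/(2) combinations, antisymmetry of $\prec_c$ kills (1)--(1) and (3)--(3), and the ordered-mapping conditions plus reasoning about $\Bad c d$ handle the rest. One remark on the delicate (4)-versus-(4) two-cycle: the paper does not actually invoke the linearization-witness condition $\Bad c d\cap \Before c d=\emptyset$ there; it unfolds the \emph{reasons} why $e_1\in\Bad c {d'}$ and $e_2\in\Bad c d$ hold and closes every resulting case with the interval-order property of $\prec_c$ (from $a\prec_c b$ and $a'\prec_c b'$ conclude $a\prec_c b'$ or $a'\prec_c b$). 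Your plan of transferring membership via the contrapositive of the $\Bad$-propagation rule yields only four negative facts, which do not by themselves contradict anything, so you will need that same unfolding.

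The genuine gap is the reduction from cycles of length $n>2$ to two-cycles. Composing consecutive edges into a single $\ll^1$ edge fails in general because $\ll^1$ is not transitive: two consecutive clause-(3) edges give $d_1\prec_c d_3$, but $e_1\ll^1 e_3$ by clause (3) additionally requires $e_1$ and $e_3$ to \emph{overlap}, which need not hold; and you yourself flag the clause-(3)/clause-(4) interplay as unresolved. The paper closes this gap not by composing edges but by exhibiting \emph{chords} in a minimal cycle: (i) if some edge $e_1\ll^1 e_2$ is justified by clause (4) via a null-dequeue $d$, then every other node $e_j$ of the cycle must satisfy $e_j\notin\Bad c d$ (otherwise $e_1\ll^1 e_j$ is a chord shortening the cycle), whence $e_3\ll^1 e_2$ by clause (4) and the cycle collapses to length $2$; (ii) if two distinct edges are justified by clause (1), the interval-order property of $\prec_c$ yields a chord, so a minimal cycle has at most one $\prec_c$-edge; (iii) all the $d_i$ exist, because an unmatched node can only be left via a clause-(1) edge and ordered-mapping condition (2) then propagates unmatchedness around the whole cycle, forcing a $\prec_c$-cycle. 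With (i)--(iii), every edge of a longer cycle except at most one is a clause-(3) edge, and chaining $d_i\prec_c d_{i+1}$ around the cycle produces either a $\prec_c$-cycle among the dequeues or a contradiction with the single $\prec_c$-ordered pair. Without some argument of this shape your induction does not go through.
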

\proof
We have to show that there does not exist a sequence $e_1,\ldots,e_{k+1}$ of enqueue events such that $e_i\ll^1 e_{i+1}$ for $i\in[1,k]$ and $e_{k+1}= e_1$.
The proof is done by induction on $k$, the number of enqueue events in the sequence.
In the base case, we note that $e_1\ll^1 e_1$ is impossible by definition.
Assume that there is no such sequence of length $k$ or less.
Consider the sequence $e_1,\ldots,e_{k+1}$.
For convenience, we will use $d_i$ to denote the dequeue event in $c$ such that $\Match(d_i)=e_i$. 
If no such dequeue event exists for $e_i$, we will say that {\em $d_i$ does not exist}.
We make the following observations about this sequence:

\begin{enumerate}
\item If $d_i$ does not exist, then $d_{i+1}$ cannot exist.
Assume the contrary and that for some $i$, we have $e_i\ll^1 e_{i+1}$, $d_i$ does not exist and $d_{i+1}$ exists.
By the definition of $\ll^1$, $e_i\ll^1 e_{i+1}$ cannot be due to conditions 2-4, because they all require the existence of $d_i$.
Then, we must have $e_i \prec_c e_{i+1}$.
On the other hand, since $\Match$ is a linearization witness for $c$, by condition 2 of ordered mapping, the existence of $d_{i+1}$ implies the existence of $d_i$, which contradicts the assumption that $d_i$ does not exist.
Because the sequence represents a cycle and $\prec_c$ is a partial order, all $d_i$ exist.

\item There cannot be two distinct pairs of events $(e_i,e_{i+1})$ and $(e_j,e_{j+1})$ such that $e_i\prec_c e_{i+1}$ and $e_j\prec_c e_{j+1}$ for some $i<j$.
If there were, then we would have $e_i \prec_c e_{j+1}$ or $e_j \prec_c e_{i+1}$.
If $e_i\prec_c e_{j+1}$, then $e_1\ll^1 e_2\ldots \ll^1 e_i \ll^1 e_{j+1}\ll^1\ldots \ll^1 e_{k+1}$ hold and this sequence does not contain $e_{i+1}$.
If $e_j\prec_c e_{i+1}$, then $e_{i+1}\ll^1 \ldots \ll^1 e_j \ll^1 e_{i+1}$ hold and this sequence does not contain $e_i$.
Thus, both sequences have less than $k+1$ events, which contradict the inductive hypothesis.
\end{enumerate}

\noindent We first show that none of the orderings in the cycle can be due to condition (4); i.e. there is no $i$ such that $e_i\ll^1 e_{i+1}$ because there is some $d\in \Deq c$ such that $e_i\notin \Bad c d$ and $e_{i+1}\in \Bad c d$. 
We assume the contrary and, without loss of generality, assume that $e_1\ll^1 e_2$ is due to condition (4).
Then, there is $d\in \Deq c$ such that $e_1\notin \Bad c d$ and $e_2\in \Bad c d$.
Observe that for all other enqueue events $e_j$ in the sequence, $e_j\notin \Bad c d$ as otherwise, $e_1\ll^1 e_j$ which results in a shorter cycle contradicting the inductive hypothesis.
In particular, $e_3\notin \Bad c d$, but this immediately leads to $e_3\ll^1 e_2$.
This implies that $e_2,e_3,e_2$ is also a cycle.
Thus, if any consecutive events in the cycle are ordered due to condition (4), then $k\leq 2$.
Clearly $e\ll^1 e$ can never hold due to condition (4), leading to the conclusion that if $e_1\ll^1 e_2$ is due to condition (4), then $k=2$.

Now, assume by contradiction that $e_1\ll^1 e_2\ll^1 e_1$ exists and there is $d$ such that $e_1\notin \Bad c d$ and $e_2\in \Bad c d$.
Since $e_1\notin \Bad c d$, $d_1$ exists.
By the first observation above, $d_2$ also exists.
So, $e_2\ll^1 e_1$ cannot be due to condition (2).
We do a case analysis on the possible justifications for $e_2\ll^1 e_1$.
\begin{itemize}
\item Assume that $e_2\prec_c e_1$ (condition (1)). 
By the assumption that $e_2\in \Bad c d$, we have $e_1\in \Bad c d$, which contradicts the assumption that $e_1\notin \Bad c d$.

\item Assume that $e_2$ and $e_1$ are overlapping and $d_2\prec_c d_1$ (condition (3)).
Because $e_2\in \Bad c d$, either $d\prec_c d_2$ or $d\prec_c e_2$ or there is an enqueue event $e'\in \Bad c d$ such that either $e'\prec_c e_2$ or $e'\prec_c d_2$ holds.
If $d\prec_c d_2$ holds, then by transitivity $d\prec_c d_1$ also holds.
If $d\prec_c e_2$ holds, then because $d_2\prec_c e_2$ cannot hold ($\Match$ is ordered), $d\prec_c d_1$ must hold.
If $e'\prec_c e_2$ holds, then because $d_2\not\prec_c e_2$ holds (due to $\Match$ being ordered) we must have $e'\prec_c d_1$.
Finally, if $e'\prec_c d_2$ holds, then by transitivity $e'\prec_c d_1$ also holds.
All four cases contradict the assumption that $e_1\notin \Bad c d$.

\item Assume that there exists $d'\in \Deq c$ such that $\Match(d')=\bot$, $e_2\notin \Bad c {d'}$ and $e_1\in \Bad c {d'}$ (condition (4)).
We do a case analysis on the possible justifications of $e_1\in \Bad c {d'}$ and $e_2\in \Bad c d$ holding:
\begin{itemize}

\item $d'\prec_c e_1$, and $d\prec_c e_2$. 
Then either $d'\prec_c e_2$ or $d\prec_c e_1$ holds.

\item $d'\prec_c e_1$, and $d\prec_c d_2$.
Then either $d'\prec_c d_2$ or $d\prec_c e_1$ holds.

\item $d'\prec_c e_1$, and there is $e_{b,d}\in \Bad c d$ such that $e_{b,d}\prec_c e_2$.
Then either $d'\prec_c e_2$ or $e_{b,d}\prec_c e_1$ holds.

\item $d'\prec_c e_1$, and there is $e_{b,d}\in \Bad c d$ such that $e_{b,d}\prec_c d_2$.
Then either $d'\prec_c d_2$ or $e_{b,d}\prec_c e_1$ holds.

\item $d'\prec_c d_1$, and $d\prec_c e_2$.
Then either $d'\prec_c e_2$ or $d\prec_c d_1$ holds.

\item $d'\prec_c d_1$, and $d\prec_c d_2$.
Then either $d'\prec_c d_2$ or $d\prec_c d_1$ holds.

\item $d'\prec_c d_1$, and there is $e_{b,d}\in \Bad c d$ such that $e_{b,d}\prec_c e_2$.
Then either $d'\prec_c e_2$ or $e_{b,d}\prec_c d_1$ holds.

\item $d'\prec_c d_1$, and there is $e_{b,d}\in \Bad c d$ such that $e_{b,d}\prec_c d_2$.
Then either $d'\prec_c d_2$ or $e_{b,d}\prec_c d_1$ holds.

\item There is $e_{b,d'}\in \Bad c {d'}$ such that $e_{b,d'}\prec_c e_1$, and $d\prec_c e_2$.
Then either $e_{b,d'}\prec_c e_2$ or $d\prec_c e_1$ holds.

\item There is $e_{b,d'}\in \Bad c {d'}$ such that $e_{b,d'}\prec_c e_1$, and $d\prec_c d_2$.
Then either $e_{b,d'}\prec_c d_2$ or $d\prec_c e_1$ holds.

\item There is $e_{b,d'}\in \Bad c {d'}$ such that $e_{b,d'}\prec_c e_1$, and there is $e_{b,d}\in \Bad c d$ such that $e_{b,d}\prec_c e_2$.
Then either $e_{b,d'}\prec_c e_2$ or $e_{b,d}\prec_c e_1$ holds.

\item There is $e_{b,d'}\in \Bad c {d'}$ such that $e_{b,d'}\prec_c e_1$, and there is $e_{b,d}\in \Bad c d$ such that $e_{b,d}\prec_c d_2$.
Then either $e_{b,d'}\prec_c d_2$ or $e_{b,d}\prec_c e_1$.

\item There is $e_{b,d'}\in \Bad c {d'}$ such that $e_{b,d'}\prec_c d_1$, and $d\prec_c e_2$.
Then either $e_{b,d'}\prec_c e_2$ or $d\prec_c d_1$ holds.

\item There is $e_{b,d'}\in \Bad c {d'}$ such that $e_{b,d'}\prec_c d_1$, and $d\prec_c d_2$.
Then either $e_{b,d'}\prec_c d_2$ or $d\prec_c d_1$ holds.

\item There is $e_{b,d'}\in \Bad c {d'}$ such that $e_{b,d'}\prec_c d_1$, and there is $e_{b,d}\in \Bad c d$ such that $e_{b,d}\prec_c e_2$.
Then either $e_{b,d'}\prec_c e_2$ or $e_{b,d}\prec_c d_1$ holds.

\item There is $e_{b,d'}\in \Bad c {d'}$ such that $e_{b,d'}\prec_c d_1$, and there is $e_{b,d}\in \Bad c d$ such that $e_{b,d}\prec_c d_2$.
Then either $e_{b,d'}\prec_c d_2$ or $e_{b,d}\prec_c d_1$.

\end{itemize}
In all cases the former implication contradicts $e_2\notin \Bad c {d'}$ and the latter implication contradicts $e_1\notin \Bad c d$.

\end{itemize}
Thus, if $e_1,e_2,\ldots,e_{k+1}$ is a cycle in $\ll^1$, none of the pairwise orderings can be due to condition (4).

Now consider the case where all consecutive events are overlapping; that is, $e_i$ and $e_{i+1}$ are overlapping for all $i\in[1,k]$.
Then, by the definition of $\ll^1$ and the first observation, we must have $d_i\prec_c d_{i+1}$.
But this would imply by the transitivity of $\prec_c$ that $d_1\prec_c d_{k+1}=d_1$ which is impossible due to $\prec_c$ being a partial order.

So, there must be exactly one pair $e_i$ and $e_{i+1}$ of events ordered by $\prec_c$.
Without loss of generality assume that $e_1\prec_c e_2$.
By the second observation, $e_2$ is overlapping with all $e_{i+1}$ for $i\in[2,k]$.
In particular, $e_2$ and $e_{k+1}=e_1$ must be overlapping.
That contradicts the assumption that $e_1\prec_c e_2$. 
Thus no sequence of length $k+1$ can have a cycle in the $\ll^1$ relation.
\qed

The main result of this section is stated below.
\begin{thm}\mylabel{thm:witness}
A set of histories $C$ is linearizable with respect to queue iff every $c \in C$ has a completion
$\compl c \in \Compl c$ that has a linearization witness.
\end{thm}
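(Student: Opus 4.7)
The theorem reduces to a per-history claim, so I fix $c$ and connect linearization witnesses to the sequential-witness characterization of legal behaviours from Theorem~\ref{thm:equiv-legal-seqwitness}.

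For $(\Rightarrow)$, suppose some completion $\compl{c} \in \Compl{c}$ has a linearization $s$ that is a legal queue behaviour. By Theorem~\ref{thm:equiv-legal-seqwitness}, $s$ admits a sequential witness $\mu$, which I would adopt directly as $\Match$. The safety conditions (1)--(3) of Definition~\ref{def:safe} coincide with conditions (i)--(iii) of a sequential witness, while the ordered conditions of Definition~\ref{def:ordered} follow from conditions (iv) and (v) together with the fact that $s$ extends $\prec_{\compl{c}}$. For the linearization-witness condition at a $\NULL$-returning $d$, I would apply Lemma~\ref{lem:alt-lin-witness} with $\Ehat$ and $\Dhat$ the enqueues and dequeues appearing before $d$ in $s$: condition (vi) of the sequential witness delivers $\Ehat \subseteq \Match(\Dhat)$, and the remaining requirements (closure under $\prec_{\compl{c}}$, disjointness from $\After{\compl{c}}{d}$, and $\Before{\compl{c}}{d} \cap \Enq{\compl{c}} \subseteq \Ehat$) follow because $s$ extends $\prec_{\compl{c}}$.

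For $(\Leftarrow)$, suppose $\compl{c}$ admits a linearization witness $\Match$. By Lemma~\ref{ref:enq-order-total}, $\ll_{\compl{c}, \Match}$ is a partial order on $\Enq{\compl{c}}$, which I would extend to a total order $<$; clause (1) of $\ll^1$ already forces $<$ to refine $\prec_{\compl{c}}$ on enqueues. I would then build a sequential history $s$ by emitting enqueues in $<$-order, immediately following each $\enq(x)$ by its matching dequeue $\Match^{-1}(\enq(x))$ whenever it exists, and for every $\NULL$-returning $\dhat$ inserting it just after the prefix of events whose underlying enqueue/dequeue population exhausts the sets $\Ehat_\dhat, \Dhat_\dhat$ supplied by Lemma~\ref{lem:alt-lin-witness}. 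At that insertion point every already-emitted enqueue in $\Ehat_\dhat$ has been matched by a dequeue in $\Dhat_\dhat$, so the logical queue is empty and $\dhat$ legally returns $\NULL$. Taking $\Match$ itself as a sequential witness for the resulting $s$ and invoking Theorem~\ref{thm:equiv-legal-seqwitness} yields legality, while the $\prec_{\compl{c}}$-closure of each $\Ehat_\dhat \cup \Dhat_\dhat$ together with clause~(1) of $\ll^1$ guarantees that $s$ preserves $\prec_{\compl{c}}$, making it a genuine linearization of $\compl{c}$.

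The main obstacle is showing that all the local placement choices cohere into one well-defined sequential history: different $\NULL$-returning dequeues may prescribe overlapping sets $\Ehat_\dhat \cup \Dhat_\dhat$, and their prescribed positions must be simultaneously realisable in a single $<$-consistent layout. This is exactly the role of clause~(4) of $\ll^1$, which forces every enqueue in $\Bad{\compl{c}}{\dhat}$ to follow every enqueue outside $\Bad{\compl{c}}{\dhat}$, thereby pinning $\dhat$ between those two groups in any linear extension of $\ll$. A short induction on the number of $\NULL$-returning dequeues, combined with the closure properties from Lemma~\ref{lem:alt-lin-witness}, should reduce the remaining obligations to routine sequential-witness bookkeeping on the constructed~$s$.
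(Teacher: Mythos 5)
Your $(\Rightarrow)$ direction matches the paper's argument: both extract the sequential witness of a legal linearization $s$ via Theorem~\ref{thm:equiv-legal-seqwitness}, verify safety and orderedness from conditions (i)--(v), and discharge the obligation at each \NULL-returning dequeue through Lemma~\ref{lem:alt-lin-witness} with $\Ehat,\Dhat$ taken to be the events preceding $d$ in $s$. That half is fine.

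The gap is in your construction of $s$ for $(\Leftarrow)$. Emitting the enqueues in $<$-order and placing each matched dequeue \emph{immediately after} its enqueue produces a \emph{canonical} behavior in the sense of Section~\ref{sec:seqwit}; such a behavior is legal, but it is in general not a linearization of $\compl c$, because it destroys real-time precedences between a dequeue and events matched to \emph{other} enqueues. Take the already-sequential complete history $\enq(y)\cdot\enq(x)\cdot\deq(y)\cdot\deq(x)$, so that $e_2=\enq(y)\prec_c e_1=\enq(x)\prec_c d_2=\deq(y)\prec_c d_1=\deq(x)$ with $\Match(d_2)=e_2$ and $\Match(d_1)=e_1$. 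Its only linearization is itself, yet your layout yields $\enq(y)\cdot\deq(y)\cdot\enq(x)\cdot\deq(x)$, which places $e_1$ after $d_2$ and thus violates $e_1\prec_c d_2$. Symmetrically, a precedence $d_1\prec_c e_2$ between two $\ll$-incomparable overlapping enqueues can be broken by an unlucky total extension $<$. So the ``routine bookkeeping'' you defer is not routine: no sequence in which every matched dequeue sits adjacent to its enqueue can serve as the required linearization in general, and the difficulty is not confined to the placement of \NULL-returning dequeues as your last paragraph suggests. The paper instead builds $s$ back-to-front by induction on the number of events: it selects an event that is \emph{maximal}, i.e.\ precedes nothing else under $\prec_c$ and is consistent with $<$ (the $<$-largest unmatched enqueue, the dequeue matched to the $<$-largest enqueue, or a \NULL-dequeue with empty $\mathit{Bad}$ set), proves that restricting $\Match$ to the remaining history still gives a linearization witness, recurses, and appends the selected event at the end; precedence preservation then holds by construction and legality is re-established incrementally via the sequential-witness conditions. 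Your argument would need to be restructured along such lines, or at minimum you must decouple the position of each dequeue from that of its matching enqueue and prove separately that the resulting interleaving both respects $\prec_{\compl c}$ and still admits a sequential witness.
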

\newpage

\proof\hfill

\noindent{\bf ($\Rightarrow$)} If $c \in C$ is linearizable with respect to queue, then there is a linearization $s$ of $c$ which is a legal queue behavior.
By Theorem~\ref{thm:equiv-legal-seqwitness}, $s$ has a sequential witness $\mus$.
The mapping $\mus$ satisfies the conditions of a linearization witness since all $\prec_c$ orderings are preserved in $s$.
In particular, $\mus$ is safe because conditions (i) to (iii) of sequential witness imply conditions (1) to (3) of safe mapping.
It is ordered because
\begin{itemize}
\item By condition (iv) of sequential witness, $\mus(d)=e$ implies $e\prec_s d$ and definition of linearizability implies that $d\not\prec_c e$, which is condition (1) of ordered mapping,
\item Assume that there exist $d',e',e$ such that $e'=\mus(d')$ and $e\prec_c e'$.
Then by definition of linearization, $e\prec_s e'$.
By condition (v) of sequential witness, $d=\mus^{-1}(e)$ exists and $d\prec_s d'$.
By definition of linearization, this in turn implies that $d'\not\prec_c d$, which is condition of (2) of ordered mapping.
\end{itemize}
Assume $d=\deq(\NULL)\in \Deq c$.
Define the sets $D_d=\{d'\in \Deq s \mid d'\prec_s d\}$, $E_d=\{e\in \Enq s \mid e\prec_s d\}$.
Observe that $(D_d\cup E_d)\cap \After d c=\emptyset$ because for any $a\in \After d c$, by definition we have $d\prec_c a$, which implies $d\prec_s a$, which in turn implies $a\notin D_d\cup E_d$.
Assume there is $e\in \Before d c \cap \Enq c$.
Then by definition of linearization, $e\prec_s d$.
By construction, $\Before d c\cap \Enq c\subseteq E_d$.
Let $i$ denote the position of $d$ in $s$; i.e. $\seqx s i=d$.
Because $s$ is legal, it has an obs-equivalent canonical behavior, $s'$.
By Lemma~\ref{lem:seqwit-canonical} $\seqx {s'} i=d$.
By definition of canonical behavior, each enqueue event in $\seqx {s'} {1:i-1}$ has a matching dequeue event in $\seqx {s'} {1:i-1}$.
Since $s$ and $s'$ are obs-equivalent, then each enqueue event in $\seqx {s} {1:i-1}$ has a matching dequeue event in $\seqx {s} {1:i-1}$.
Thus, $E_d\subseteq \Match(D_d)$, the inclusion being proper in case $D_d$ contains a {\NULL}-dequeue event (distinct from $d$ since $d\notin D_d$).
Thus, $\Before d c\cap \Enq c\subseteq E_d\subseteq \Match(D_d)$.
Since all conditions of linearization witness per Lem.~\ref{lem:alt-lin-witness} are satisfied for $D_d$ and $E_d$, $\mus$ is a linearization witness.

\noindent{\bf ($\Leftarrow$)}
Let $c$ be a complete history with a linearization witness $\Match$.
Let $<$ denote a total order extension of $\ll$.
That is, $<$ is a total order over $\Enq c$ such that whenever $e\ll e'$, we have $e<e'$.
Let $e^*$ denote the $<$-maximal enqueue event over $<$.
That is, for any $e\in \Enq c$, we have $e<e^*$ whenever $e\neq e^*$.

In order to prove the if-direction ($\Leftarrow$), we will make use of $<$ to construct a sequence $s$ with sequential witness $\mu$.
We actually prove a stronger property, which also requires that if $e<e'$ in $c$ then $e\prec_s e'$.
By Theorem~\ref{thm:equiv-legal-seqwitness}, the result follows.

The construction is given by induction on the number of (completed) events in $c$.
In the base case, there are no events and $\varepsilon$ with empty mapping is the desired sequence.
Assume that the claim holds for all complete concurrent histories with $k$ events or less.
Let $c$ be a complete concurrent history with $k+1$ events and $\Match$ be a linearization witness for $c$.
We first choose an event.

Call event $a\in \Enq c\cup \Deq c$ {\em maximal} (relative to $<$), if there is no event $a'$ such that $a\prec_c a'$ and one of the following holds:
\begin{enumerate}
\item $a=e^*\in \Enq c$, there is no $d^*$ such that $\Match(d^*)=e^*$.
\item $a=d\in \Deq c$ with $\Match(d)\neq\bot$, there is no $d'$ such that $\Match(d)<\Match(d')$.
\item $a=\dhat\in \Deq c$ with $\Match(\dhat)=\bot$, and $\Bad c {\dhat}=\emptyset$.
\end{enumerate}
Let $c$ be a non-empty complete history and $\Match$ be its linearization witness.
We first show that there is at least one event in $c$ that is maximal relative to $<$.
First, observe that if $\Enq c=\emptyset$, then any $d\in \Deq c$ must return {\NULL}; otherwise, $\Match$ cannot be safe.
Then, any $d$ such that no $d'\in \Deq c$ with $d\prec_c d'$ exists is maximal.
Since $\prec_c$ is a partial-order, such $d$ must exist.
If conversely we assume that $\Enq c\neq\emptyset$ and $\Deq c$ is empty, then $e^*$ is maximal.

Assume that $\Enq c$ and $\Deq c$ are non-empty.
If $e^*$ is not maximal, it must be because there is $d^*\in \Deq c$ such that $\Match(d^*)=e^*$.
Then, by definition of $<$ and the assumption that $e^*$ is $<$-maximal, there cannot be $d'\in \Deq c$ such that $d^*\prec_c d'$ if $\Match(d')\neq\bot$.
So, $d^*$ is not maximal only if there is $\dhat\in \Deq c$, $\Match(\dhat)=\bot$ and $d^*\prec_c \dhat$.
Furthermore, the definition of $\ll^1$, that $e^*$ is $<$-maximal and $d^*$ exists imply that for all $e'\in \Enq c$, there is  $d'\in \Deq c$ such that $\Match(d')=e'$.
In particular, this means that for $\dhat\in \Deq c$ such that no $d'\in \Deq c$ with $\dhat\prec_c d'$ exists and $d^*\prec_c \dhat$ with $\Bad c {\dhat}=\emptyset$, setting $\dhat$ as a maximal element.
Thus, the set of maximal events in any non-empty history is non-empty. 

Let $A$ denote the set of maximal elements relative to $<$.
If $A$ contains a dequeue event $d$ such that $\Match(d)=\bot$, then we choose $d$.
Otherwise, if $A$ contains a dequeue event $d^*$ such that $\Match(d^*)\neq\bot$, then we choose $d^*$.
If neither condition holds, we choose $e^*$.

We now show that if $c$ is a non-empty history with linearization witness $\Match$, the history $c'$ obtained by removing the chosen event from $c$ has $\Match'$, which is $\Match$ restricted to the remaining events in $c'$, as a linearization witness.
Before we do a case analysis on the type of the chosen event, we make two observations. 
If $c'$ is obtained from $c$ by removing an event $a$ and a mapping is safe for $c$, then it is also safe for $c'$ when restricted to the $\Deq {c'}$.
Second, removing $a$ from $c$ does not change the relative ordering among the remaining events.
So $b\prec_c d$ holds iff $b\prec_{c'} d$ holds.
In particular, if $a\in \Deq c$ and a mapping is ordered for $c$, then it is ordered for $c'$.

We have three cases to consider for the chosen event:
\begin{itemize} 
\item The chosen event is $\dhat$ with $\Match(\dhat)=\bot$.
Let $d'\in \Deq c$ be such that $\Match(d')=\bot$.
Since $\Match(\dhat)=\bot$, after removing $\dhat$ we have $\Enq {c'}=\Enq c$ and thus $\Bad c {d'}=\Bad {c'} {d'}$.
Additionally, $\Before c {d'}$ is the same as $\Before {c'} {d'}$ when both are restricted to $\Enq c=\Enq {c'}$.
Then, we have
\begin{align*}
&\Before c {d'} \cap \Bad c {d'} &&\\
&\quad=\Before c {d'} \cap \Bad c {d'} \cap \Enq c &&\textnormal{\small [$\Bad c {d'}\subseteq \Enq c$]}\\
&\quad= \Before c {d'} \cap \Bad c {d'} \cap \Enq {c'} &&\textnormal{\small[$\Enq c=\Enq {c'}$]}\\
&\quad= \Before c {d'} \cap \Bad {c'} {d'} \cap \Enq {c'} &&\textnormal{[\small$\dhat\notin \Bad c {d'}$]}\\
&\quad= \Before {c'} {d'} \cap \Bad {c'} {d'} \cap \Enq {c'} &&\textnormal{[\small$\Before c {d'}\cap \Enq c=\Before {c'} {d'}\cap \Enq {c'}$]}\\
&\quad= \Before {c'} {d'} \cap \Bad {c'} {d'} &&\textnormal{\small[$\Bad {c'} {d'}\subseteq \Enq {c'}$]}
\end{align*}
establishing that $\Before {c'} {d'}\cap \Bad {c'} {d'}=\emptyset$.
Thus, $\Match'$ is a linearization witness for $c'$.

\item The chosen event is $d^*\in\Deq c$. 
Observe that $\Match(d^*)$ is the $<$-maximal enqueue event $e^*$ relative to $<$.
By the second observation above, $\Match'$ is ordered for $c'$.
We have to show that for any $\dhat\in \Deq c$, $\Match'(\dhat)=\bot$ is justified; that is, $\Bad {c'} {\dhat}\cap \Before {c'} {\dhat}=\emptyset$.
By the assumption that $\Match$ is a linearization witness for $c$, we have $\Bad c {\dhat} \cap \Before c {\dhat}=\emptyset$.
If $\dhat \prec_c e^*$, then $e^*\in \Bad {c'} {\dhat}$ by definition.
If $\dhat \prec_c d^*$, then $e^*\in \Badx c {\dhat} 0$ and $e^*\in \Badx {c'} {\dhat} 0$, so $\Bad {c'} {\dhat}=\Bad c {\dhat}$.

Then, the interesting case is when $e^*\notin \Bad c {d}$. 
First observe that $\Bad c {d}\neq\emptyset$ iff $e^*\in \Bad c {d}$.
For the only-if ($\Rightarrow$) direction, assume that there is some $e'\in \Bad c {d}$.
By the definition of $\ll^1$, if $e^*\notin \Bad c {d}$ then $e^*\ll^1 e'$ contradicting the $<$-maximality of $e^*$.
The if ($\Leftarrow$) direction is trivial.
This implies that $\Bad c {d}=\emptyset$ because $e^*\notin \Bad c {\dhat}$.
If there are several such \NULL-returning dequeues, choose $\dhat$ such that for any $a\in \Deq c$ with $\dhat\prec_c a$ implies $\Match(d)\neq\bot$.
Intuitively, $\dhat$ is the $<$-maximal among dequeue events returning \NULL.

Now since $d^*$ was chosen, we know that there must be at least one $a$ such that $\dhat\prec_c a$, since otherwise $\dhat$ would have been chosen.
By the assumption about $\dhat$, $a\notin \Deq c$ with $\Match(a)=\bot$.
If $a=e\in \Enq c$, then $e\in \Bad c {\dhat}$ contradicting the assumption that $\Bad c {\dhat}=\emptyset$.
So $a=d\in \Deq c$ with $\Match(d)\neq \bot$.
But then $\Match(a)$, which must exist because $\Match$ is safe, is in $\Bad c {\dhat}$, again contradicting the assumption that $\Bad c {\dhat}=\emptyset$.
So, by contradiction we conclude that there is no such $\dhat$ for which $\Bad c {\dhat}=\emptyset$ and $\Bad {c'} {\dhat}\neq\emptyset$ hold.

\item The chosen event is $e^*\in \Enq c$.
By the assumption about the chosen event, $d^*$ does not exist, so $\Deq c=\Deq {c'}$ and $\Match'$ is safe because $\Match$ is safe.
Because $d^*$ does not exist, if $\dhat$ is such that $\Match(\dhat)=\bot$, then $e^*\in \Bad c {\dhat}$. 
Then, for every such $\dhat$, $\Bad {c'} {\dhat}\subseteq \Bad c {\dhat}$, which means that $\Bad c {\dhat} \cap \Before c {\dhat}=\emptyset$ implies $\Bad c {\dhat} \cap \Before {c'} {\dhat}$. 
So, $\Match'$ is a linearization witness for $c'$.

\end{itemize}

\noindent Now, we know that $\Match'$ is a linearization witness for $c'$ which has exactly $k$ events.
By the inductive hypothesis, $c'$ is linearizable with respect to queue. 
That is, there is a linearization $s'$ of $c'$ which is a legal queue behavior.
By Theorem~\ref{thm:equiv-legal-seqwitness}, $s'$ has a sequential witness $\mu'$.
We claim that $s=s'\cdot a$, where $a$ is the chosen element in $c$ as described above, is a legal queue behavior.
Additionally, we will also show that for any two enqueue events $e$ and $e'$ both in $\Enq c$, $e<e'$ implies $e\prec_s e'$.

Assume that the chosen element was $a=\dhat\in \Deq c$ such that $\Match(\dhat)=\bot$.
We set $\mu=\mu'[a\mapsto \bot]$.
Observe that by the assumption that $\dhat$ is a chosen element, we must have $\Bad c {\dhat}=\emptyset$.
This implies that for all $e\in \Enq c$, there is $d\in \Deq c$ such that $\Match(d)=e$; as otherwise, $e$ would be in $\Bad c {\dhat}$.
Since all events of $c'$ are the same as the events of $s'$, the sets $\{e\in \Enq c \mid e\prec_s \dhat\}=\Enq c$ and $\{d\in \Deq c \mid d\prec_s \dhat \wedge \mu(d)\neq\bot\}$ have the same cardinality.
These along with the inductive hypothesis that $\mu'$ is a sequential witness for $s'$ imply that all six conditions of a sequential witness are satisfied for $\mu$ and $s$.
Because the relative ordering of events in $\Enq c$ in $s'$ remains the same in $s$, $e\prec_{s'} e'$ implies $e\prec_s e'$, and by induction hypothesis this can happen only when $e<e'$.

Assume that the chosen element was $a=d^*\in \Deq c$ such that $\Match(d^*)=e^*$. 
We set $\mu=\mu'[a\mapsto e^*]$.
Because $\Match$ was safe for $c$, $e^*$ exists and $\mu$ is well-defined.
By the inductive hypothesis, $e^*$ is in $s'$ and hence $e^*\prec_s d^*$.
Again by the inductive hypothesis, for any $e\in \Enq c$, we have $e\prec_s e^*$.
Since $d^*$ is the last event in $s$, no event can follow $d^*$ in $s$.
In particular, there is no $d'\in \Deq c$ such that $d^*\prec_s d'$. 
These along with the inductive hypothesis imply that $\mu$ is a sequential witness for $s$.
Similar to the previous case, $e\prec_{s'} e'$ implies $e\prec_s e'$ and by inductive hypothesis this can happen only when $e<e'$.

Assume that the chosen element was $a=e^*$.
We take $\mu=\mu'$.
Because $e^*$ is chosen, $d^*$ does not exist in $c$.
Furthermore, since $e^*$ is the last event in $s$, no other event can follow $e^*$ in $s$.
These observations along with the inductive hypothesis imply that $\mu$ is a sequential witness for $s$.
Observe also that $e^*$, being the last element in $s$, also satisfies the condition that it should not precede any other enqueue event in $s$, satisfying the condition that $e<e'$ implies $e\prec_s e'$.
\qed

\subsection*{Necessary and Sufficient Conditions for Complete Histories}

We now focus on complete histories, namely ones with no pending events. 
We observe that whether a history is not linearizable can always be determined by examining the dequeued values. 
Let $c$ be a complete history.
In order to simplify the technical presentation we assume that each value is enqueued at most once.\footnote{In case there are multiple occurring values, this is akin to guessing the mapping {\Match}; it is enough that at least one guess satisfies the criteria (absence of violations).}
The possible violations in $c$ are: 
\begin{description}
\item[(\VFresh)] A dequeue event returns a value not previously inserted by any enqueue event.
Formally, there exists a value $x\neq\NULL$ such that $\deq(x)\in \Deq c$ and either $\enq(x) \notin \Enq c$ or $\deq(x) \prec_c \enq(x)$.
\item[(\VRepet)] Two dequeue events return the value inserted by the same enqueue event.
Formally, there exist two dequeue events $d,d'\in \Deq c$ such that $\Val c d=\Val c {d'}\neq\NULL$.
\item[(\VOrd)]
Two values are enqueued in a certain order, and a dequeue returns the later value before any dequeue of the earlier value starts.
Formally, there exist values $x$, $y$ such that $\enq(y) \prec_c \enq(x)$, $\deq(x)\in \Deq c$, and either $\deq(y)\notin \Deq c$ or $\deq(x) \prec_c \deq(y)$.
\item[(\VWit)] A dequeue event returning \NULL\ even though the queue is never logically empty during the execution of the dequeue event.
Formally, let $c=c_0\cdot \deq_i(\NULL) \cdot c_d\cdot \deq_r(\NULL) \cdot c_3$, where $c_0,c_d,c_3$ represent subsequences of $c$.
Then for any choice of $c_1$ and $c_2$ such that $c_d=c_1\cdot c_2$, there exists an $\enq(x)\in \Enq c$ completed in $c_0\cdot \deq_i(\NULL) \cdot c_1$ and $\deq_i(x)$ does not occur in $c_0\cdot \deq_i(\NULL) \cdot c_1$.
\end{description}\smallskip

\noindent We have the following result which ties the above violation types to linearizable queues.

\begin{prop}\mylabel{prop:compl-viol}
A complete history $c$ is linearizable with respect to queue iff it has none of the \VFresh, \VRepet, \VOrd, \VWit\ violations.
\end{prop}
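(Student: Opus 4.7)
The plan is to reduce both directions to Theorem~\ref{thm:witness} via the \emph{canonical} mapping $\Match$ defined by $\Match(d) = \enq(\Val c d)$ when $\Val c d \neq \NULL$ and $\Match(d) = \bot$ otherwise. Since every value is enqueued at most once, this is the unique candidate for a safe mapping, so linearizability of the complete history $c$ reduces to $\Match$ being a linearization witness. For the $(\Rightarrow)$ direction I would translate each violation into the failure of a defining condition: the two subcases of \VFresh contradict safety clause~(2) (when $\enq(x) \notin \Enq c$) and orderedness clause~(1) (when $\deq(x) \prec_c \enq(x)$); \VRepet contradicts safety clause~(3); and \VOrd contradicts orderedness clause~(2) via a direct unpacking of its definition. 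For \VWit at a \NULL-returning dequeue $d$, I would invoke Lemma~\ref{lem:alt-lin-witness} to extract sets $\Dhat, \Ehat$, and take $T$ to be the maximum position among $\deq_i(\NULL)$ and the invocations of the events in $\Dhat \cup \Ehat$. Disjointness $(\Dhat \cup \Ehat) \cap \After c d = \emptyset$ places $T$ strictly before $\deq_r(\NULL)$, giving a valid split $c_d = c_1 \cdot c_2$; a short $\prec_c$-closure argument then shows that any enqueue $e$ completed by $T$ lies in $\Ehat \subseteq \Match(\Dhat)$, so its matching dequeue is invoked by $T$, refuting \VWit at this split.

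For the $(\Leftarrow)$ direction I again take the canonical $\Match$. Safety follows from absence of \VFresh and \VRepet, and the two orderedness conditions follow from absence of \VFresh and of \VOrd respectively (for the second, instantiate the \VOrd witness with $y = \Val c e$ and $x = \Val c {e'}$ when $e \prec_c e' = \Match(d')$). The remaining obligation, for each $d$ with $\Match(d) = \bot$, is $\Bad c d \cap \Before c d = \emptyset$. I plan to verify this via Lemma~\ref{lem:alt-lin-witness}: absence of \VWit provides a split $c_d = c_1 \cdot c_2$ such that every enqueue $e$ completed by the endpoint $T$ of $c_0 \cdot \deq_i(\NULL) \cdot c_1$ has $\deq_i(\Val c e)$ already invoked by $T$. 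Setting
\[
\Ehat = \{e \in \Enq c \mid e \text{ responds by } T\}, \qquad
\Dhat = \{d' \in \Deq c \mid d' \text{ is invoked by } T\},
\]
I would verify the three hypotheses of the lemma: disjointness from $\After c d$ holds because $T$ precedes $\deq_r(\NULL)$; $\prec_c$-closure is immediate since any predecessor of an event in $\Dhat \cup \Ehat$ has completed by $T$; $\Before c d \cap \Enq c \subseteq \Ehat$ is direct because $d$'s invocation is before $T$; and $\Ehat \subseteq \Match(\Dhat)$ is precisely the absence-of-\VWit guarantee.

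The main obstacle I expect is the symmetric handling of the split position $T$ in the \VWit subcases: in both directions $T$ must lie inside $d$'s execution interval yet be late enough to cover all relevant matching-dequeue invocations, and one must carefully track enqueues that overlap $d$ and whose responses straddle $T$. The remaining cases reduce to direct readings of the four violation types against the safety and orderedness clauses, and I expect them to be routine bookkeeping.
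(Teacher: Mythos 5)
Your proposal is correct, and for the \VFresh, \VRepet\ and \VOrd\ cases it coincides with the paper's proof: both directions reduce these three violations to direct readings of the safety and orderedness clauses of the (unique, by the at-most-once-enqueued assumption) value-matching $\Match$, with linearizability supplied by Theorem~\ref{thm:witness}. Where you genuinely diverge is in the treatment of \VWit. The paper works with the $\Bad c \dhat$ sets directly: in the ($\Rightarrow$) direction it picks the element of $\Bad c \dhat$ whose response occurs earliest and splits $c_d$ just before that response to derive a contradiction, and in the ($\Leftarrow$) direction it runs an induction on the stratification index $j$ of $\Badx c \dhat j$ to show no element of $\Bad c \dhat$ completes in the witnessing prefix $c_p$. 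You instead route both directions through Lemma~\ref{lem:alt-lin-witness}, translating between a good split position $T$ and the pair $(\Dhat,\Ehat)$; your choices of $T$ (the latest invocation among $inv(\dhat)$ and the events of $\Dhat\cup\Ehat$, which lies strictly before $res(\dhat)$ by the disjointness-from-$\After \dhat c$ clause) and of $\Ehat,\Dhat$ (enqueues responding, respectively dequeues invoked, by $T$) do satisfy the lemma's three hypotheses, and the $\prec_c$-closure step you flag as the delicate point indeed goes through: an enqueue completing by $T$ precedes the event realizing $T$ and hence lands in $\Ehat$ by closure or by $\Before \dhat c\cap\Enq c\subseteq\Ehat$. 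The trade-off is that your version is more uniform and hides the combinatorics of $\Bad$ inside the already-proved lemma, at the cost of the bookkeeping around the split position $T$; the paper's version avoids the lemma here but must re-do an induction over the $\Badx$ strata by hand. Both are complete proofs.
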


\begin{proof}\hfill

\noindent{\bf($\Rightarrow$)} If $c$ is linearizable with respect to queue, then by Theorem~\ref{thm:witness}, $\compl{c}=c$ has a linearization witness $\Match$.
We show by contradiction that none of the four violations can happen in $c$.

\begin{itemize}
\item Assume that $c$ has \VFresh.
Then there exists a dequeue event $d\in \Deq c$ such that $\Val c d\neq \NULL$ and either $e=\Match(d)$ does not exist or $d\prec_c e=\Match(d)$.
That $e=\Match(d)$ does not exist is impossible because by the second condition of safe mapping, $\Match(d)\neq\bot$ and by the first condition of safe mapping $\Match(d)\in \Enq c$.
That $d\prec_c e=\Match$ holds is impossible because by the first condition of safe mapping, $d\not\prec_c \Match(d)=e$.

\item Assume that $c$ has \VRepet.
Then there exist $d,d'\in \Deq c$ with $\Val c d=\Val c {d'}\neq \bot$.
This is impossible by the third condition of safe mapping.

\item Assume that $c$ has \VOrd.
Then there exist $e,e'\in \Enq c$, $d'\in \Deq c$ such that $e\prec_c e'=\Match(d')$ and either $d\in \Deq c$ such that $\Match(d)=e$ does not exist or such a $d$ exists and $d'\prec_c d$.
Both possibilities contradict the second condition of ordered mapping.

\item Assume that $c$ has \VWit.
Then $c$ is of the form $c_0\cdot inv(\dhat)\cdot c_d\cdot res(\dhat)\cdot c_3$ such that $\Match (\dhat)=\bot$, and for every possible partitioning of $c_d=c_1\cdot c_2$, there is an enqueue event $e\in \Enq {c_p}$ with $c_p=c_0\cdot inv(\dhat)\cdot c_1$ such that $e$ is completed in $c_p$ and there is no dequeue event $d$, pending or completed, in $c_p$ such that $\Match(d)=e$.
First, observe that by choosing $c_2=\varepsilon$ (resulting in $c_p=c_0\cdot inv(\dhat)\cdot c_d$), we conclude that there is at least one enqueue event $e_0\in \Enq {c_p}$ whose matching dequeue event $d_0$ is not in $\Enq {c_p}$; that is, $\dhat\prec_c d_0$ if $d_0\in \Deq c$.
This implies that $e_0\in \Bad c {\dhat}$.
Because $\Match$ is a linearization witness for $c$, we must have $\Bad c {\dhat}\cap \Before c {\dhat}=\emptyset$.
In other words, all enqueue events $e\in \Bad c {\dhat}$ must not belong to $\Before c {\dhat}$.
This implies that if $e\in \Bad c {\dhat}$ then $res(e)$ must happen after $inv(\dhat)$.
Let $e\in \Bad c {\dhat}$ be chosen such that for any other $e'\in \Bad c {\dhat}$, $res(e)$ occurs before $res(e')$ in $c$.
Let $c_d=c_1\cdot c_2$ with $c_2=res(e)\cdot c_2'$. 
By the assumption that there is a $\VWit$ violation for $\dhat$, there must be an enqueue event $e'$ in $c_p=c_0\cdot inv(\dhat)\cdot c_1$ such that if there is $d'\in \Deq c$ with $\Match(d')=e'$, then $d'$ is neither completed nor pending in $c_p$.
This implies that $inv(d')$ if it exists must occur after $res(e)$.
Because $e$ is not completed in $c_p$ (it is completed in $c_p\cdot res(e)$), $e'\neq e$.
These two facts imply that either $d'\notin \Deq c$ or if $d'\in \Deq c$ then $e\prec_c d'$ holds.
But this implies that $e'\in \Bad c {\dhat}$.
This contradicts the assumption that $res(e)$ is the first enqueue event in $\Bad c {\dhat}$ to complete in $c$.
Such an $e$ does not exist implies that there is at least one enqueue event $e_b$ in $\Bad c {\dhat}$ which is completed in $c_0$, which implies that $e_b\in \Before c {\dhat}$. 
Finally, this contradicts the assumption that $\Bad c {\dhat}$ and $\Before c {\dhat}$ are disjoint.
\end{itemize}

\noindent{\bf($\Leftarrow$)} 
Assume that there exists a complete history $c$ in which none of the violations happen.
We will show that the mapping that pairs events enqueueing and dequeueing the same value is a linearization witness for $c$.

Let $D_v=\{\deq(x)\in {\Deq c} \mid x\neq\NULL\}$ denote the set of all {non-\NULL} returning dequeue events of $c$.
Similarly, let $D_n=\Deq{c}\setminus D_v$ denote the set of all {\NULL} returning dequeue events of $c$.
Let $M_v$ be the mapping from $D_v$ to $\Enq c$ such that $M_v(d)=e$ iff $\Val c d = \Val c e$. 
Let $M_n$ be such that all $d\in D_n$ are mapped to $\bot$.
We claim that $\Match$ defined as 
\[
\Match(d) \defeq 
	\begin{cases}
		M_v(d) & \text{if } d\in D_v \\
		\bot & \text{if } d\in D_n
	\end{cases}
\]
is a linearization witness for $c$.

First, observe that $M_v$ is a total mapping because $c$ does not have {\VFresh}.
Furthermore, because $c$ does not contain {\VRepet}, $\Match$ is a safe mapping by construction.
$\Match$ satisfies the first condition of an ordered mapping because $c$ does not have {\VFresh}.
$\Match$ satisfies the second condition of an ordered mapping because $c$ does not have {\VOrd}.
Thus, $\Match$ is also an ordered mapping.

Let $\dhat\in D_n$ be a {\NULL}-returning dequeue event in $c$.
We have to show that $\Before c {\dhat}$ and $\Bad c {\dhat}$ are disjoint.
Because $c$ has no {\VWit} violation, there must be a prefix $c_p=c_0\cdot inv(\dhat)\cdot c_1$ of $c$ such that if $e$ is an enqueue event is completed in $c_p$ then its matching dequeue event $d$ (i.e. $\Match(d)=e$) is either pending or completed in $c_p$.
In other words, if $res(e)$ occurs in $c_p$, then so does $inv(d)$.
Let $e_j\in \Bad c {\dhat}$ be such that $e_j\in \Badx c {\dhat} j$, for any $e_k\in \Bad c {\dhat}$ we have $j\leq k$, and $e_j$ is completed in $c_p$.
If there is no such $e_j$, that is, if $\Bad c {\dhat}$ is empty, then we are done.
Otherwise, observe that $j\neq 0$ because by the absence of {\VWit}, there is $d_j$ such that $\Match(d_j)=e_j$ and $\dhat\not\prec d_j$; in particular, $inv(d_j)$ occurs in $c_p$.
But $j>0$ implies that there is $e_{j-1}\in \Badx c {\dhat} {j-1}$ such that either $e_{j-1}\prec_c e_j$ or $e_{j-1}\prec_c d_j$.
Both cases imply that $e_{j-1}$ must be completed in $c_p$, contradicting the assumption that $j$ was minimal.
Thus, there are no enqueue events in $\Bad c {\dhat}$ which are completed in $c_p$.
Since $\Before c {\dhat}$ is contained in the set of completed events of $c_p$, we conclude that $\Before c {\dhat}$ and $\Bad c {\dhat}$ are disjoint.

This concludes the proof that $\Match$ is a linearization witness for $c$.
\end{proof}

We remark that none of the violations  mentions the possibility of an
element inserted by an enqueue being lost forever.  This is intentional, as
such histories are ruled out by the following proposition.

\begin{prop}
Given an infinite sequence of complete histories $c_1,c_2,\ldots$ 
not containing any of the violations above, 
where for every $i$, $c_i$ is a prefix of
$c_{i+1}$, and the number of dequeue events in $c_i$ is less than that of $c_{i+1}$,
if $c_1$ contains an enqueue event $\enq(x)$,
then exists some $c_j$ containing $\deq(x)$.
\end{prop}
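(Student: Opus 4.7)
The plan is to argue by contradiction, leveraging the absences of \VOrd\ and \VWit. Suppose $c_1$ contains $\enq(x)$ while $\deq(x) \notin \Deq{c_j}$ for every $j$. The crucial preliminary observation is that each $c_i$ is complete and a prefix of $c_{i+1}$, so any event occurring in some $c_j$ but not in $c_1$ has its invocation strictly after every action of $c_1$. Since $\enq(x)$ is already completed in $c_1$, this means $\enq(x) \prec_{c_j} a$ for every such ``new'' event $a$ in any $c_j$.

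First I would show, using absence of \VOrd, that only finitely many enqueue events can ever have a matching dequeue in some $c_j$. Indeed, if $\enq(y)$ appears in some $c_j$ with $\enq(x) \prec_{c_j} \enq(y)$ and $\deq(y) \in \Deq{c_j}$, then the pair $(x,y)$ immediately yields a \VOrd\ violation. Hence every $\enq(y)$ whose matching dequeue appears in some $c_j$ must be $\prec_{c_j}$-before $\enq(x)$ or overlap with it; by the preliminary observation all such enqueues lie entirely within $c_1$, so they are finitely many. Combined with the absence of \VRepet\ (which ensures at most one dequeue per enqueue), this bounds uniformly in $j$ the number of non-\NULL\ dequeues in $c_j$ by some finite constant $N$.

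Since $|\Deq{c_j}|$ is strictly increasing, for sufficiently large $j$ it exceeds $N$, so some $c_j$ must contain a \NULL-returning dequeue $\dhat$ absent from $c_1$. Its invocation thus lies strictly after the response of $\enq(x)$. Writing
\[
c_j = c_0 \cdot \deq_i(\NULL) \cdot c_d \cdot \deq_r(\NULL) \cdot c_3
\]
with $\deq_i(\NULL)$, $\deq_r(\NULL)$ the invocation and response of $\dhat$, we see that $\enq(x)$ is completed entirely inside $c_0$, while $\deq_i(x)$ never occurs anywhere in $c_j$. Then for \emph{every} splitting $c_d = c_1' \cdot c_2'$, the same $\enq(x)$ witnesses the clause in the definition of \VWit, which is the required contradiction.

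The main obstacle, I expect, is just the careful handling of the universal quantifier in the definition of \VWit\ (a bad enqueue is required for \emph{every} partition of $c_d$); happily, the specific enqueue $\enq(x)$ serves as a witness uniformly across all partitions, so once the \VOrd-based counting argument has confined non-\NULL\ dequeues to a finite set, the contradiction is essentially immediate.
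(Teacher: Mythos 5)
Your proof is correct (modulo one small omission noted below) and uses the same basic ingredients as the paper's---contradiction plus a pigeonhole argument pitting the unboundedly growing number of dequeue events against the finitely many enqueue events of $c_1$---but it arranges them dually. The paper first uses the absence of \VWit\ to rule out \NULL-returning dequeues invoked after $\enq(x)$, then uses \VFresh\ and \VRepet\ together with the growing number of dequeues to force values of new enqueues to be consumed, and lands the final contradiction on \VOrd. You instead first use the absence of \VOrd\ to show that no enqueue preceded by $\enq(x)$ can ever have its value dequeued, bound the non-\NULL\ dequeues uniformly by a constant $N$, and land the final contradiction on \VWit. Your arrangement has the advantage that the counting step is an explicit uniform bound rather than the paper's informal ``eventually,'' and that the universally quantified clause in the definition of \VWit\ is discharged by the single witness $\enq(x)$ uniformly over all partitions of $c_d$, as you observe; the paper's arrangement avoids having to exhibit a \NULL-returning dequeue and unfold the \VWit\ decomposition at all.

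One small gap: when you bound the number of non-\NULL\ dequeues by $N$, you invoke only \VRepet\ (``at most one dequeue per enqueue''). That step also needs the absence of \VFresh: without it, a non-\NULL\ dequeue could return a value that was never enqueued, and such dequeues are not counted among ``matching dequeues of enqueues lying in $c_1$.'' Since absence of \VFresh\ is among the hypotheses, the fix is one sentence---every non-\NULL\ dequeue has a matching enqueue, which by your \VOrd\ step must lie in $\Enq{c_1}$, and by \VRepet\ the assignment of dequeues to matching enqueues is injective, giving $N = |\Enq{c_1}|$.
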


\begin{proof}
We prove this by contradiction. 
If there is no $\deq(x)$ event, then $\enq(x)$ is always in the queue, 
and so, from the absence of \VWit\ violations, none of the dequeue events
following $\enq(x)$ can return \NULL.
Also, since dequeue events cannot return values that were not previously
enqueued {\VFresh} and cannot return the same value multiple times {\VRepet},
and since the number of dequeue events is increasing, then there must also be
new enqueue events.
However, only finitely many of those are not preceded by $\enq(x)$ which
completes in $c_1$. 
This means that eventually one dequeue event has to return an element inserted
by $\enq(y)$ such that $\enq(x)\prec_{c_j}\enq(y)$, which is \VOrd.
\end{proof}

For checking purposes, we find it useful to re-state the third violation as the following
equivalent proof obligation.
\begin{description}
\item[(\POrd)]
For any enqueue events $e_1$ and $e_2$ with $e_1\prec_c e_2$ and \mbox{$\Val{c}{e_1}\neq\Val{c}{e_2}$},
a dequeue event $d_2$ cannot return $\Val{c}{e_2}$ 
if $\Val{c}{e_1}$ is not removed in $c$ or is removed by $d_1$ with $d_2\prec_c d_1$.
\end{description}
Thus, to check this property, it suffices to come up with an overapproximation of
all those executions satisfying the premise of \POrd, and prove that such executions
cannot end with a dequeue event (in the sense that no other method is preceded 
by that dequeue event) returning the value of $e_2$.

\subsection*{Necessary and Sufficient Conditions for Purely-Blocking Queues}

There is a subtle complication in the statement of Theorem~\ref{thm:witness}.
The witness mapping is chosen relative to some completion of the concurrent history under
consideration. 
However, because implementations may become blocked, such completions may actually never be reached. 
This means that one cannot reason about the correctness of a queue implementation by considering only the reachable states of the implementation.
What we would ideally like to do is to claim that if the implementation violates linearizability, then there is a finite complete induced history of the implementation which has no witness. 
In other words, if the implementation contains an incomplete execution trace whose induced (incomplete) history has no witness, then that execution trace is the prefix of a complete execution trace of the implementation.

Let $C$ be the set of all induced histories of a library implementation.
We call the library implementation \emph{completable} iff for every history $c
\in C$, we have $\Compl c \cap C \neq \emptyset$.
For completable implementations, it suffices to consider only complete execution traces.

\begin{thm}\label{thm:no-violations}
A completable queue implementation is linearizable 
iff all its complete histories have none of the \VFresh, \VRepet, \VOrd\ and \VWit\ violations.
\end{thm}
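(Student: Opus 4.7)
The plan is to derive this theorem by combining Theorem~\ref{thm:witness} (which reduces linearizability of a set of histories to the existence of a linearization witness for some completion of each history) with Proposition~\ref{prop:compl-viol} (which characterizes linearizability of a single complete history by the absence of the four violations). The completability hypothesis will serve as the bridge between these two results, and is what does the real work in the non-trivial direction.

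For the forward direction, suppose the implementation is linearizable. Then, in particular, every complete history $c \in C$ is linearizable (since it lies in $\Compl{c}$ trivially and Theorem~\ref{thm:witness} then gives it a linearization witness, which by Proposition~\ref{prop:compl-viol} is equivalent to absence of the four violations). This direction does not use completability at all.

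For the backward direction, assume every complete history of the implementation is free of the four violations. I would apply Theorem~\ref{thm:witness} and show that for every induced history $c \in C$ (possibly incomplete) there exists a completion with a linearization witness. Given $c \in C$, completability supplies some $\compl{c} \in \Compl{c} \cap C$, which is a \emph{complete} history of the implementation. By hypothesis, $\compl{c}$ has none of the four violations, so Proposition~\ref{prop:compl-viol} produces an explicit linearization witness $\Match$ for $\compl{c}$. Since $\compl{c}$ is a completion of $c$ admitting a linearization witness, Theorem~\ref{thm:witness} yields linearizability of $c$, and hence of the whole set $C$.

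The only subtle point, and the main obstacle if completability were dropped, is that the completion $\compl{c}$ required by Theorem~\ref{thm:witness} could be an arbitrary extension of $c$ rather than one realizable by the implementation; then the hypothesis about complete histories in $C$ would not transfer to the relevant $\compl{c}$, since $\compl{c}$ might not belong to $C$ at all. Completability is precisely what permits us to pick $\compl{c} \in C$, and thereby to invoke Proposition~\ref{prop:compl-viol} on a complete history that is genuinely produced by the implementation.
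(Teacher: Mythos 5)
Your proposal is correct and follows essentially the same route as the paper: the forward direction is Proposition~\ref{prop:compl-viol} applied to complete histories, and the backward direction uses completability to obtain a completion $\compl{c}$ that is itself a violation-free history of the implementation, hence linearizable, hence a linearization of $c$. The paper phrases the last step directly via ``$\compl{c}$ has a linearization, and therefore so does $c$'' rather than re-invoking Theorem~\ref{thm:witness}, but this is only a cosmetic difference.
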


\begin{proof}\hfill

\noindent{\bf($\Rightarrow$)} If some complete history has a violation, by Prop.~\ref{prop:compl-viol}, it has no linearization, contradicting the assumption that the implementation is linearizable.

\noindent{\bf($\Leftarrow$)}
Consider an arbitrary induced history $c$ of the implementation. 
As the implementation is completable, 
there exists a completion $\compl c \in \Compl c$ that is a valid induced history of the implementation.
From our assumptions, $\compl c$ cannot have a violation, and so by
Prop.~\ref{prop:compl-viol}, $\compl c$ has a linearization, and therefore so does $c$.
\end{proof}

Since it may not be obvious how to easily prove that an implementation is
completable, we introduce the stronger notion of purely-blocking
implementations, that is straightforward to check.
We say that an implementation is \emph{purely-blocking} when at any reachable state,
any pending method, if run in isolation will terminate or its entire execution does
not modify the global state.
Formally, let $\tau=\tau_0\cdot (t:enter(m))\cdot \tau_1$ be an execution trace of the implementation in which $m$ executed by $t$ is pending, i.e. $(t:exit(m))$ does not occur in $\tau_1$.
The pending method $m$ is called {\em pure after $\tau$} if for any sequence $\tau_e$ in which no action of $m$ by $t$ occurs and any sequence $\tau_m$ in which only actions of $m$ by $t$ occur, $\tau\cdot \tau_e$ is an execution trace of the implementation iff $\tau\cdot \tau_m\cdot \tau_e$ is an execution trace of the implementation.
The execution trace $\tau$ is called {\em obstruction-free for $m$} if there is another execution trace $\tau'=\tau\cdot \tau_2\cdot (t:exit(m))$ of the implementation such that all actions in $\tau_2$ belong to $m$ executed by $t$.
Then, the implementation is purely-blocking if for each execution trace $\tau$ of the implementation and pending method $m$ in $\tau$, either $\tau$ is obstruction-free for $m$ or $m$ is pure after $\tau$.

\begin{prop}\label{prop:pb}
Every purely-blocking implementation is completable.
\end{prop}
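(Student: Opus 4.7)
Let $c$ be any induced history of the implementation, witnessed by an execution trace $\tau$ with $h(\tau)=c$. The plan is to construct an execution trace $\tau^*$ whose induced history lies in $\Compl{c}$, by processing the pending methods of $\tau$ in two phases driven by the two cases of the purely-blocking hypothesis.

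In the first phase I would repeatedly complete obstruction-free pending methods: as long as the current trace $\tau'$ has some pending method $m$ for which $\tau'$ is obstruction-free, replace $\tau'$ by the extension $\tau' \cdot \tau_2 \cdot (t:exit(m))$ guaranteed by the definition of obstruction-free, where $\tau_2$ contains only actions of $m$. Each such replacement appends only actions of $m$, so it strictly reduces the pending count without invoking any new method, and the purely-blocking hypothesis re-applies at each reached trace. The loop terminates in a trace $\tau_1$ whose remaining pending methods are each pure at $\tau_1$ (being pending but not obstruction-free). In the second phase I would strip each pure pending method $m$ (executed by $t$) of its actions from $\tau_1$. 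The non-invocation actions of $m$ are removed one step at a time using the ``only-if'' direction of the purity equivalence: at the trace prefix just before any such action the method $m$ is pending and not obstruction-free, hence pure by the purely-blocking hypothesis, so purity permits dropping that action from the continuation. This reduces each pure pending method in the trace to a bare invocation $(t:enter(m))$, and stripping those invocations as well yields the desired trace $\tau^*$. Its induced history $h(\tau^*)$ equals $c$ augmented with responses for all Phase-1 methods and with the invocations of all pure pending methods deleted; this is precisely $\remPending{c\cdot c'}$ for the response sequence $c'$ produced in Phase 1, so $h(\tau^*) \in \Compl{c}$ is an induced history of the implementation, as required.

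The main obstacle I expect is the very last step of Phase 2, the removal of each bare invocation $(t:enter(m))$. The formal definition of ``pure after $\tau$'' only authorises adding or removing actions of $m$ that come strictly \emph{after} $\tau$, and says nothing about the invocation itself. Closing the gap requires an auxiliary argument that a thread whose only remaining action is an invocation that provably leaves the global state untouched can simply be left unscheduled---matching the informal reading of purely-blocking that the \emph{entire} execution of a pure pending method is invisible. A cleaner route is to build $\tau^*$ action by action from $\tau_1$: at each position, copy the action into $\tau^*$ or skip it according to whether it belongs to a pure pending method, using purity applied to the running prefix to check that every skip preserves executability.
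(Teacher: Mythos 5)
Your proposal is correct and follows essentially the same argument as the paper's proof, which likewise processes each pending method by either extending the trace to its exit point (when obstruction-free) or deleting all its actions from the invocation onward (when pure), the only cosmetic difference being that you separate these into two phases while the paper handles the methods one at a time in a fixed order. The gap you flag about removing the bare invocation is real but present in the paper's proof as well, which simply asserts that all actions of a pure method ``beginning with the last occurrence of $(t:enter(m))$ can be removed,'' so your treatment is, if anything, slightly more careful.
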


\begin{proof}
Let $\tau$ be an execution trace of a purely-blocking implementation.
We fix a total order of pending methods, and consider them in that order. 
For a pending method $m$ executed by $t$, if running it in isolation terminates, then extend $\tau$ only with actions executed by $t$ until $(t:exit(m))$ occurs.
Otherwise, the execution of $m$ does not modify any global state and so all actions executed by $t$ beginning with the last occurrence of $(t:enter(m))$ can be removed from the execution trace without affecting its realizability.
\end{proof}

We remark that our new notion of purely-blocking is a strictly weaker
requirement than the standard non-blocking notions:
\emph{obstruction-freedom}, which requires all pending methods to terminate when run in isolation, 
as well as the stronger notions of lock-freedom and wait-freedom.
(See~\cite{HS2008} for an in depth exposition of these three notions.)


\section{Manually Verifying the Herlihy-Wing Queue}
\label{sec:herlihy-wing}

Let us return to the HW queue presented in \S\ref{sec:introduction} and prove
its correctness manually following our aspect-oriented approach.

First, observe that HW queue is purely-blocking: $\enq()$ always terminates,
and $\deq()$ can update the global state only by reading $x\neq\NULL$ at $E_2$,
in which case it immediately terminates.
So from Prop.~\ref{prop:pb} and Theorem~\ref{thm:no-violations}, it suffices
to show that it does not have any of the four violations.
The last one, \VWit, is trivial as the HW $\deq()$ never returns $\NULL$.
So, we are left with three violations
whose absence we have to verify: \VFresh, \VRepet, and \VOrd. 

Intuitively, there are no \VFresh\ violations because $\deq()$ can return only a value that
has been stored inside the $q.items$ array.  The only assignments to $q.items$
are $E_1$ and $D_2$: the former can only happen by an $\enq(x)$, which puts $x$ into
the array; the latter assigns $\NULL$.

Likewise, there are no \VRepet\ violations because whenever in an arbitrary execution trace two calls to
$\deq()$ return the same $x$, then at least twice there was an element of 
the $q.items$ array holding the value $x$ and was updated to $\NULL$ by
the $\texttt{SWAP}$ instruction at $D_2$.
Therefore, at least two assignments of the form $q.items[\_] \gets x$ happened; 
i.e.\ there were at least two $\enq(x)$ events in the induced history.

We move on to the more challenging third condition, \VOrd. 
We actually consider its equivalent reformulation, \POrd.
Fix a value $v_2$ and consider an execution trace $\tau$ where every method call enqueuing $v_2$ 
is preceded by some method call enqueuing some different value $v_1$ and there
are no $\deq()$ calls returning $v_1$ (there may be arbitrarily many concurrent
$\enq()$ and $\deq()$ calls enqueuing or dequeuing other values). 
The goal is to show that in this execution trace, no $\deq()$ return $v_2$.

Let us suppose there is a dequeue $d$ returning $v_2$, and try to derive a contradiction.
For $d$ to return $v_2$, it must have read $\mathit{range} \geq i_2$ such that
$q.items[i_2] = v_2$. So, $d$ must have read $q.back$ at $D_1$ after
$\enq(v_2)$ incremented it at $E_1$.

Since, $\enq(v_1) \prec_{h(\tau)} \enq(v_2)$, it follows that $\enq(v_2)$ will have read a larger
value of $q.back$ at $E_1$ than $\enq(v_1)$.  So, in particular, once $\enq(v_1)$
finishes, the following assertion will hold:
\begin{equation}
\tag{$*$}\label{eq:POrd-inv}
    \exists i_1 < q.back.~ q.items[i_1] = v_1 \land (\forall j < i_1.~ q.items[j] \neq v_2) 
\end{equation}
Note that since, by assumption, $v_1$ can never be dequeued, and any later
$\enq(v_2)$ can only affect the $q.items$ array at indexes larger than $i_1$,
\eqref{eq:POrd-inv} is an invariant.

Given this invariant, however, it is impossible for $d$ to return $v_2$, as in
its loop it will necessarily first have encountered $v_1$. Formally, to show this
we use the following loop invariant at the beginning of $\mathbf{for}$ loop
\[
      \exists i_1.~  i \leq i_1 < q.back \land q.items[i_1] = v_1 \land (\forall j < i_1.~ q.items[j] \neq v_2) 
\]
and \eqref{eq:POrd-inv} for the while loop. With these invariants, it is immediate
that the swap at line $D_2$ cannot read $v_2$.

\section{Checking the Conditions by Proving Program Divergence}
\label{sec:checking}

In this section, we reduce proving the absence of \VFresh, \VRepet\ and \VOrd\
violations to proving that certain programs always diverge.  Towards the end of
the section, we also discuss how the absence of \VWit\ violations might be automatically
checked for queue implementations whose {\deq} method may return \NULL.

Our proof technique relies heavily on instrumenting the $\deq()$ function
with a prophecy variable `guessing' the value that will be returned when
calling it.
That is, we construct a method, $\deq(v)$, such that the set of execution traces of
$\bigndchoice_{x \in \mathbb{N}\cup\{\NULL\}} \deq(x)$ 
is equal to the set of execution traces of $\deq()$, where $\ndchoice$ stands for
(demonic) non-deterministic choice: the set of traces of $\mathcal{T} \sqcup \mathcal{T}'$ is
the union of the sets of traces of $\mathcal{T}$ and $\mathcal{T}'$.
A simple construction is to define $\deq(v)$ to behave exactly as $\deq()$
except that when $\deq()$ is about to return a value other than $v$, 
we make $\deq(v)$ diverge.  That is, we prepend an $\mathsf{assume}(x = v);$
statement to every $\mathbf{return}\ x$ statement in $\deq()$.
In Section~\ref{sec:cave}, we describe a better construction.

\subsection*{Proving Absence of \VFresh\ Violations}

Generally, it is completely straightforward to prove the absence of \VFresh\
violations.  For example, it is sufficient for the queue implementation to be
data independent~\cite{WP1986}.

This is because a data independent implementation cannot produce values `out of
thin air.' In other words, if a dequeue returns a value, it must have read that
value from memory, and the only way for a value to get into memory is for an
enqueue to be invoked with that value passed as an argument.  Therefore, no
\VFresh\ violations can occur in data independent implementations.

\subsection*{Proving Absence of \VRepet\ Violations}

To prove the absence of \VRepet\ violations, we use the following theorem. 

\begin{thm}\label{thm:vrepet}
A completable queue implementation has no \VRepet\ violations iff 
for all values $v$ and all $n,m,k \in \mathbb{N}$ such that $0 < n < m$, the program
\[
 \Prg(v,n,m,k) \defeq (\overbrace{\enq(v) \| \ldots \| \enq(v)}^{n\text{ times}} \,\|\,
                     \overbrace{\deq(v) \| \ldots \| \deq(v)}^{m\text{ times}} \,\|\, 
                     \overbrace{C \,\|\, \ldots \,\|\, C}^{k\text{ times}} )
\]
has no execution trace in which more than $n$ $\deq(v)$ threads terminate, where
\[
  C \defeq   
   \bigndchoice_{x\neq v} \enq(x) \ndchoice 
   \bigndchoice_{x\neq v} \deq(x)
\,.
\]
\end{thm}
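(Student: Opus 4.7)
My plan is to prove both directions of the biconditional by contrapositive, exploiting the defining property of the instrumented method: a thread running $\deq(v)$ terminates if and only if the underlying $\deq()$ would have returned $v$, since $\bigndchoice_{x} \deq(x)$ has the same set of execution traces as $\deq()$.

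For the forward direction ($\Rightarrow$), I take the contrapositive. Suppose some $\Prg(v,n,m,k)$ with $0 < n < m$ admits an execution trace in which strictly more than $n$ of its $\deq(v)$ threads terminate. Since $n \geq 1$, at least two $\deq(v)$ threads terminate, and so the induced complete history contains at least two distinct dequeue events both returning the non-\NULL\ value $v$. By the formal definition given in Section~\ref{sec:conditions}, this is precisely a \VRepet\ violation of the implementation. Note that the existence of the $n$ explicit $\enq(v)$ threads and the restriction $x \neq v$ inside $C$ are not needed for this step; the $n < m$ bookkeeping matters only for the other direction.

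For the backward direction ($\Leftarrow$), I again argue by contrapositive. Assume the implementation has a complete execution trace $\tau$ whose induced history contains two dequeue events returning the same value $v \neq \NULL$. Under the paper's standing assumption that each value is enqueued at most once (with \VFresh\ handled separately), $\tau$ contains exactly one $\enq(v)$ event; set $n = 1$. Let $m \geq 2$ be the number of dequeues in $\tau$ that return $v$, and let $k$ be the number of remaining method calls in $\tau$, so $0 < n < m$ as required. I construct an execution of $\Prg(v,n,m,k)$ that mimics $\tau$ as follows: the single $\enq(v)$ event is assigned to the $\enq(v)$ thread of $\Prg$; the $m$ dequeues returning $v$ are assigned to the $m$ $\deq(v)$ threads (each of which terminates by construction of the instrumentation); and the remaining calls are assigned to the $k$ $C$ threads, resolving the nondeterministic choice inside each $C$ according to the kind (enqueue or dequeue) and argument of the corresponding call in $\tau$. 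The schedule of thread steps is copied verbatim from $\tau$. All $m > n$ $\deq(v)$ threads terminate, which falsifies the right-hand side.

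The main obstacle is justifying that this constructed interleaving really is an execution trace of $\Prg$. This rests on two facts. First, the per-thread semantics of each library method depends only on the shared state and the thread's own local state, not on the surrounding client, so the thread-level step relation under the original client of $\tau$ and under the $\Prg$ client coincides. Second, the defining equation $\bigndchoice_{x}\deq(x) \equiv \deq()$ at the level of trace sets guarantees that any $\deq()$ execution in $\tau$ that ends by returning $v$ can be faithfully replayed by a matching $\deq(v)$ thread without the prepended $\assume(x = v)$ blocking, so that the step-by-step replay of $\tau$ is itself a legal trace of $\Prg(v,n,m,k)$.
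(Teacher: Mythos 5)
Your overall architecture matches the paper's: both directions by contradiction, with the backward direction replaying a violating implementation trace as a run of a suitable $\Prg(v,n,m,k)$ and using the fact that the prophetically instrumented $\deq(v)$ terminates exactly when $\deq()$ would return $v$. The replay justification you give (locality of the per-thread step relation plus trace-set equality of $\bigndchoice_x\deq(x)$ and $\deq()$) is more explicit than the paper's one-line ``observe that $\tau$ can be generated by a run of $\Prg(v,n_v,m_v,k)$,'' and is welcome detail.

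There is, however, a genuine gap in your forward direction. You assert that two terminating $\deq(v)$ threads already constitute a \VRepet\ violation, and explicitly add that the $n$ copies of $\enq(v)$ and the threshold $n$ ``are not needed for this step.'' That cannot be right: the program $\Prg(v,n,m,k)$ enqueues the value $v$ exactly $n$ times, so for $n\ge 2$ the literal reading of the formal \VRepet\ clause ($\Val{c}{d}=\Val{c}{d'}\neq\NULL$) does not apply --- that formalization is stated under the standing assumption that each value is enqueued at most once, and the footnote in Section~\ref{sec:conditions} says that with repeated values one must instead ask whether \emph{some} choice of the matching $\Match$ avoids all violations. An execution of $\Prg(v,n,m,k)$ in which exactly $n$ of the $\deq(v)$ threads terminate is perfectly legitimate and exhibits two dequeues returning $v$ whenever $n\ge 2$; the actual violation only arises when \emph{more than} $n$ terminate, because then the pigeonhole principle forces every candidate matching to send two dequeues to the same $\enq(v)$ event, so no safe mapping exists. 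This pigeonhole step is exactly what the paper's proof supplies and what your argument omits. The same imprecision surfaces, in mirror image, in your backward direction: by invoking the at-most-once assumption to force $n=1$ you never exercise the theorem's quantification over all $n$, whereas the paper takes $n=n_v$ to be the actual number of $\enq(v)$ invocations in the violating trace and extracts $n_v+1$ completed dequeues of $v$ from the (matching-based) violation. Your construction is salvageable, but you need to phrase the violation in terms of the non-existence of a safe matching rather than the two-equal-values clause.
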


\begin{proof}
($\Rightarrow$)
We argue by contradiction.
Consider an execution trace $\tau$ of $\Prg(v,n,m,k)$ where at least $n+1$ of the $\deq(v)$ threads terminate.
The induced history $h(\tau)$ cannot have a safe matching because to satisfy condition (1) of
Definition~\ref{def:safe}, each $\deq(v)$ must be matched by some $\enq(v)$,
and from the pigeonhole principle multiple $\deq(v)$ will have to be matched
with the same $\enq(v)$, 
thereby violating condition (3) of the Definition.

($\Leftarrow$)
Again, we argue by contradiction.
Assume the queue implementation has an execution trace $\tau$ such that $h(\tau)$ has a \VRepet\ violation. 
For each value $v$, 
let $n_v$ be the number of invoked $\enq(v)$ operations in $\tau$ 
and $m_v$ be the number of invoked $\deq(v)$ operations.
Then, since there is a \VRepet\ violation, for some $v$ there are at least 
$n_v + 1$ completed $\deq(v)$ operations in $\tau$.
Finally, observe that $\tau$ can be generated by a run of the program $\Prg(v,n_v,m_v,k)$
(for some $k$) in which at least $n_v + 1$ of the $\deq(v)$ threads terminate.
\end{proof}

In case the queue implementation is data independent~\cite{WP1986}, we can
simplify the \VRepet\ check further.
We say that a history is \emph{differentiated}, if all the input arguments to invocations
of the library's methods are pairwise different.  Given a renaming function on data
values, $f : \mathcal{D} \to \mathcal{D}$, we write $f(c)$ for applying the function
to all the data values in the history $c$.
An implementation is \emph{data independent}, if the set of histories it generates, $H$,
satisfies two properties: (1) for every $c \in H$, $f(c) \in H$; and (2) for every $c \in H$,
there exists a differentiated history $c' \in H$ such that $c = f(c')$.
To ensure data independence, it suffices to check that the implementation
never performs any operations (such as testing for equality) on the value domain.

For data-independent programs, we can reduce reasoning about any number 
(say $n$ and $m$ where $m > n$) of $\enq(v)$ and
$\deq(v)$ threads to a single $\enq(v)$ and multiple $\deq(v)$ threads.
To see why a data independence condition is necessary,
consider the following incorrect $\enq(v)$ and $\deq()$ implementations:
\begin{align*}
\enq(v) \defeq \mathsf{atomic}\ (\mathsf{if}\ v \in Q\ \mathsf{then}\ Q := Q{\cdot}v{\cdot}v\ \mathsf{else}\ Q := Q{\cdot}v) \\
\deq() \defeq \mathsf{atomic}\ (\mathsf{match}\ Q\ \mathsf{with}\ \epsilon \to \mathsf{block} \mid v\cdot Q' \to Q:=Q'; \mathsf{return}\ v)
\end{align*}
Observe that for all $m > 1$, the program $\Prg(v,1,m,0)$ never terminates whereas
the program $\Prg(v,2,3,0)$ has a terminating execution: the serial execution where
both enqueues take place before all the dequeues.

\begin{thm}\label{thm:vrepet-di}
A data-independent completable queue implementation has no \VRepet\ violations
iff for all values $v$, all $m > 1$ and all $k \in \mathbb{N}$, the program
$\Prg(v,1,m)$ (as defined in Theorem~\ref{thm:vrepet})
has no execution in which more than one $\deq(v)$ threads terminate.
\end{thm}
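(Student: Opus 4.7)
The plan is as follows.  The forward direction is immediate from Theorem~\ref{thm:vrepet} by specialising $n$ to $1$: if the implementation has no \VRepet\ violations, then in particular $\Prg(v,1,m,k)$ admits no execution with more than one terminating $\deq(v)$ thread for any $v$, any $m > 1$, and any $k$.

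For the reverse direction I will argue by contraposition.  Assuming a complete history $c$ of the implementation contains a \VRepet\ violation---two dequeues returning the same non-\NULL{} value $x$---I will use data independence twice to manufacture an execution of some $\Prg(v,1,m',k')$ in which at least two $\deq(v)$ threads terminate.  First, by property~(2) of data independence I obtain a differentiated history $c' \in H$ and a renaming $f$ with $f(c') = c$.  Write $y_1,\ldots,y_n$ for the distinct enqueue arguments in $c'$ whose $f$-image is $x$, and $z_1,\ldots,z_m$ for the values returned by the $m \geq n+1$ dequeues of $c'$ corresponding to the $\deq(x)$ events of $c$; thus $f(z_j) = x$ for each $j$.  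A pigeonhole argument then yields two cases: either some value $z$ occurs at least twice among the $z_j$'s, or the $z_j$'s are pairwise distinct, in which case at least one $z^*\in\{z_j\}$ lies outside $\{y_1,\ldots,y_n\}$ because $m > n$.  In the first case I choose a renaming $g$ that sends $z$ to $v$ and every other data value to a fresh distinct non-$v$ value; in the second I pick any other relevant dequeue value $z'$ and set $g(z^*) = g(z') = v$, sending every other value to a fresh distinct non-$v$ value.  Property~(1) of data independence then delivers $g(c') \in H$, a valid induced history containing at least two $\deq(v)$ events and at most one $\enq(v)$ event.

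Any execution trace realising $g(c')$ can then be reinterpreted as an execution of $\Prg(v,1,m',k')$ for suitable $m'$ and $k'$: the (at most one) $\enq(v)$ operation is executed by the unique $\enq(v)$ thread, or else that thread stays idle; each $\deq(v)$ operation is executed by a distinct prophecy-wrapped $\deq(v)$ thread; and every remaining non-$v$ operation is executed by an appropriate branch of one of the $k'$ copies of $C$.  Since the two selected dequeues actually return $v$, their prophecies are satisfied and the corresponding $\deq(v)$ threads terminate, contradicting the hypothesis.  The main obstacle will be the bookkeeping in the second case of the pigeonhole argument: verifying, via the strict inequality $m \geq n + 1$, that a thin-air return $z^*$ necessarily exists whenever the $z_j$'s are distinct, and that pairing it with any other relevant dequeue produces at most one $\enq(v)$ under the renaming.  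Without this step, a direct collapsing of all $x$-returning dequeues onto $v$ could also collapse several independent enqueues onto $v$, breaking the single-enqueue shape required by $\Prg(v,1,m',k')$.
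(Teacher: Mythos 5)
Your proposal is correct and takes essentially the same route as the paper: the forward direction is the $n=1$ instance of Theorem~\ref{thm:vrepet}, and the reverse direction uses data independence to differentiate the enqueued values, a pigeonhole argument to find a value dequeued more than once, and a second renaming that collapses that value onto $v$ (and everything else away from $v$) so as to land in an execution of $\Prg(v,1,m',k')$ with two terminating $\deq(v)$ threads. The only differences are presentational: the paper runs the renaming argument on executions of $\Prg(v,n,m,k)$ via the reduction of Theorem~\ref{thm:vrepet} rather than directly on a violating history, and it silently assumes the differentiated dequeues return enqueued values, whereas you explicitly treat the possible ``thin-air'' return --- a point on which your write-up is in fact more careful than the paper's.
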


\begin{proof}
By Theorem~\ref{thm:vrepet}, it suffices to show that if
for all $v$, $m$ and $k$, $\Prg(v,1,m,k)$ has no execution trace with more than one
terminating $\deq(v)$, then
for all $v$, $n$, $m$ and $k$, no execution trace of the program $\Prg(v,n,m,k)$ can have more
than $n$ terminating $\deq(v)$ threads.
Now, as $\enq$ and $\deq$ do not perform any value-dependent operations, we can
replace the $v$ being enqueued by distinct fresh $v_i$ values.
Doing so will naturally affect the return values of the dequeue operations that
were returning $v$, but because of data independence, nothing else.
Hence, the program 
\[
   \overbrace{\enq(v_1) \parallel \ldots \parallel \enq(v_n)}^{n\text{ threads}} \parallel 
   \overbrace{\deq(r_1) \parallel \ldots \parallel \deq(r_m)}^{m\text{ threads}} \parallel 
   \overbrace{C \parallel \ldots \parallel C}^{k\text{ times}}
\]
must have an execution trace where at least $n+1$ of the $\deq(r_i)$ threads
terminate with $r_i \in \{ v_1, \ldots, v_n \}$ for $0 < i < m$.
So, by the pigeonhole principle, there exists some value $v_i$ that gets 
dequeued multiple times, say $m'$.  This, however, contradicts our assumption
that $\Prg(v_i,1,m',-)$ has at most one terminating $\deq(v_i)$ thread.
\end{proof}

\subsection*{Proving Absence of \VOrd\ Violations}

We move on to the \POrd\ property, which as we have seen in the manual proof of
the HW queue, is often more complicated to prove.  It turns out that our
automated technique for proving \POrd\ also establishes absence of \VFresh\
violations as a side-effect.
We reduce the problem of proving absence of \VFresh\ and \VOrd\ violations to the
problem of checking non-termination of non-deterministic programs with an
unbounded number of threads.  The reduction exploits the instrumented
$\deq(v)$ definition: $\deq()$ cannot return a result $x$ in an execution
precisely if $\deq(x)$ cannot terminate in that same execution. 

\begin{thm}\label{thm:vord}
A completable queue implementation has no \VFresh\ and \VOrd\ violations iff 
for all $k \in \mathbb{N}$ and for all $v_1$ and $v_2$ such that $v_1\neq v_2$, the $\deq(v_2)$
thread does not terminate in the program
\[
   \Prg(k) \defeq b\gets\false; ~ (\deq(v_2) \,\|\, (\enq(v_1); b\gets\true) \,\|\,
                          \overbrace{C \,\| \ldots \|\, C}^{\text{$k$ threads}} )
\]
where
\[
  C \defeq
  (\assume(b); \enq(v_2)) \ndchoice 
   \bigndchoice_{x\neq v_2} \enq(x) \ndchoice 
   \bigndchoice_{x\neq v_1} \deq(x)
\,.
\]
\end{thm}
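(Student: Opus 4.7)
The strategy is to observe that $\Prg(k)$ is engineered so that its execution traces encode exactly the structural hypothesis of a \VFresh\ or \VOrd\ violation for the chosen values $v_1, v_2$: the main $\enq(v_1)$ thread followed by $b\gets\true$ forces every context-issued $\enq(v_2)$ to start after $\enq(v_1)$ returns, and the form of $C$ forbids any $\deq$ call that would return $v_1$. A terminating $\deq(v_2)$ means the induced history contains a dequeue returning $v_2$, which combined with these constraints is either a \VFresh\ violation (when $v_2$ was never enqueued) or a \VOrd\ violation (otherwise). I prove both directions by making this correspondence precise.

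For $(\Rightarrow)$, I argue by contradiction. Suppose $\tau$ is an execution trace of $\Prg(k)$ in which the $\deq(v_2)$ thread terminates; since the instrumented $\deq(v_2)$ preserves induced histories, $h(\tau)$ is a history of the queue implementation, and by completability some $\hat c \in \Compl{h(\tau)}$ is realised by the implementation. The dequeue of $v_2$ is complete in $\hat c$ and returns $v_2$. If no $\enq(v_2)$ occurs in $\hat c$, this is immediately a \VFresh\ violation. Otherwise, every $\enq(v_2)$ in $\hat c$ stems from a context thread whose $\assume(b)$ observed $b=\true$, which can only happen after the main thread wrote $b\gets\true$, which in turn is sequenced after the response of $\enq(v_1)$. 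Hence $\enq(v_1)$ is completed in $\hat c$ and precedes every $\enq(v_2)$. Since no thread of $\Prg(k)$ ever invokes $\deq$ with argument $v_1$, we also have $\deq(v_1) \notin \Deq{\hat c}$, giving a \VOrd\ violation on $(v_1, v_2)$. Either case contradicts the hypothesis.

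For $(\Leftarrow)$, I play a violating history back as an execution of $\Prg(k)$. Given a complete history $c$ of the implementation with a violation, choose the parameters as follows: for \VFresh\ on dequeue value $v$, set $v_2 = v$ and pick $v_1$ fresh in $c$ (possible since $\mathbb{N}$ is infinite); for \VOrd\ on $(y, x)$, set $v_1 = y$ and $v_2 = x$. Using the single-value-per-enqueue assumption, there is a unique $\enq(v_1)$ event (in the \VFresh\ case we insert one at the very start of the simulation) and a unique dequeue returning $v_2$. Map $\enq(v_1)$ to the main $\enq(v_1)$ thread and schedule $b\gets\true$ directly after its response, map the chosen dequeue to the main $\deq(v_2)$ thread, and assign every other operation to its own fresh context thread, truncating the simulation after the $\deq(v_2)$ response. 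The validity checks are: context threads performing $\enq(v_2)$ find $b=\true$ because $\enq(v_1)\prec_c\enq(v_2)$; no context thread is required to perform $\deq(v_1)$, since the violation clauses guarantee that no $\deq(v_1)$ precedes the chosen dequeue's response. The construction is thus an execution of $\Prg(k)$, for $k$ equal to the number of auxiliary threads used, that terminates $\deq(v_2)$, contradicting the hypothesis.

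The main obstacle is the backward construction: one must verify that the truncation, the insertion of $b\gets\true$, and the dispatch of the remaining events to independent context threads really assemble into a legal execution trace of $\Prg(k)$ rather than just of the implementation in isolation, and that the single-value-per-enqueue convention is strong enough to make the main-thread event assignment unambiguous. The forward direction is by comparison a clean consequence of the synchronisation skeleton of $\Prg(k)$.
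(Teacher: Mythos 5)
Your overall strategy coincides with the paper's: the forward direction reads a \VFresh\ or \VOrd\ violation off any terminating run of $\Prg(k)$ using the $b$-synchronisation, and the backward direction realises a violating execution as a run of $\Prg(k)$ for suitable $k$, $v_1$, $v_2$. Your forward direction is sound, and is in fact slightly more careful than the paper's in that you explicitly pass to a completion before invoking the violation definitions, which are stated for complete histories.

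The genuine problem is the \VFresh\ branch of your backward direction. There you ``insert'' a completed $\enq(v_1)$, for a fresh $v_1$, ``at the very start of the simulation'' and then replay the rest of the violating trace. That step would fail for a general implementation: running $\enq(v_1)$ to completion changes the global state, so the remainder of the original trace --- which was an execution from the \emph{initial} state --- need not be executable from the post-$\enq(v_1)$ state. For the HW queue itself, prepending $\enq(v_1)$ fills slot $0$, and every subsequent dequeue's scan would then encounter $v_1$ first, so the replay breaks immediately. What the theorem needs, and what the paper's terse argument implicitly uses, is to leave the $(\enq(v_1); b\gets\true)$ thread \emph{unscheduled}: a parallel composition does not oblige every thread to take a step, so a trace in which that thread never runs --- hence $b$ stays $\false$ and the $\assume(b);\enq(v_2)$ branches of $C$ stay disabled, consistent with $v_2$ never being enqueued --- is already a legitimate trace of $\Prg(k)$, and in it the would-be \VFresh\ violation makes $\deq(v_2)$ terminate. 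With that repair, your \VOrd\ branch goes through as you describe, since there the only inserted action is the write to the auxiliary variable $b$, which the implementation never reads and which therefore cannot perturb the replay.
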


\begin{proof}
($\Rightarrow$)
We argue by contradiction.
Consider an execution trace $\tau$ of $\Prg(k)$ in which the $\deq(v_2)$ thread terminates.
If $\enq(v_2)$ is not invoked in $\tau$, then as there are no \VFresh\ violations, 
we know that no $\deq()$ in $\tau$ can return $v_2$, 
contradicting our assumption that $\deq(v_2)$ terminates in $\tau$.
Otherwise, if $\enq(v_2)$ is invoked in $\tau$, then at some earlier point
$\assume(b)$ was executed, and since initially $b$ was set to $\false$,
this means that $b\gets\true$ was executed and therefore $\enq(v_1) \prec_{h(\tau)} \enq(v_2)$.
Consequently, from \POrd, if there is $\deq()$ in $\tau$ returns $v_2$, there must
be a $\deq()$ in $\tau$ that can be completed to return $v_1$, contradicting
our assumption that $\deq(v_2)$ terminates in $\tau$.

($\Leftarrow$) We have two properties to prove.
For \VFresh, it suffices to consider the restricted parallel context that
never enqueues $v_2$. In this restricted context,
$\deq(v_2)$ does not terminate, and so $\deq()$ cannot return $v_2$.
For \VOrd, consider an execution trace in which every $\enq(v_2)$ happens after
some enqueue of a different value, say $\enq(v_1)$, and in which there is no 
$\deq(v_1)$.  Such an execution trace can easily be produced by the unbounded parallel
composition of $C$, and so $\deq(v_2)$ also does not terminate, as required.
\end{proof}

\subsection*{Showing Absence of \VWit\ Violations}

Here, we have to show that any dequeue event cannot return \texttt{NULL} if it never goes through a state where the queue could be logically empty.
This in turn means that we have to express non-emptiness using only the actions of the history (and not referring to the linearization point or the gluing invariant which relates the concrete states of the implementation to the abstract states of the queue).
For the following let us fix a (complete) concurrent history $c$ and a dequeue of interest $\dhat$ which returns {\NULL} and does not precede any other event in $c$.

Let $c'$ be some prefix of $c$ and let $e\in \Enq{c'}$ be a completed enqueue event in $c'$.
We will call $e$ {\em alive} after $c'$ if there is a matching dequeue event $d$ in $\Deq c$, i.e. $d=\deq(\Val {c} e)$, 
then $d$ is neither pending nor completed in $c'$. 
In other words, $e$ is alive after $c'$ if its matching dequeue $d$, if it exists, is not invoked in $c'$.

For the following, let $d_i$ denote the dequeue event which removes the element inserted by the enqueue event $e_i$; that is, $d_i=\deq(\Val{c}{e_i})$.
A sequence $e_0e_1\ldots e_n$ of enqueue events in $\Enq c$ is {\em covering} for $\dhat$ in $c$ if the following holds:
\begin{itemize}
\item $e_0$ is alive at $c'$ where $c'$ is the maximal prefix of $c$ in which $inv(\dhat)$ does not occur.
\item For all $i\in[1,n]$, $e_i$ starts before $\dhat$ completes.
\item For all $i\in[1,n]$, we have $e_i\prec_c d_{i-1}$.
\item $e_n$ is alive at $c$.
\end{itemize}
Note that all $d_i$ must exist by the third condition, with the only exception of $d_n$, which does not exist (the last condition).
Then, the sequence is covering for $\dhat$ if $d_0$ does not start before $\dhat$ starts, and every enqueue event $e_i$ completes before the dequeue event $d_{i-1}$ starts.
Intuitively, this means that at every state visited during the execution of $\dhat$, the queue contains at least one element. 

The property corresponding to the last violation (\VWit) then becomes the following:
\begin{description}
\item[(\PWit)] A dequeue event $d$ cannot return {\NULL} if there is a covering for $d$.
\end{description}

\begin{lem}\mylabel{lem:vwit-pwit}
A (complete) concurrent history $c$ has {\VWit} iff it does not satisfy {\PWit}.
\end{lem}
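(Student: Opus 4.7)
My plan is to prove both directions by direct manipulation of prefixes, leaning on the standing assumption that $\dhat$ does not precede any other event in $c$, which confines every invocation in $c$ to a position at or before $res(\dhat)$.

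For the backward direction, I would suppose $e_0,\ldots,e_n$ is a covering for $\dhat$ in $c$ and verify the \VWit\ condition at an arbitrary split $c_d=c_1\cdot c_2$. I would take $c_p\defeq c_0\cdot inv(\dhat)\cdot c_1$ and let $j$ be the largest index such that $res(e_j)$ occurs in $c_p$; such a $j$ exists because $e_0$, being alive after $c'=c_0$, is completed in $c_0\subseteq c_p$. If $j=n$, then $e_n$ itself serves as the witness: it is completed in $c_p$, and since $e_n$ is alive at $c$ its matching dequeue does not occur anywhere in $c$, let alone in $c_p$. Otherwise $j<n$ and I would use $e_j$: it is completed in $c_p$ by choice of $j$, while $res(e_{j+1})<inv(d_j)$ by the covering condition $e_{j+1}\prec_c d_j$, and $res(e_{j+1})$ lies strictly after $c_p$ by maximality of $j$, so $inv(d_j)$ must also be outside $c_p$.

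For the forward direction, starting from \VWit\ for $\dhat$, I would build the covering one element at a time. Applying \VWit\ with $c_1=\varepsilon$ yields some $e_0$ completed in $c_0$ whose matching dequeue's invocation (if any) lies outside $c_0\cdot inv(\dhat)$; this makes $e_0$ alive after $c'$. Inductively, once $e_{i-1}$ has been chosen and $d_{i-1}$ exists in $c$, terminality of $\dhat$ gives $inv(d_{i-1})\leq res(\dhat)$; combined with the facts that $inv(d_{i-1})$ is an invocation distinct from $res(\dhat)$ and that it was explicitly excluded from the previously used prefix, it must sit inside $c_d$. I would then apply \VWit\ with the prefix of $c_d$ extending up to and including $inv(d_{i-1})$, obtaining the next $e_i$. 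Because $res(e_i)<inv(d_{i-1})<res(\dhat)$, both covering requirements $e_i\prec_c d_{i-1}$ and $inv(e_i)<res(\dhat)$ are immediate. As soon as the matching $d_i$ does not exist in $c$, I would set $n=i$ and halt; then $e_n$ is trivially alive at $c$.

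The main obstacle I expect is ensuring termination of the forward construction. The naive attempt of invoking \VWit\ on the prefix ending just before $inv(d_{i-1})$ can hand back $e_{i-1}$ itself, producing an infinite loop. The key device is to include $inv(d_{i-1})$ inside the prefix at which \VWit\ is invoked: this forces the returned $e_i$ to satisfy $inv(d_i)>inv(d_{i-1})$ strictly whenever $d_i$ exists, so the sequence $inv(d_0)<inv(d_1)<\cdots$ strictly advances through the finite list of positions in $c_d$. Finiteness of $c_d$ then guarantees the iteration reaches a stage where $d_i$ does not exist, delivering a finite covering as required.
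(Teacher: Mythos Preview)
Your proof is correct and takes a genuinely different route from the paper's.

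The paper proves both directions by going through the $\Bad c{\dhat}$ machinery and Proposition~\ref{prop:compl-viol}. For the backward direction it shows that a covering for $\dhat$ forces $e_0\in\Bad c{\dhat}\cap\Before c{\dhat}$ and then quotes Proposition~\ref{prop:compl-viol} to obtain \VWit. For the forward direction it starts from $\Bad c{\dhat}\cap\Before c{\dhat}\neq\emptyset$ (obtained via Proposition~\ref{prop:compl-viol}) and extracts the covering by repeatedly selecting an enqueue of \emph{minimal} $\Bad$-index among those preceding $e_i$ or $d_i$; termination is ensured because these indices strictly decrease.

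Your argument is more elementary: it never mentions $\Bad$ or Proposition~\ref{prop:compl-viol} and works directly with the prefix-based definition of \VWit. In the backward direction you verify the \VWit\ clause at an arbitrary split by picking the last $e_j$ whose response already lies in the prefix, which is a clean local argument. In the forward direction you generate the covering by repeatedly invoking \VWit\ at the prefix that \emph{includes} $inv(d_{i-1})$; termination then follows because the positions $inv(d_0)<inv(d_1)<\cdots$ strictly increase inside the finite segment $c_d$. This is a different well-foundedness argument from the paper's decreasing $\Bad$-indices, and it avoids having to set up the inductive $\Bad$ hierarchy altogether. The trade-off is that your argument leans explicitly on the standing hypothesis that $\dhat$ precedes no other event (to keep each $inv(d_i)$ inside $c_d$); the paper's proof also uses this assumption but less visibly, since it is baked into the interaction with Proposition~\ref{prop:compl-viol}.
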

\begin{proof}
($\Rightarrow$)
Let $c$ have {\VWit}.
By Prop.~\ref{prop:compl-viol}, there is $\dhat\in \Deq c$ such that $\Val c \dhat=\NULL$ and $\Bad c {\dhat}\cap \Before c {\dhat}\neq \emptyset$.
We construct a covering sequence $e_0\ldots e_n$ for $\dhat$ such that for all $0\leq i< n$ the response of $e_i$ occurs before the response of $e_{i+1}$, if $j_i$ and $j_{i+1}$ are minimal indices for which $e_i\in \Badx c {\dhat} {j_i}$ and $e_{i+1}\in \Badx c {\dhat} {j_{i+1}}$ hold, then $j_{i+1}<j_i$, and $e_n\in \Badx c {\dhat} 0$, and if $e\in \Badx c {\dhat} k$ with $k<j_{i+1}$, then $e\not\prec_c d_i$.

\begin{description}
\item[(Base)] By the assumption there is an enqueue event in $\Bad c {\dhat}\cap \Before c {\dhat}$.
Set $e_0$ an enqueue event in $\Badx c {\dhat} {j_0}$ such that for any other enqueue event $e'\in \Badx c {\dhat} k\cap \Before c {\dhat}$, we have $j_0\leq k$.

\item[(Inductive)] Let $e_i$ be in $\Badx c {\dhat} {j_i}$ with $j_i>0$.
Let $E'$ be the set of all $e'\in \Bad c {\dhat}$ such that either $e'\prec_c e_i$ or $e'\prec_c d_i$, where $d_i$ is the matching dequeue event for $e_i$.
Observe that $E'$ is non-empty.
Choose $e_{i+1}\in E'$ to be an enqueue event with minimal index in $E'$.
That is, if $j_{i+1}$ is the smallest index for which $e_{i+1}\in \Badx c {\dhat} {j_{i+1}}$ holds, then for any $e'\in E'$, $e' \in \Badx c {\dhat} k$ implies $j_{i+1}\leq k$.
Observe that $j_{i+1}<j_i$.
This implies that by construction it cannot be the case that $e_{i+1}\prec_c d_{i-1}$ since it would contradict the assumption that $e_i$ was chosen as an enqueue event with minimal index among those that precede $d_{i-1}$.
But again by construction we have $e_i\prec_c d_{i-1}$ which implies that the response event of $e_{i+1}$ occurs after the response event of $e_i$.
This also means that because $e_{i+1}\not\prec_c e_i$, we must have $e_{i+1}\prec_c d_i$. 

\end{description}
Since the sequence of indices $j_i$ is strictly decreasing, to show that the construction terminates with $j_n=0$, we only have to show that there is $e_n\in \Badx c {\dhat} 0$ completed before $\dhat$ is completed; i.e. the response of $e_n$ occurs before the response of $\dhat$ in $c$.
By the definition of {\VWit}, taking $c_p=c_0\cdot inv(\dhat)\cdot c_d$, we know that there must be at least one enqueue event $e$ in $c_p$ such that $e$ is completed in $c_p$ and its matching dequeue is neither pending nor completed in $c_p$. 
But this immediately implies that $e\in \Badx c {\dhat} 0$ and $e$ is completed before $\dhat$ is completed.

($\Leftarrow$)
Let $e_0\ldots e_n$ be a covering sequence for $\dhat$.
Then, $e_n\in \Bad c {\dhat}$ because $d_n$ if it exists is preceded by $\dhat$, i.e. $\dhat\prec_c d_n$.
Furthermore, for every $i\in[1,n]$, since we have $e_i\prec_c d_{i-1}$, all $e_i\in \Bad c {\dhat}$.
Finally, $e_0\in \Before c {\dhat}$. 
Thus, $\Bad c {\dhat}$ and $\Before c {\dhat}$ are not disjoint if there is a covering for $\dhat$.
By Prop.~\ref{prop:compl-viol} this implies the existence of {\VWit}.
\end{proof}

\noindent
We will actually restate the same property in a simpler way by making the
following observation.
\begin{prop}
There is a covering for $\dhat$ in $c$ iff at every prefix $c'$ of $c$ such that
$\dhat$ is pending in $c'$, there is at least one alive enqueue event.
\end{prop}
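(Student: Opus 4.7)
The plan is to prove both directions by a direct construction: the forward direction extracts a witness from the covering sequence, while the backward direction builds a covering sequence step by step from witnesses supplied by the hypothesis.

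For ($\Rightarrow$), assume a covering $e_0 e_1 \ldots e_n$ for $\dhat$ exists, and fix an arbitrary prefix $c'$ of $c$ in which $\dhat$ is pending. I would pick the least index $i \in [0,n]$ such that the matching dequeue $d_i$ is not invoked in $c'$ (treating the non-existent $d_n$ as vacuously non-invoked; such an $i$ exists because $i = n$ works). Then $d_j$ is invoked in $c'$ for all $j < i$, and in particular if $i > 0$ we have $e_i \prec_c d_{i-1}$ from the covering, so $e_i$ is completed in $c'$. For $i = 0$, $e_0$ is alive at the maximal prefix with no $inv(\dhat)$, which is contained in $c'$, so $e_0$ is completed in $c'$ and $d_0$ is not invoked in $c'$. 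In either case, $e_i$ is alive after $c'$.

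For ($\Leftarrow$), I would construct the covering sequence greedily. Let $c_0$ be the maximal prefix not containing $inv(\dhat)$, and consider the prefix $c_0 \cdot inv(\dhat)$; by hypothesis there is an alive enqueue which I name $e_0$. Inductively, if $e_i$ has been chosen and its matching $d_i$ exists and is invoked in $c$ strictly before $res(\dhat)$, I apply the hypothesis to the prefix ending just before $inv(d_i)$ to obtain an alive enqueue $e_{i+1}$; by the alive condition $d_{i+1}$ has not yet been invoked at that point, and since $e_{i+1}$ is completed in that prefix we get $e_{i+1} \prec_c d_i$, which also ensures freshness ($e_{i+1}\notin\{e_0,\ldots,e_i\}$, since otherwise its matching dequeue would already be invoked). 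The process terminates because $c$ has only finitely many enqueues; let $e_n$ be the final element. To conclude that $e_n$ is alive at the whole of $c$, I use the standing assumption that $\dhat$ does not precede any event in $c$: any invocation in $c$ occurs before $res(\dhat)$, so if $d_n$ existed its invocation would already lie in the prefix ending just before $res(\dhat)$, where $e_n$ was chosen to be alive, contradiction. Hence $d_n$ does not exist in $c$, and $e_n$ is alive at $c$.

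The main obstacle is the backward direction, specifically arguing that the construction genuinely terminates with an $e_n$ satisfying the last clause of the definition of a covering. The purely local application of the hypothesis yields a fresh $e_{i+1}$ each time, but one must carefully justify that the terminal element has no matching dequeue in $c$ at all, rather than merely one that has not yet occurred; this is exactly where the hypothesis that $\dhat$ does not precede any other event in $c$ is essential. The remaining covering clauses ($e_i$ starting before $res(\dhat)$ and $e_i \prec_c d_{i-1}$) follow immediately from the construction, so no further work is required.
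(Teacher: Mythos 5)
The paper states this proposition without proof, so there is no official argument to compare against; judging your proposal on its own terms, the forward direction is correct, but the backward direction has a genuine gap in the freshness/termination step.

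The problem is your claim that $e_{i+1}\notin\{e_0,\ldots,e_i\}$ ``since otherwise its matching dequeue would already be invoked.'' This rules out $e_{i+1}=e_j$ for $j<i$ (those $d_j$ are indeed invoked strictly before $inv(d_i)$), but it does \emph{not} rule out $e_{i+1}=e_i$: at the prefix ending just before $inv(d_i)$, the event $e_i$ is itself still alive --- it is completed there and $d_i$ is precisely the next action, hence not yet invoked. The hypothesis only guarantees \emph{some} alive enqueue at that prefix, and it may be witnessed solely by $e_i$; in that case your construction selects $e_{i+1}=e_i$, then re-examines the same prefix (since $d_{i+1}=d_i$), and loops forever without ever producing a terminal element $e_n$ that is alive at all of $c$. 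So the appeal to ``finitely many enqueues'' for termination does not go through as written.

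The repair is small: apply the hypothesis to the prefix ending \emph{with} (i.e., including) $inv(d_i)$ rather than just before it. That prefix still has $\dhat$ pending (by your continuation condition $inv(d_i)$ occurs before $res(\dhat)$), the resulting alive enqueue $e_{i+1}$ is completed strictly before $inv(d_i)$ (a response cannot coincide with an invocation), so $e_{i+1}\prec_c d_i$ still holds, and now $inv(d_{i+1})$ must occur strictly after $inv(d_i)$, which forces $d_{i+1}\notin\{d_0,\ldots,d_i\}$ and hence $e_{i+1}\notin\{e_0,\ldots,e_i\}$. With that change the positions $inv(d_0)<inv(d_1)<\cdots$ strictly increase, the process terminates, and your final argument --- that the standing assumption that $\dhat$ precedes no event forces $d_n$ not to exist --- correctly delivers the last covering clause (though note $e_n$ was chosen alive at the prefix around $inv(d_{n-1})$, not at the prefix before $res(\dhat)$; the conclusion should be drawn from the failure of the continuation condition rather than from where $e_n$ was chosen).
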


\noindent
Then, we can alternatively state {\PWit} as follows:
\begin{description}
\item[($\PWit'$)] A dequeue event $d$ cannot return {\NULL} if for every prefix $c'$ at which $d$ is pending there exists an alive enqueue event.
\end{description}
Note that {\POrd} can also be stated in terms of alive enqueue events. 
\begin{description}
\item[($\POrd'$)] For any enqueue events $e_1$ and $e_2$ with $e_1 \prec_c e_2$ and \mbox{$\Val{c}{e_1}\neq\Val{c}{e_2}$},
a dequeue event cannot return $\Val{c}{e_2}$ if $e_1$ is alive at $c$.
\end{description}

\section{Automation within Cave}
\label{sec:cave}

\begin{figure}[t]
\begin{algorithmic}
 \Procedure{$\deq$}{$v : val$}
  \While{true}
   \State $\inatom{ range\gets q.back-1 }$
  \For{$i=0$ \textbf{to} $range$}
  \State 
$\inpar{\inatom{
    x\gets q.items[i] ;
  \\ \assume(x = v \land x \neq \NULL) ;
  \\ q.items[i] \gets \NULL } ;
  \\ \mathbf{return}\ x}
\ndchoice
\inatom{
   x\gets q.items[i] ;
\\ \assume(x = \NULL) ;
\\ q.items[i] \gets \NULL
}
$
  \EndFor
 \EndWhile
 \EndProcedure
\end{algorithmic}
\caption{The HW dequeue method instrumented with the prophecy variable $v$ guessing
its return value, where $\ndchoice$ stands for non-deterministic choice.}
\label{fig:instr-deq}
\end{figure}

To automate the linearizability proof of the HW queue, we have mildly adapted
the implementation of \textsc{Cave}~\cite{Vaf2010}, a sound but incomplete
thread-modular concurrent program verifier that can handle dynamically
allocated linked list data structures and fine-grained concurrency.
The tool takes as its input a program consisting of some initialization code
and a number of concurrent methods, which are all executed in parallel an unbounded
number of times each. When successful, it produces a proof in RGSep that the
program has no memory errors and none of its assertions are violated at runtime.
Internally, it performs RGSep action inference~\cite{Vaf2010a} with a rich
shape-value abstract domain~\cite{Vaf2009} that can remember invariants 
indicating that value $v_1$ is inside a linked list.
\textsc{Cave} also has a way of proving linearizability by a brute-force search
for linearization points (see~\cite{Vaf2010} for details), but this is not
applicable to the HW queue and therefore irrelevant for our purposes.

\subsection*{Overview of Action Inference}

In brief, \textsc{Cave}'s action inference algorithm first determines the part of
the heap-allocated memory that is private to a thread and the part that is shared. 
The main heuristic employed in this decision is that newly allocated memory 
cells are deemed to be private until they become reachable from some global 
variable, from which point onwards they are deemed shared.

Next, the algorithm computes a binary relation $R$ on program states 
overapproximating the effects of all atomic statements of the program 
to the shared part of the heap. 
Syntactically, it represents $R$ as the union of a set of more primitive 
binary relations, which are called \emph{actions}.
Moreover, it remembers which atomic program statements correspond 
to which actions of the set.
Thus, for example, if we want to compute an overapproximation of a 
program $C$ in a parallel context, $C'$, we can run action inference
on $C\|C'$ and from the total set of actions return only those corresponding
to $C$. 

As part of this overapproximation, any information about the program's 
control flow is lost except when the program explicitly records it in some 
global variable.  This property is common to most thread-modular 
reasoning techniques, and is necessary for scalability.
Thus, for instance, the programs $C$, $C^*$, and $C\|C$ generate 
the same set of actions.

In the process of computing the set of actions, \textsc{Cave} 
proves that the program is memory safe and does not violate any 
assertions in it. To do so, it constructs a proof in RGSep, which is an
adaptation of Jones' rely-guarantee method suitable for pointer-manipulating 
programs~\cite{Jon1983,VP2007}. 
To construct these proofs, it calculates via abstract interpretation
an invariant that holds after every atomic program statement.
These invariants describe the shapes of the heap allocated data structures
(e.g., that there is a linked list from $x$ to $y$ via the field \texttt{next}),
and some very simple facts about the values stored in them 
(e.g., that the sequences of values stored in two list segments are equal,
or that the sequence of values stored in one list segment is sorted).

Finally, we note that action inference is incremental. Typically, action inference 
is run starting with an initial empty set of actions, to which set it adds any new
actions it generates until a fixpoint is reached. 
When, however, we want to verify $C\|C'$ and we already know a sound 
abstraction of $C$ (under the assumption that $C'$ can be run in parallel), 
it suffices to perform action inference only on $C'$ but starting with the 
set of actions of $C'$ as the initial set of actions.
To this set, action inference will add any further actions $C$ produces.

\subsection*{Summary of Changes}

The modifications we had to perform to \textsc{Cave} were:
\begin{enumerate}
\item To add code that instruments $\deq()$ methods with a prophecy argument
guessing its return value, thereby generating $\deq(v)$;
\item To add some glue code that constructs the verification conditions of
Theorems~\ref{thm:vrepet-di} and~\ref{thm:vord} and runs the 
underlying prover to verify them;
\item To improve the abstraction function so that it can remember properties of the
form $v_2 \notin X$, which are needed to express the \eqref{eq:POrd-inv} invariant
of the proof in Section~\ref{sec:herlihy-wing}; and
\item When checking the absence of $\VRepet$ violations, 
to instrument the inferred actions so as to work around the fact 
that action inference abstracts over control flow information.
\end{enumerate}
The first two changes are clearly tool-independent, the third item is 
very \textsc{Cave}-specific, whereas the fourth item is fairly generic. 
The problem that we are working around here is common to almost
all thread-modular verification approaches, and our instrumentation 
should work for other tools as well.  
To use a different tool from \textsc{Cave}, the tool must be able to express 
invariants such as the aforementioned \eqref{eq:POrd-inv} invariant.

As \textsc{Cave} does not support arrays (it only supports linked lists), we
gave the tool a linked-list version of the HW queue, for which it successfully 
verified that there are no \VFresh, \VRepet, and \VOrd\ violations.
(As the HW deques never return $\NULL$, the algorithm also trivially has no
\VWit\ violations.)

\subsection*{Prophetic Instrumentation of Dequeues}

In order to be able to use the theorems in the previous section, we must first
construct the method $\deq(v)$ that records the result of the $\deq()$ function
in its arguments which acts like a prophecy variable.
In essence, the $\deq(v)$ we construct must be such that the set of traces of
$\bigndchoice_{x \in \mathbb{N}\cup\{\NULL\}} \deq(x)$ 
is equal to the set of traces of $\deq()$, where $\ndchoice$ stands for
non-deterministic choice.
Figure~\ref{fig:instr-deq} shows the resulting automatically-generated 
instrumented definition of $\deq(v)$ for the HW queue.

Our implementation of the instrumentation performs a sequence of simple rewrites,
each of which does not affect the set of traces produced:
\begin{align*}
\mathbf{return }\ E &\leadsto  \assume(v = E); \mathbf{return}\ E \\
\mathbf{if}\ B\ \mathbf{then}\ C\ \mathbf{else}\ C' 
                &\leadsto  (\assume(B); C) \ndchoice (\assume(\lnot B); C')  \\
C ; \assume(B)  &\leadsto  \assume(B); C    
                \qquad\qquad\text{provided } \mathit{fv}(B) \subseteq \mathit{Locals} \setminus \mathit{writes}(C) \\
C ; (C_1 \ndchoice C_2)  &\leftrightsquigarrow   (C ; C_1) \ndchoice (C; C_2) \\
(C_1 \ndchoice C_2) ; C  &\leftrightsquigarrow   (C_1; C) \ndchoice (C_2; C) 
\end{align*}
In general, the goal of applying these rewrite rules is to bring the introduced
$\assume(v = E)$ statements as early as possible without unduly duplicating
code.

\subsection*{Instrumentation for Checking Absence of \VRepet\ Violations}

Observe that the HW queue implementation is data independent as the operations 
on the shared locations in the $\enq$ and $\deq$ methods do not depend
on the value of argument. 
Therefore, using Theorem~\ref{thm:vrepet-di}, we have to prove that in the
context where only one $\enq(v)$ can happen in parallel, $\deq(v)$ cannot
terminate if another $\deq(v)$ has terminated.

One slight complication is that we cannot use RGSep action
inference~\cite{Vaf2010} directly to prove this property because we have to
keep track of the exact number of occurences of particular shared memory
operation (such as the enqueues of $v$).
In rely-guarantee, operations on shared variables are abstracted by 
\emph{actions}, which typically do not contain any control flow within them. 
Hence after the initial action generation, 
we have to augment the shared state and the actions with 
auxiliary variables that 
(a) record the termination of parallel $\deq(v)$ and 
(b) ensure that only one parallel $\enq(v)$ call is accounted for.
Our implementation therefore proceeds as follows:
\begin{enumerate}
\item It infers an initial set of RGSep actions, $R$, by performing symbolic
execution of the {\enq} and {\deq} methods, and refine this set of actions
to record information about the arguments of $\enq()$ and the result of the
$\deq()$ functions wherever possible.
Let $R_\enq$ be the actions generated by $\enq$ method and $R_\deq$ be those
generated by $\deq$.

\item For each action that is executed at most once by an $\enq(v)$ invocation,
it generates a fresh auxiliary variable, $e_i$, and records that $e_i$ changes
from $0$ to $1$ by performing that action.
Formally, we define:
\[\begin{array}{r@{~}l}
E  \defeq & \{ (\ell, A) \in R_{\enq} \mid \ell \text{ occurs at most once on every path through $\enq$} \} \\
R' \defeq & \{ (\ell, A \land e_{\ell} = 0 \land e'_{\ell} = 1) \mid (\ell, A) \in E \}
\cup (R_{\enq} \setminus E) \,.
\end{array}\]
writing $e_{\ell}$ and $e'_{\ell}$ for the freshly generated variables in the
action's pre- and post-states.
(The purpose of this instrumentation is to ensure that the $E$ actions will 
not interfere more than once with $\deq(v)$ below.)

\item Record each action that must be performed by a completed $\deq(v)$ event
using a fresh auxiliary variable, $d_i$.  Formally,
\[\begin{array}{r@{~}l}
D   \defeq & \{(\ell,A) \in R_\deq  \mid \ell \text{ must occur on every path through $\deq$} \} \\
R'' \defeq & \{(\ell,A \wedge d'_{\ell} = 1) \mid (\ell, A)\in D\} \cup (R_{\deq} \setminus D) \,. 
\end{array}\]
where $d'_{\ell}$ are the freshly generated variables in the action's
post-state.
(The purpose of this instrumentation is to be able to detect whether a
$\deq$ operation has terminated.) 

\item 
Running action inference with the following initial set of actions (the rely condition)
\[
  R'[v/\mathit{arg}] \cup R''[v/\mathit{res}] \cup 
\bigcup_{v'\neq v} (R_{\enq}[v'/\mathit{arg}] \cup R_{\deq}[v'/\mathit{res}]) 
\,,
\]
verify the Hoare triple
\[
   \left\{ e_1 \,{=} \ldots{=}\, e_n \,{=}\, d_1 \,{=}\ldots {=}\, d_m \,{=}\, 0 \right\}\ \deq(v)\
   \left\{ \exists i.~ d_i = 0 \right\}
\,.
\]
The postcondition ensures that no other $\deq(v)$ has terminated, because if it
had, it must have set each $d_i = 1$.
\end{enumerate}



\section{Related Work}
\label{sec:related-work}

Linearizability was first introduced by Herlihy and Wing~\cite{HW1990}, who
also presented the HW queue as an example whose linearizability cannot be proved
by a simple forward simulation where each method performs its effects
instantaneously at some point during its execution.
The problem is, as we have seen, that neither of $E_1$ or $E_2$ can be given
as the (unique) linearization point of $\enq$ events, because the way in which
two concurrent enqueues are ordered may depend on not-yet-completed concurrent
$\deq$ events.  In other words, one cannot simply define a mapping from the
concrete HW queue states to the queue specification states. 
Nevertheless, Herlihy and Wing do not dismiss the linearization point technique
completely, as we do, but instead construct a proof where they map concrete
states to non-empty sets of specification states.  

This mapping of concrete states to non-empty sets of abstract states is closely
related to the method of \emph{backward simulations}, employed by a number of
manual proof efforts~\cite{CDG2005,DM2009,SWD2012}, 
and which Schellhorn et al.~\cite{SWD2012} recently showed to be a complete
proof method for verifying linearizability.
Similar to forward simulation proofs, backward simulation proofs, are monolithic 
in the sense that they prove linearizability directly by one big proof.
Sadly, they are also not very intuitive and as a result often difficult to come
up with. For instance, although the definition of their backward simulation
relation for the HW queue is four lines long, Schellhorn et al.~\cite{SWD2012}
devote two full pages to explain it.

As a result, most work on automatically verifying linearizability
(e.g.~\cite{ARR+2007,Vaf2009,Vaf2010,AHH+2013,DGH2013}) and some
manual verification efforts (e.g.,~\cite{DSW2011,CDG2005}) have relied on the simpler
technique of forward simulations, even though it is known to be incomplete.
The programmer is typically required to annotate each method with its
linearization points and then the verifier uses some kind of shape analysis
that automatically constructs the simulation relation. 
This approach seems to work well for simple concurrent algorithms such as the
Treiber stack and the Michael and Scott queues, where finding the linearization
points may be automated by brute-force search~\cite{Vaf2010}.
Most recently, with their technique based on (automatically) rewriting implementations Dragoi et al.~\cite{DGH2013} have succeeded to extend this approach to some implementations with helping.
Similar to their precursors, however, their approach also assumes the existence of static linearization points, i.e.\ instructions in the program code that when executed invariably correspond to the linearization of one or more methods.
Thus, there are many implementations, as mentioned in the Introduction, that cannot be handled by this approach.

Among this line of work, the most closely related one to this paper is the
recent work by Abdulla et al.~\cite{AHH+2013}, 
who verify linearizability of stack and queue algorithms using observer
automata that report specification violations such as our \VOrd. 
Their approach, however, still requires users to annotate methods
with linearization points, because checker automata are synchronized with the
linearization points of the implementation.

To the best of our knowledge, there exist only two earlier published proofs 
of the HW queue: (1) the original pencil-and-paper proof by Herlihy and
Wing~\cite{HW1990}, and (2) a mechanized backward simulation proof by
Schellhorn et al.~\cite{SWD2012}.

Both proofs are manually constructed. In comparison,  our new proof 
is simpler, more modular, and automatically generated.
This is largely due to the fact that we have decomposed the goal of proving
linearizability into proving four simpler properties,
which can be proved independently.
This may allow one to adapt the HW queue algorithm, e.g.\ by checking
emptiness of the queue and allowing $\deq$ to return \NULL, and affecting
only the proof of absence of \VWit\ violations without affecting the
correctness arguments of the other properties. 

Our violation conditions are arguably closer to what programmers have
in mind when discussing concurrent data structures.  Informal specifications 
written by programmers and bug reports do not mention that some method is 
not linearizable, but rather things like that values were dequeued in the wrong
order.

\section{Conclusion}
\label{sec:conclusion}

We have presented a new method for checking linearizability of concurrent queues.
Instead of searching for the linearization points and doing a monolithic simulation proof, we verify four simple properties whose conjunction is equivalent to linearizability with respect to the atomic queue specification.
By decomposing linearizability proofs in this way, we obtained a simpler correctness proof of the Herlihy and Wing queue~\cite{HW1990}, and one which can be produced automatically.

We believe that our new property-oriented approach to linearizability proofs
will be applicable to other kinds of concurrent shared data structures, such as
stacks, sets, and maps.
The generalization, however, is not entirely straightforward.  In the case of
stacks, the violations are similar to that of queues, but not exactly dual. 
The main difference is that the ordering violation for stacks is similar to
\VWit\ and not to \VOrd\ as one might expect.
Similarly, the violations for set implementations are also not as simple as
dropping the ordering constraint.  Instead, we need to count the number of
successful insertions and deletions to express what can go wrong.  It remains
to be seen, however, whether such counting arguments can yield an automatic
verification technique. 


\section*{Acknowledgments}
We would like to thank the CONCUR'13 reviewers for their feedback. 
The research was supported by the EC FET FP7 project ADVENT, by the Austrian
Science Fund NFN RISE (Rigorous Systems Engineering), by the ERC Advanced
Grant QUAREM (Quantitative Reactive Modeling), and by the EPSRC Grants EP/H005633/1 and EP/K008528/1.

\bibliographystyle{abbrv}
\bibliography{biblio}

\begin{thebibliography}{10}

\bibitem{AHH+2013}
P.~A. Abdulla, F.~Haziza, L.~Hol\'{\i}k, B.~Jonsson, and A.~Rezine.
\newblock An integrated specification and verification technique for highly
  concurrent data structures.
\newblock In N.~Piterman and S.~A. Smolka, editors, {\em TACAS 2013}, volume
  7795 of {\em LNCS}, pages 324--338. Springer, 2013.

\bibitem{ARR+2007}
D.~Amit, N.~Rinetzky, T.~Reps, M.~Sagiv, and E.~Yahav.
\newblock Comparison under abstraction for verifying linearizability.
\newblock In W.~Damm and H.~Hermanns, editors, {\em CAV 2007}, volume 4590 of
  {\em LNCS}, pages 477--490. Springer, 2007.

\bibitem{CDG2005}
R.~Colvin, S.~Doherty, and L.~Groves.
\newblock Verifying concurrent data structures by simulation.
\newblock {\em ENTCS}, 137(2):93--110, 2005.

\bibitem{DSW2011}
J.~Derrick, G.~Schellhorn, and H.~Wehrheim.
\newblock Verifying linearisability with potential linearisation points.
\newblock In M.~Butler and W.~Schulte, editors, {\em FM 2011}, volume 6664 of
  {\em LNCS}, pages 323--337. Springer, 2011.

\bibitem{DM2009}
S.~Doherty and M.~Moir.
\newblock Nonblocking algorithms and backward simulation.
\newblock In I.~Keidar, editor, {\em DISC 2009}, volume 5805 of {\em LNCS},
  pages 274--288. Springer, 2009.

\bibitem{DGH2013}
C.~Dragoi, A.~Gupta, and T.~A. Henzinger.
\newblock Automatic linearizability proofs of concurrent objects with
  cooperating updates.
\newblock In {\em CAV 2013}, pages 174--190. Springer-Verlag, 2013.

\bibitem{HIS+2010}
D.~Hendler, I.~Incze, N.~Shavit, and M.~Tzafrir.
\newblock Flat combining and the synchronization-parallelism tradeoff.
\newblock In {\em SPAA 2010}, pages 355--364. ACM, 2010.

\bibitem{HSV2013}
T.~A. Henzinger, A.~Sezgin, and V.~Vafeiadis.
\newblock Aspect-oriented linearizability proofs.
\newblock In P.~R. D'Argenio and H.~C. Melgratti, editors, {\em CONCUR 2013},
  volume 8052 of {\em LNCS}, pages 242--256. Springer, 2013.

\bibitem{HS2008}
M.~Herlihy and N.~Shavit.
\newblock {\em The Art of Multiprocessor Programming}.
\newblock Morgan Kaufmann Publishers Inc., 2008.

\bibitem{HW1990}
M.~P. Herlihy and J.~M. Wing.
\newblock Linearizability: a correctness condition for concurrent objects.
\newblock {\em ACM Trans. Program. Lang. Syst.}, 12(3):463--492, 1990.

\bibitem{HSS2007}
M.~Hoffman, O.~Shalev, and N.~Shavit.
\newblock The baskets queue.
\newblock In E.~Tovar, P.~Tsigas, and H.~Fouchal, editors, {\em OPODIS 2007},
  volume 4878 of {\em LNCS}, pages 401--414. Springer, 2007.

\bibitem{Jon1983}
C.~B. Jones.
\newblock Specification and design of (parallel) programs.
\newblock In {\em {IFIP} Congress}, pages 321--332, 1983.

\bibitem{LS2004}
E.~Ladan-Mozes and N.~Shavit.
\newblock An optimistic approach to lock-free {FIFO} queues.
\newblock In R.~Guerraoui, editor, {\em DISC 2004}, volume 3274 of {\em LNCS},
  pages 117--131. Springer, 2004.

\bibitem{LCL+2009}
Y.~Liu, W.~Chen, Y.~A. Liu, and J.~Sun.
\newblock Model checking linearizability via refinement.
\newblock In A.~Cavalcanti and D.~Dams, editors, {\em FM 2009}, volume 5850 of
  {\em LNCS}, pages 321--337. Springer, 2009.

\bibitem{LV1995}
N.~A. Lynch and F.~W. Vaandrager.
\newblock Forward and backward simulations: {I}. {U}ntimed systems.
\newblock {\em Inf. Comput.}, 121(2):214--233, 1995.

\bibitem{MNS+2005}
M.~Moir, D.~Nussbaum, O.~Shalev, and N.~Shavit.
\newblock Using elimination to implement scalable and lock-free {FIFO} queues.
\newblock In {\em SPAA 2005}, pages 253--262. ACM, 2005.

\bibitem{SWD2012}
G.~Schellhorn, H.~Wehrheim, and J.~Derrick.
\newblock How to prove algorithms linearisable.
\newblock In P.~Madhusudan and S.~A. Seshia, editors, {\em CAV 2012}, volume
  7358 of {\em LNCS}, pages 243--259. Springer, 2012.

\bibitem{Vaf2009}
V.~Vafeiadis.
\newblock Shape-value abstraction for verifying linearizability.
\newblock In N.~D. Jones and M.~M{\"u}ller-Olm, editors, {\em VMCAI 2009},
  volume 5403 of {\em LNCS}, pages 335--348. Springer, 2009.

\bibitem{Vaf2010}
V.~Vafeiadis.
\newblock Automatically proving linearizability.
\newblock In T.~Touili, B.~Cook, and P.~Jackson, editors, {\em CAV 2010},
  volume 6174 of {\em LNCS}, pages 450--464. Springer, 2010.

\bibitem{Vaf2010a}
V.~Vafeiadis.
\newblock {RGSep} action inference.
\newblock In G.~Barthe and M.~V. Hermenegildo, editors, {\em VMCAI 2010},
  volume 5944 of {\em LNCS}, pages 345--361. Springer, 2010.

\bibitem{VP2007}
V.~Vafeiadis and M.~J. Parkinson.
\newblock A marriage of rely/guarantee and separation logic.
\newblock In {\em {CONCUR} 2007}, volume 4703 of {\em LNCS}, pages 256--271.
  Springer, 2007.

\bibitem{WP1986}
P.~Wolper.
\newblock Expressing interesting properties of programs in propositional
  temporal logic.
\newblock In {\em POPL 1986}, pages 184--193. ACM, 1986.

\end{thebibliography}

\end{document}